\documentclass[a4paper,12pt]{article}
\pdfoutput=1 
\usepackage{geometry}
 \geometry{
 a4paper,
 total={160mm,220mm},
 left=25mm,
 top=40mm,
 }
\usepackage[margin=10mm]{caption}

\usepackage[utf8]{inputenc}
\usepackage[english]{babel}
\usepackage{amssymb, amsmath}
\usepackage{amsthm}
\usepackage{amsfonts,bbold}
\usepackage{upgreek}
\usepackage{mathrsfs}
\usepackage{setspace}
\usepackage{graphicx}
\usepackage{subcaption}
\usepackage{slashed}
\usepackage[sorting=none,maxbibnames=99, backend=bibtex]{biblatex}
\bibliography{Bibliography}
\usepackage{array, booktabs}
\usepackage{lmodern}
\usepackage{tikz}
\usepackage{microtype}
\usepackage{hyperref}
\usepackage{authblk}
\usepackage{appendix}


\newtheorem{theorem}{Theorem}
\newtheorem{prop}{Proposition}
\newtheorem{lemma}{Lemma}[section]
\newtheorem{corollary}{Corollary}[lemma]
\newtheorem{corollarythm}{Corollary}[theorem]

\theoremstyle{definition}
\newtheorem{remark}{Remark}[section]

\makeatletter
\renewenvironment{proof}[1][\proofname]{%
   \par\pushQED{\qed}\normalfont%
   \topsep6\p@\@plus6\p@\relax
   \trivlist\item[\hskip\labelsep\bfseries#1\@addpunct{.}]%
   \ignorespaces
}{%
   \popQED\endtrivlist\@endpefalse
}

\newcommand{\CF}{{\mathcal F}}

\newcommand{\beq}{\begin{equation}}
\newcommand{\eeq}{\end{equation}}
\newcommand{\eeeem}{\end{multline}}
\newcommand{\bem}{\begin{multline}}
\newcommand{\bqa} {\begin{eqnarray}}
\newcommand{\eqa} {\end{eqnarray}}
\newcommand{\eps}{\varepsilon}

\newcommand{\bmul}{\begin{multline}}
\newcommand{\emul}{\end{multline}}
\DeclareMathOperator{\End}{End}

\DeclareMathOperator{\Hom}{Hom}

\DeclareMathOperator{\Id}{Id}

\DeclareMathOperator{\Ad}{Ad}
\DeclareMathOperator{\ad}{ad}


\def \ra {\rightarrow}

\newcommand{\CA}{{ \mathcal A}}
\newcommand{\CB}{{\mathcal B}}
\newcommand{\CalC}{{\mathcal C}}
\newcommand{\CG}{{\mathcal G}}

\newcommand{\CH}{{\mathcal H}}

\newcommand{\CU}{{\mathcal U}}
\newcommand{\CV}{{\mathcal V}}

\newcommand{\CW}{{\mathcal W}}
\newcommand{\CM}{{\mathcal M}}

\newcommand{\ZZ}{{\mathbb Z}}
\newcommand{\RR}{{\mathbb R}}
\newcommand{\CC}{{\mathbb C}}
\newcommand{\NN}{{\mathbb N}}

\newcommand{\SA}{{\mathscr A}}

\newcommand{\SAl}{{{\mathscr A}_\ell}}
\newcommand{\SAal}{{{\mathscr A}_{a\ell}}}
\newcommand{\SV}{{\mathscr V}}
\newcommand{\SW}{{\mathscr W}}


\newcommand{\OL}{{O(L^{-\infty})}}
\newcommand{\Or}{{O(r^{-\infty})}}

\newcommand{\On}{{O(n^{-\infty})}}


\newcommand{\Ups}{\Upsilon}

\newcommand{\tr}{\uptau}

\def \l {\left(}
\def \r {\right)}

\def \lal {\langle}
\def \ral {\rangle}


\title{A classification of invertible phases of bosonic quantum lattice systems in one dimension}

\author{Anton Kapustin, Nikita Sopenko, Bowen Yang \smallskip\\ 
{\it California Institute of Technology, Pasadena, CA 91125, USA}}

\begin{document}

\maketitle

\abstract{We study invertible states of 1d bosonic quantum lattice systems. We show that every invertible 1d state is in a trivial phase: after tensoring with some unentangled ancillas it can be disentangled by a fuzzy analog of a finite-depth quantum circuit. If an invertible state has symmetries, it may be impossible to disentangle it in a way that preserves the symmetries, even after adding unentagled ancillas. We show that in the case of a finite unitary symmetry $G$ the only obstruction is an index valued in degree-2 cohomology of $G$. We show that two invertible $G$-invariant states are in the same phase if and only if their indices coincide.}

\section{Introduction}

Gapped phases of lattice systems in 1d are well understood by now \cite{FidkowskiKitaev,chen2011classification}. Their classification depends on whether one considers bosonic or fermionic systems, as well as whether other symmetries are present. For example, without imposing any symmetry there is only one bosonic phase (the trivial one). Fermionic phases without symmetries beyond the fermion parity $\ZZ_2^F$  are classified by $\ZZ_2$, where the non-trivial element can be realized by the Kitaev chain. Bosonic phases with a unitary on-site symmetry $G$ are classified by the abelian group $H^2(G,U(1))$, while fermionic phases with a symmetry $G\times\ZZ_2^F$ are classified by $\ZZ_2\times {\mathcal F}$ where ${\mathcal F}$ is an extension of $H^1(G,\ZZ_2)$ by $H^2(G,U(1))$. An element of an abelian group describing the phase will be called its index. Most of these classification results have been obtained by assuming that every phase of a 1d lattice system can be described by an injective Matrix Product State (MPS). 

Although ground states of generic gapped Hamiltonians are not MPS, for  finite 1d systems one can find an MPS approximation of any such state. There is an efficient numerical procedure (DMRG) for doing this. Thus from a practical standpoint the situation in 1d is very satisfactory: given a gapped Hamiltonian, there is an algorithmic procedure for identifying its phase. But from a theoretical standpoint the MPS approach leaves much to be desired. For example, it is not obvious that the index determined through a highly non-unique MPS approximation depends only on the original state. Neither is it obvious that it is an invariant of the phase (i.e. that it depends only on an appropriately defined equivalence class of states).

Recently, Y. Ogata and collaborators \cite{ogata2019classification,Ogata2,bourne2020classification} (see also \cite{Moon}) developed an approach to the classification of phases of 1d systems which does not rely on using an injective MPS. Instead they work with arbitrary 1d states satisfying the split property. The split property is the statement that the weak closure of the algebra of observables on any half-line is a Type I von Neumann algebra.  In the bosonic case, this is equivalent to saying that the Hilbert space of the system can be written as a tensor product of Hilbert spaces $\CH_-\otimes\CH_+$, so that  observables localized on the positive (resp. negative) half-line act non-trivially only on $\CH_+$ (resp. $\CH_-$). The split property, while seemingly obvious, typically fails for gapless systems in infinite volume, because von Neumann algebras more exotic than algebras of bounded operators in a Hilbert space show up. Nevertheless, it was shown by T. Matsui \cite{Matsui} that states of 1d lattice systems satisfying the area law (and in particular all gapped ground states of local Hamiltonians \cite{hastings2007area}) satisfy the split property. It is possible in this approach to define an index of a state and show that it is invariant with respect to a natural notion of equivalence (automorphic equivalence). This property ensures that the index is unchanged under continuous deformations of the Hamiltonian which do not close the gap \cite{bachmann2012automorphic,moon2020automorphic}. 

In this paper we use a variant of this approach to classify bosonic phases of matter with a certain additional property ("invertibility", see the next paragraph). Our definition of a quantum phase of matter is along the lines of Refs. \cite{hastings2005quasiadiabatic,QImeetsQM} and is based on the notion that a quantum phase of matter is a pattern of entanglement in a many-body wave-function, a point of view expounded in the monograph Ref.~\cite{QImeetsQM}. In Ref. \cite{QImeetsQM} two states were said to be in the same phase if, after stacking with some unentangled state of an ancillary system, they could be connected by a finite-depth quantum circuit. For the case of systems with a unitary symmetry, one additionally requires the quantum circuit to preserve the symmetry. Our definition is the same, except we replace quantum circuits with their fuzzy analogs, locally-generated automorphisms (LGAs) \cite{ThoulessHall}. This notion of a phase does not make reference to Hamiltonians  and applies to arbitrary pure states of lattice systems. 

To focus on ground states of gapped Hamiltonians, Refs. \cite{ogata2019classification,Ogata2,bourne2020classification} impose the split property. In this paper we focus on states which are  ``invertible'' in the sense of A. Kitaev \cite{Kitaevlecture}. Loosely speaking, an invertible state is a state which lacks long-range order. More precisely, a system is in an invertible state if, when combined with some ancillary system, it can be completely disentangled by applying an LGA. The ancillary degrees of freedom may be entangled between themselves but not with the original system. We show that for states of 1d systems  invertibility implies the split property, but unlike the latter, it has a clear physical meaning and can be generalized to higher dimensions. 
Our definition of a quantum phase of matter applies both to bosonic and fermionic systems, but in this paper we focus on bosonic systems, since the fermionic case is more technically involved. 

Our main results are as follows. If no symmetry is imposed, then we show that every invertible bosonic 1d state is in the trivial phase. In the case when a finite unitary on-site symmetry $G$ is present, we define an index valued in an abelian group for arbitrary $G$-invariant invertible 1d systems and prove that it is a complete invariant of such phases. That is, two $G$-invariant invertible 1d states are in the same phase if and only if their indices coincide. Our  definition of the index is equivalent to that in \cite{ogata2019classification,Ogata2} but  is formulated in terms of properties of domain walls for symmetries. We also show that a non-trivial index is an obstruction for a system to have an edge which preserves both the symmetry $G$ and invertibility.

The organization of the paper is as follows. In Section 2 we formulate our definition of phases and invertible phases as suitable equivalence classes of pure states of lattice systems. In Section 3 we define an index for invertible 1d states with unitary symmetries and show that entangled pair states realize all possible values of the index. A reader who is mostly interested in the properties of systems without symmetries can skip this section and proceed to Section 4. In Sections 4.1 and 4.2 we study invertible states of 1d systems without symmetries. We show that any such state is in a trivial phase. In Section 4.3 we classify phases of invertible 1d states with a finite unitary symmetry. In Appendix A we show that $G$-invariant pure factorized states of 1d systems are all in the trivial phase according to our definition. In Appendix B we show that the index of a state can be determined provided one has access to a sufficiently large but finite piece of the system. This implies that a non-trivial index is an obstruction for a system to have an edge which preserves both the symmetry $G$ and invertibility.

\section{States and phases of 1d lattice systems}

A 1d system is defined by its algebra of observables $\SA$. In the bosonic case, this is a $C^*$-algebra defined as a completion of the $*$-algebra of the form
\beq\label{dirlim}
\SAl=\underset{M}\varinjlim\, \underset{j\in [-M,M]}\bigotimes \SA_{j},
\eeq
where $\SA_{j}=\End(\SV_j)$ for some finite-dimensional Hilbert space  $\SV_j$. The numbers $d_j={\rm dim} \SV_j$ are not assumed to be bounded, but we assume that $\log d_j$ grows at most polynomially with $|j|$. Elements of the algebra $\SA$ are called (quasi-local) observables. For any finite subset $\Gamma\subset\ZZ$ we define a sub-algebra $\SA_\Gamma=\otimes_{j\in\Gamma} \SA_{j}$. Its elements are called local observables localized on $\Gamma$. The union of all $\SA_\Gamma$ is the algebra of local observables $\SA_\ell$. 

There is a distinguished dense $*$-sub-algebra $\SAal$ of $\SA$ which we  call the algebra of almost local observables. $\CA\in\SA$ is an almost local observable if its commutators with observables in $\SA_{j}$ decay faster than any power of $|j|$. More precisely, there exists $k\in \ZZ$ and a monotonically-decreasing positive (MDP) function $h:\RR \ra \RR$ which decays faster than any power such that for any $\CB \in\SA_{j}$ one has $\|[\CA,\CB]\|\leq \|\CA\| \cdot \|\CB\| h(|j-k|).$ We will say that $\CA$ is $h$-localized on $k$.  

Almost local observables are the building blocks for Hamiltonians with good locality properties. Such a Hamiltonian is a formal linear combination
\beq
F=\sum_{j\in\ZZ} F_{j},
\eeq
where $F_{j}$ is a self-adjoint almost local observable which is $h$ localized on $j$ (with the same function $h$ for all $j$), and such that the sequence $\|F_{j}\|$, $j\in\ZZ$, is bounded. One may call a Hamiltonian of this kind an almost local Hamiltonian. Note that $F$ is not a well-defined element of $\SA$ unless the sum is convergent. To emphasize this, we prefer to call such a formal linear combination a 0-chain rather than a Hamiltonian.  Note also that $\ad_F(\CA)=[F,\CA]=\sum_{j} [F_{j},\CA]$ is a well-defined almost local observable for any $\CA\in\SAal$. It is easy to see that $\ad_F$ is an derivation of $\SAal$. 

The main use of 0-chains is to define a distinguished class of automorphisms of $\SA$ which preserve the sub-algebra $\SAal$. Consider a self-adjoint 0-chain $F(s)$ which depends continuously\footnote{We say that a 0-chain $F(s)$ depends continuously on $s$ if $F_{j}(s)$ depends continuously on $s$ for all $j\in\ZZ$.} on a parameter $s\in [0,1]$. Then we define a one-parameter family of automorphisms $\alpha_F(s)$, $s\in [0,1]$, as a solution of the differential equation
\beq
\frac{d}{ds} \alpha_F(s)(\CA)=\alpha_F(s)( i [F(s),\CA])
\eeq
with the initial condition $\alpha_F(0)=\Id$. One can show that this differential equation has a unique solution for any 0-chain $F(s)$ \cite{bratteli2012operator2}. It follows from Lieb-Robinson bounds that $\alpha_F(s)$ maps $\SAal$ to itself (see Lemma A.2 of \cite{ThoulessHall}). We will use a short-hand $\alpha_F(1)=\alpha_F$. 

Automorphisms of the form $\alpha_F$ for some $F(s)$ will be called locally-generated automorphisms (or LGAs). They are ``fuzzy" analogs of finite-depth unitary quantum circuits. Note that locally-generated automorphisms form a group. Indeed, the composition of $\alpha_F$ and $\alpha_G$ can be  generated by the 0-chain
\beq \label{eq:LGAcomposition}
G(s)+\alpha_G(s)^{-1}(F(s)).
\eeq
The inverse of $\alpha_F$ is an automorphism generated by 
\beq
-\alpha_F(s) (F(s)).
\eeq

A state on $\SA$ is a positive linear function $\psi:\SA\ra {\mathbb C}$ such that $\psi(\Id)=1$. Given a state $\psi$ and a locally-generated automorphism $\alpha_F$, we can define another state by $\alpha_F(\psi) := \psi \circ \alpha_F(\CA) = \psi(\alpha_F(\CA))$. We will say that $\alpha_F(\psi)$ is related by an LGA to $\psi$, or LGA-equivalent to $\psi$. We denote by $\Phi(\SA)$ the set of LGA-equivalence classes of pure states on $\SA$.

The notion of LGA-equivalence makes sense for arbitrary pure states of lattice systems. If the states in question are unique ground states of gapped almost local Hamiltonians, then they are LGA-equivalent if and only if the corresponding Hamiltonians can be continuously deformed into each other without closing the gap \cite{bachmann2012automorphic,moon2020automorphic}. It is also easy to see that if a pure state $\psi$ is a unique ground state of a gapped almost local Hamiltonian, then so is every state which is LGA-equivalent to $\psi$. Thus for gapped ground states LGA-equivalence  is the same as homotopy equivalence. More generally, 
it has been proposed to define a zero-temperature phase as an equivalence class of a (not necessarily gapped) pure state under the action of finite-depth quantum circuits \cite{QImeetsQM}. One obvious defect of such a definition is that quantum circuits cannot change the range of correlations by more than a finite amount. Thus according to this definition  ideal atomic insulators which have a finite correlation range are not in the same phase as any state with exponentially decaying correlations. Replacing finite-depth quantum circuits with LGAs fixes this defect. 

A state $\psi$ is called factorized if $\psi(\CA \CB)=\psi(\CA)\psi(\CB)$ whenever $\CA$ and $\CB$ are local observables supported on two different sites. It is easy to see that all factorized pure states of $\SA$ are LGA-equivalent.\footnote{It is important here that we are dealing with bosonic systems. The statement is not true in the fermionic case.} Thus factorized pure states on $\SA$ define a distinguished element $\tr(\SA)\in\Phi(\SA)$. Following \cite{QImeetsQM}, we will say that a state on $\SA$ is Short-Range Entangled (SRE) if it is LGA-equivalent to a pure factorized state on $\SA$.


To compare states on different 1d lattice systems, we define the notion of stable equivalence following A. Kitaev \cite{Kitaevlecture}. We say that a pure state $\psi_1$ on a system $\SA_1$ is stably equivalent to a pure state $\psi_2$ on a system $\SA_2$ if there exist systems $\SA_1'$ and $\SA_2'$ and factorized pure states $\psi_1'$ and $\psi_2'$ on them such that the pairs $(\SA_1\otimes\SA_1',\psi_1\otimes\psi_1')$ and
$(\SA_2\otimes\SA_2',\psi_2\otimes\psi_2')$ are LGA-equivalent. 

A 1d phase is defined to be a stable equivalence class of a pure state $\psi$ on a 1d system $\SA$.  We will denote the set of (1d) phases by $\Phi$. Any two factorized pure states on any two 1d systems are stably equivalent. Thus it makes sense to say that the stable equivalence class of factorized pure states is the trivial phase $\tr\in\Phi$. It is also easy to see that the tensor product of systems and their states descends to a commutative associative binary operation $\otimes$ on $\Phi$, and that $\tr$ is the  neutral element with respect to this operation. In other words, $(\Phi,\otimes,\tr)$ is a commutative monoid.  

Next we define what we mean by an invertible phase. Following A. Kitaev \cite{Kitaevlecture}, we will say that a pure state $\psi$ on $\SA$ is in an invertible phase if there is a 1d system $\SA'$ and a pure state $\psi'$ on it such that $\psi\otimes\psi'$ is in a trivial phase. In other words, $x\in\Phi$ is an invertible phase if it has an inverse. Therefore invertible phases form an abelian group which we denote $\Phi^*.$

Note that we defined a phase without any reference to a Hamiltonian. In general, given a pure state, it is not clear whether it is a ground state of any almost-local Hamiltonian. 

It will be shown below that all invertible 1d systems are in a trivial phase\footnote{This is not true for fermionic systems, the Kitaev chain is a counter-example.}, i.e. $\Phi^*$ consists of a single element $\tr$. To get a richer problem, we will study systems and phases with on-site symmetries. Recall that we assumed that each $\SA_j$ is a matrix algebra, i.e. it has the form $\SA_j=\End(\SV_j)$ for some finite-dimensional vector space $\SV_j$ (the ``on-site Hilbert space"). One says that a group $G$ acts on a system $\SA$ by unitary on-site symmetries if one is given a sequence of homomorphisms $R_j:G\ra U(\SV_j)$,
$j\in\ZZ$, where $U(\SV_j)$ is the unitary group of $\SV_j$. Since each $\SV_j$ is finite-dimensional, it decomposes as a sum of a finite number of irreducible representations of $G$. 

Homomorphisms $R_j$ give rise to a $G$-action on $\SA$, $\SAl$ and $\SAal$: $(g, \CA)\mapsto \Ad_{R(g)}(\CA)=R(g)\CA R(g)^{-1}$, where
\beq
R(g)=\prod_{j\in\ZZ} R_{j}(g),\quad g\in G.
\eeq
The latter is a formal product of unitary local observables $R_{j}(g)\in \SA_j$ such that the  automorphism $\Ad_{R(g)}$ is well-defined. A state $\psi$ on a system $\SA$ with an  on-site action of $G$ is said to be $G$-invariant if it is invariant under $\Ad_{R(g)}$ for all $g\in G$: $\psi \circ \Ad_{R(g)}=\psi$ for all $g\in G$. 

When defining LGA-equivalence of $G$-invariant states on $\SA$, one needs to restrict to those LGAs which are generated by $G$-invariant self-adjoint  0-chains, i.e. self-adjoint 0-chains $F(s)=\sum_{j} F_{j}(s)$ such that all observables $F_{j}(s)$ are $G$-invariant. As a result, the automorphism it generates is called $G$-equivariant as it commutes with the action of the group. We denote by $\Phi_G(\SA)$ the set of $G$-equivariant LGA-equivalence classes of $G$-invariant pure states of $\SA$ (with some particular $G$-action which is not indicated explicitly). 

There is a distinguished element $\tr_G\in \Phi_G(\SA)$, but its definition is not completely obvious. The most general $G$-invariant factorized pure state on $\SA$ is uniquely defined by the condition that when restricted to $\SA_j$ it takes the form 
\beq\label{factorizedpure}
\psi(\CA_j)=\langle v_j|\CA_j|v_j\rangle, \quad \CA_j\in \SA_j,
\eeq
where $v_j\in\SV_j$ is a unit vector which transforms in a one-dimensional representation of $G$. Not every $\SA$ admits such states. If such states on $\SA$ exist, they need not all correspond to the same element in $\Phi_G(\SA)$.  For example, consider a system where $\SV_j$ is the trivial one-dimensional representation of $G=\ZZ_2$ for all $j$ except $j=0$, while $\SV_0$ is a sum of the trivial and the non-trivial one-dimensional representations with normalized basis vectors $e_+$ and $e_-$. There are two different $\ZZ_2$-invariant factorized pure states corresponding to $v_0=e_+$ and $v_0=e_-$, and they are clearly not related by a $G$-equivariant LGA. 
We will call a state of the form (\ref{factorizedpure}) where all $v_j$ are $G$-invariant a special $G$-invariant factorized pure state. 
We will define $\tr_G\in\Phi_G(\SA)$ to be the $G$-equivariant LGA-equivalence class of a special $G$-invariant factorized pure state. This definition makes sense because for any particular $\SA$ all special $G$-invariant factorized pure states are in the same $G$-equivariant LGA-equivalence class.

Next we define the notion of a $G$-invariant phase.
We say that a $G$-invariant pure state $\psi_1$ of a system $\SA_1$ is $G$-stably equivalent to a pure state $\psi_2$ on a system $\SA_2$ if there exist systems $\SA_1'$ and $\SA_2'$ and special $G$-invariant factorized pure states $\psi_1'$ and $\psi_2'$ on them such that the pairs $(\SA_1\otimes\SA_1',\psi_1\otimes\psi_1')$ and
$(\SA_2\otimes\SA_2',\psi_2\otimes\psi_2')$ are related by a $G$-equivariant LGA. 
A $G$-invariant phase is defined to be a $G$-stable equivalence class of a $G$-invariant pure state $\psi$ of a 1d system $\SA$.  We will denote the set of $G$-invariant phases by $\Phi_G$. $\Phi_G$ is a commutative monoid, with the trivial $G$-invariant  phase as the neutral element $\tr_G$. Invertible elements in this monoid form an abelian group which we denote $\Phi^*_G$. Elements of this group will be called $G$-invertible phases. 

It is shown in Appendix A that any $G$-invariant factorized pure state (that is, a state of the form (\ref{factorizedpure}) where all $v_j \in \SV_j$ transform in one-dimensional representations of $G$) belongs to the trivial $G$-invariant phase. This provides a sanity check on our definition of a phase: from a physical viewpoint, a nontrivial state must exhibit some entanglement and cannot be factorized.

In this paper we will be studying $G$-invariant phases which are also invertible, that is, $G$-invariant phases which are mapped to $\Phi^*$ by the forgetful homomorphism  $\Phi_G\ra\Phi$. It is easy to see that every $G$-invertible phase is invertible. Less obviously, we will show that every $G$-invariant phase which maps to $\Phi^*$ is $G$-invertible. We will define an index taking values in an abelian group which classifies such phases. This abelian group is nothing but  $\Phi^*_G\subset \Phi_G$.

\section{An index for invertible 1d systems}

\subsection{Domain wall states}

Consider a $G$-invariant pure state $\psi$ of $\SA$. For any $g\in G$ we let
\beq
\rho_{> j}^g = \prod_{k > j} \Ad_{R_{k}(g)}.
\eeq
which is a restriction of a locally generated automorphism $\Ad_{R(g)}$ to a half-line $j > 0$. Clearly, $\rho_{> j}^g \circ \rho_{> j}^h=\rho_{> j}^{gh}$. We define the domain wall state $\psi^g_{>j}$ by
\beq
\psi^g_{>j}(\CA )=\psi(\rho_{> j}^g(\CA )).
\eeq

From now on we suppose that $\psi$ defines a $G$-invariant invertible phase. Thus there exists a system $\SA'$, a factorized pure state $\psi'$ on $\SA'$, and a locally-generated automorphism $\alpha$ of $\SA \otimes \SA'$ that transforms the state $\Psi = \psi \otimes \psi'$ to a factorized pure state $\Omega$ on $\SA \otimes \SA'$, i.e. $\Psi = \Omega \circ \alpha$. We can extend the action of $\rho^g$ on $\SA$ to an action on $\SA \otimes \SA'$ by making it trivial on $\SA'$. Then the automorphism
\beq
\tilde{\rho}^g = \prod_{k \in \ZZ} \Ad_{\alpha(R_{k}(g))}.
\eeq
preserves the state $\Omega$, and its restriction to a half-line $k>j$ is given by $\tilde{\rho}^g_{>j} = \alpha \circ \rho^g_{>j} \circ \alpha^{-1}$.

We will need the following result from Appendix C of \cite{ThoulessHall} whose proof we reproduce here for convenience.
\begin{prop} \label{prop:split}
Let $\Omega$ be a factorized pure state on $\SA$. Let $\alpha$ be an automorphism of $\SA$ which asymptotically preserves $\Omega$, in the sense that there exists a monotonically decreasing positive function $h(r)=\Or$ on $[0,\infty)$ such that for any $\CA $ localized on $[r,\infty)$ or $(-\infty,-r]$ one has $|\Omega(\alpha(\CA ))-\Omega(\CA )|\leq h(r) \|\CA \|$. Then the state $\Omega^\alpha:\CA \mapsto\Omega(\alpha(\CA ))$ is unitarily equivalent to $\Omega$.
\end{prop}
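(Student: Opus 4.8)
The plan is to translate the statement into representation theory and then prove it by a half-line (split) argument. Since $\Omega$ is a factorized pure state, its GNS representation $(\pi_\Omega,\CH)$ is irreducible, $\pi_\Omega(\SA)''=B(\CH)$, and $\CH=\bigotimes_j(\SV_j,v_j)$ is the incomplete tensor product along the product vector $v=\otimes_j v_j$. Because $\alpha$ is an automorphism, $\omega:=\Omega^\alpha=\Omega\circ\alpha$ is again pure. Two pure states are unitarily equivalent exactly when their GNS representations are quasi-equivalent, and since $\pi_\Omega$ is irreducible this holds iff $\omega$ lies in the folium of $\Omega$, i.e. iff $\omega$ is a normal (vector) state of $B(\CH)$. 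So the entire problem reduces to showing that $\omega$ is $\pi_\Omega$-normal; the implementing unitary is then automatic from purity.

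Next I would exploit the product structure across a cut. Fixing the cut between sites $0$ and $1$, write $\SA=\SA_L\otimes\SA_R$ with $\SA_L=\SA_{(-\infty,0]}$ and $\SA_R=\SA_{[1,\infty)}$; then $\Omega=\Omega_L\otimes\Omega_R$, $\CH=\CH_L\otimes\CH_R$, and $\Omega_L,\Omega_R$ are themselves pure, with type I factor GNS representations $B(\CH_L)$ and $B(\CH_R)$. The hypothesis enters as follows: for $\CA$ localized on a far tail $\SA_{[r,\infty)}$ one has $|\omega(\CA)-\Omega(\CA)|=|\Omega(\alpha(\CA))-\Omega(\CA)|\le h(r)\|\CA\|$, so the restrictions of $\omega$ and $\Omega$ to $\SA_{[r,\infty)}$ are at norm distance $\le h(r)<2$ once $r$ is large, since $h(r)=\Or$. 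I would then invoke the standard fact that if two states lie at norm distance $<2$ and one is pure, their GNS representations are not disjoint, so the pure one is quasi-contained in the other; applied to the (pure) tail restriction of $\Omega$ this makes $\omega|_{\SA_{[r,\infty)}}$ normal with respect to the type I factor $B(\CH_{[r,\infty)})$. The goal of this step is to upgrade such tail information into the statement that $\omega|_{\SA_R}$ is quasi-equivalent to $\Omega_R$, equivalently that $M_R:=\pi_\omega(\SA_R)''$ is a type I factor; the symmetric argument on the left gives that $M_L:=\pi_\omega(\SA_L)''$ is a type I factor.

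The final, soft step is to glue the two sides. In the irreducible representation $\pi_\omega$ (irreducible because $\omega$ is pure) the algebras $M_L$ and $M_R$ commute, are type I factors, and generate $B(\CH_\omega)$. Using the structure theorem for type I factors, identify $\CH_\omega=\CH_L'\otimes\CH_R'$ so that $M_L=B(\CH_L')\otimes 1$ and $M_L'=1\otimes B(\CH_R')$; then $M_R\subseteq 1\otimes B(\CH_R')$, and $M_L\vee M_R=\pi_\omega(\SA)''=B(\CH_\omega)$ forces $M_R=1\otimes B(\CH_R')$. Hence $\CH_\omega$ splits as a tensor product realizing $M_L,M_R$ as the two factors, and, matching the half-line quasi-equivalences of the previous step, $\pi_\omega$ is quasi-equivalent to $\pi_{\Omega_L}\otimes\pi_{\Omega_R}=\pi_\Omega$. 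Thus $\omega$ is $\pi_\Omega$-normal, which is precisely what was required, and being pure it is a vector state of $\pi_\Omega$, giving the unitary equivalence of $\Omega^\alpha$ with $\Omega$.

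The hard part will be the middle step: converting the one-sided decay estimate into genuine quasi-equivalence of a half-line restriction. Two points are delicate: (i) closeness of $\omega$ to $\Omega$ only on the far tail $\SA_{[r,\infty)}$, rather than on all of $\SA_R$, must be promoted to normality of the full half-line restriction, which needs the pure-state non-disjointness lemma together with factoriality and the absorption of the finite-dimensional block $\SA_{[1,r-1]}$; and (ii) the abstract normality of the marginal $\omega|_{\SA_R}$ must be translated into factoriality of the weak closure $\pi_\omega(\SA_R)''$ sitting inside the ambient representation $\pi_\omega$, since a priori the restricted representation may carry extra inequivalent pieces. This is exactly the split-property analysis of Matsui type, and it is where the rapid-decay assumption $h(r)=\Or$ is genuinely used; by contrast the gluing step is pure von Neumann algebra structure theory and the initial reduction is formal.
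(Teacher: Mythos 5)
Your overall strategy---reduce unitary equivalence of pure states to quasi-equivalence, control the two half-line restrictions using the decay hypothesis, then glue via type~I factor structure---is the same as the paper's. But your middle step fails as written. The ``standard fact'' you invoke (norm distance $<2$ from a pure state implies non-disjointness, hence the pure representation is quasi-contained in the other) gives quasi-containment in the \emph{wrong direction}: it puts $\pi_{\Omega|_{[r,\infty)}}$ inside $\pi_{\omega|_{[r,\infty)}}$, which is not at all the statement that $\omega|_{[r,\infty)}$ is normal with respect to $B(\CH_{[r,\infty)})$. Indeed, normality does not follow from distance $<2$: take $\omega_2=(1-\eps)\,\Omega|_{[r,\infty)}+\eps\,\sigma$ with $\sigma$ pure and disjoint from $\Omega|_{[r,\infty)}$ (e.g.\ all spins down versus all spins up); then $\|\omega_2-\Omega|_{[r,\infty)}\|\leq 2\eps<2$, yet $\omega_2$ is not normal in the GNS representation of $\Omega|_{[r,\infty)}$. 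What rules out such behavior is the full strength of the hypothesis, namely $h(r)\to 0$, fed into Powers' criterion, i.e.\ Corollary 2.6.11 of \cite{bratteli2012operator}: two \emph{factor} states whose difference vanishes in norm on the tail are quasi-equivalent. This is exactly what the paper does, applying it to $\Omega_\pm$ and $\Omega^\alpha_\pm$; the needed factoriality of $\Omega^\alpha_\pm$ is automatic because they are restrictions of a pure state to one factor of a tensor-product bipartition.

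The second hole is at the crux you yourself flag in point (ii): your gluing step takes as input that $M_L=\pi_\omega(\SA_L)''$ and $M_R=\pi_\omega(\SA_R)''$ are type~I factors, which is precisely the split property of $\Omega^\alpha$, and you defer this to unspecified ``Matsui-type analysis.'' The paper supplies this step by a quantitative argument: the on-site density matrices of $\Omega$ and $\Omega^\alpha$ satisfy $\|\rho_j-\rho_j^\alpha\|_1\leq h(|j|)$, so Fannes' inequality together with the assumed polynomial growth of $\log d_j$ gives $\sum_j S_j^\alpha<\infty$, hence a uniform bound on the entanglement entropy of finite regions, and then Proposition 2.2 of \cite{matsui2001split} and Theorem 1.5 of \cite{Matsui} yield that $\Omega^\alpha$ is quasi-equivalent to $\Omega^\alpha_+\otimes\Omega^\alpha_-$, hence to $\Omega_+\otimes\Omega_-=\Omega$. (Your step (ii) can in fact be closed softly: purity of $\omega$ makes $M_R$ a factor, and compression to the $\SA_R$-cyclic subspace is then an injective normal homomorphism giving $M_R\cong\pi_{\omega|_R}(\SA_R)''$, so quasi-equivalence of the marginal with the pure state $\Omega_R$ forces $M_R$ to be type~I. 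But neither this argument nor the paper's entropy argument appears in your proposal, so as written the proof is incomplete exactly where the hypothesis has to do real work.)
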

\begin{proof}
Let $\SA_+$ and $\SA_-$ be the $C^*$-subalgebras of $\SA$ corresponding to $j> 0$ and $j\leq 0$, respectively. Let $\Omega_\pm$ be the restriction of $\Omega$ to $\SA_\pm$, and $\Omega^\alpha_\pm$ be the restriction of $\Omega^\alpha$ to $\SA_\pm$. By Cor. 2.6.11 in \cite{bratteli2012operator}, the state $\Omega^\alpha_+$ is quasi-equivalent to $\Omega_+$ and the state $\Omega^\alpha_-$ is quasi-equivalent to $\Omega_-$. Therefore $\Omega=\Omega_+\otimes\Omega_-$ is quasi-equivalent to $\Omega^\alpha_+\otimes\Omega^\alpha_-$. Since both $\Omega$ and $\Omega^\alpha$ are pure states, it remains to show that $\Omega^\alpha$ is quasi-equivalent to $\Omega^\alpha_+\otimes\Omega^\alpha_-$.

Let $\rho_{j},\rho^{\alpha}_{j}\in\SA_{j}$ be the density matrices for the restriction of $\Omega$ and  $\Omega^\alpha$ to $\SA_{j}$. The density matrix $\rho_{j}$ is pure, while $\rho^{\alpha}_{j}$ is mixed, in general. But since for any $\CA \in\SA_{j}$ we have $|{\rm Tr}(\rho^{\alpha}_{j}-\rho_{j})\CA |\leq h(|j|)\|\CA \|$, we have $\|\rho_{j}-\rho^{\alpha}_{j}\|_1\leq h(|j|),$ where $\|\cdot\|_1$ is the trace norm. Thus for large $|j|$ the entropy $S^{\alpha}_{j}$ of $\rho^{\alpha}_{j}$ rapidly approaches zero. Specifically, by Fannes' inequality, for all sufficiently large $|j|$ we have $S^{\alpha}_{j}\leq h(|j|) \log(d_{j}/h(|j|))$. Here $d_{j}^2=\dim\SA_{j}$. Since we assumed that $\log d_{j}$ grows at most polynomially with $|j|$, $S^\alpha=\sum_{j} S_{j}^\alpha <\infty$. Therefore the entropy of the restriction of $\Omega^\alpha$ to any finite region of $\ZZ$ is upper-bounded by $S^\alpha$. By Proposition 2.2 of \cite{matsui2001split} (where the proof of equivalence of (i) and (ii) does not use translation invariance) and Theorem 1.5 of \cite{Matsui}, the state $\Omega^\alpha$ is quasi-equivalent to $\Omega^\alpha_+\otimes\Omega^\alpha_-$.
\end{proof}
The automorphism $\rho^g_{>j}$ does not preserve $\psi$, but it asymptotically preserves it, by Lemma A.4 of \cite{ThoulessHall}. Therefore $\tilde\rho^g_{>g}$ asymptotically preserves $\Omega$. 
Then the above proposition implies that the state $\Omega^g_{>j}$ defined by $\Omega^g_{>j}(\CA )=\Omega(\tilde{\rho}^g_{>j}(\CA ))$ is unitarily equivalent to $\Omega$. Since $\Omega \circ \alpha=\Psi$ and $\rho^g_{>j}$ acts non-trivially only on $\SA$, the states $\psi$ and $\psi^g_{>j}$ are also unitarily equivalent.

Let $(\Pi, \CH, |0\rangle)$ be the GNS data of the original state $\psi$ constructed in the usual way. We can identify $\Pi(\rho^g_{>j}(\CA ))$ with the GNS representation of $\psi^g_{>j}.$ Specifically,
let $(\Pi_{>j}^g, \CH_{>j}^g, |0\rangle_{>j}^g)$ be the GNS data of $\psi_{>j}^g$. In particular, this coincides with $(\Pi, \CH, |0\rangle)$ when $g$ is the identity. $\CH_{>j}^g$ is the completion of $\SA/I^g_{>j}$ where the left ideal
 \beq
 I^g_{>j}:=\{\CA | \psi(\rho^g_{>j}(\CA ^*\CA ))=0\}.
 \eeq
 Since $\rho^g_{>j}(I_{>j}^h)=I^{hg^{-1}}_{>j}$, $\rho^g_{>j}$ induces a linear isometry of Hilbert spaces $$\iota_{>j}^g:\CH^h_{>j}\rightarrow\CH^{hg^{-1}}_{>j},$$  such that $\iota^g_{>j}\circ \iota^h_{>j}=\iota^{gh}_{>j}$ for all $g,h\in G.$ Note $\iota$ does not have a fixed source or target. It is a (categorified) group action over all Hilbert spaces $\CH_{>j}^g$. It then follows that 
 \beq
 \iota^g_{>j}\Pi^g_{>j}(\CA )(\iota^g_{>j})^{-1}=\Pi(\rho^g_{>j}(\CA )). \label{CatAction}
 \eeq
 This, together with the unitary equivalence proven above, implies that for any $g\in G$ and any $j\in\ZZ$ there exists a unitary operator $U^g_{>j}$ such that
 
 \beq \label{Unitary}
\Pi(\rho^g_{>j}(\CA )) = U^g_{>j} \Pi(\CA ) \left(U^g_{>j}\right)^{-1} . 
\eeq
Since the commutant of $\Pi(\SA)$ consists of scalars, this equation defines $U^g_{>j}$ up to a multiple of a complex number with absolute value $1$.
 
Finally, we are ready to define the index. From equation (\ref{Unitary}) and $\rho_{> j}^g \circ \rho_{> j}^h=\rho_{> j}^{gh}$, we infer that
\beq
U_{>j}^gU_{>j}^h\Pi(\CA )(U_{>j}^gU_{>j}^h)^{-1}=U_{>j}^{gh}\Pi(\CA )(U_{>j}^{gh})^{-1}.
\eeq
Since the commutant of $\Pi(\SA)$ consists of scalars, it follows that
\beq\label{2cocycle}
U_{>j}^g U^h_{>j}= \nu_{>j}(g,h) U^{gh}_{>j},
\eeq
where $\nu_{>j}(g,h)$ is a complex number with absolute value $1$. Associativity of operator product implies that $\nu_{>j}(g,h)$ is a $2$-cocycle of the group $G$. Its cohomology class is independent of the different choices of $\{U_{>j}^g: g\in G\}$. We claim this cohomology class is an  index for bosonic systems. Firstly we show that it is independent of $j$. Then we demonstrate that any $G$-equivariant locally generated automorphism preserves the index. Lastly, we verify that the index of a state in the trivial phase is trivial. 

Given $i<j$, we let
\beq
R_{(i,j]}^g=\prod_{i< k\leq j}R_k(g).
\eeq
By (\ref{Unitary}), $\Pi(R_{(i,j]}^h)=\Pi(\rho_{>j}^g(R_{(i,j]}^h))=U_{>j}^g\Pi(R_{(i,j]}^h)(U_{>j}^g)^{-1}$, i.e. $\Pi(R_{(i,j]}^h)$ and $U_{>j}^g$ commute. Also it is easy to see that 
\beq
U_{>i}^g=\mu(g)\Pi(R_{(i,j]}^g)U_{>j}^g,
\eeq
where $\mu(g)$ is a complex number with absolute value  $1$. Therefore,
\begin{align}
    &\nu_{>i}(g,h)\mu(gh)\Pi(R_{(i,j]}^{gh})U^{gh}_{>j}\\\notag
    =&\nu_{>i}(g,h)U_{>i}^{gh}=U_{>i}^gU_{>i}^h\\\notag
    =&\mu(g)\Pi(R_{(i,j]}^g)U^{g}_{>j}\mu(h)\Pi(R_{(i,j]}^h)U^{h}_{>j}\\\notag
    =&\mu(g)\mu(h)\Pi(R_{(i,j]}^{gh})U^{g}_{>j}U^h_{>j}\\\notag
    =&\mu(g)\mu(h)\nu_{>j}(g,h)\Pi(R_{(i,j]}^{gh})U^{gh}_{>j}.
\end{align}
It is important here that $R_j(g)$ is an ordinary (non-projective) representation of $G$. From now on, we fix a domain wall position $j$ as it does not affect the index. We also omit this choice from all notations, for example $\rho_{>j}^g$ becomes simply $\rho^g_{>}.$ 

Next, we prove that the index is invariant under an automorphism $\beta:=\beta(1)$ generated by a $G$-equivariant self-adjoint 0-chain $F(t)=\sum_{j} F_{j}(t)$. Note the site $j$ here is not the domain wall position which has been fixed. In general $\rho_>^g\circ \beta\ne\beta\circ \rho_>^g$, however their commutator $\beta\circ \rho_>^g\circ\beta^{-1}\circ\rho_>^{g^{-1}}=\Ad_{\CB _g}$ for an almost local observable $\CB _g$. Furthermore,
\begin{equation} \label{identity}
    \CB _{gh}=\CB _g\rho_>^g(\CB _h).
\end{equation}
Before giving the proof, we notice that these facts indeed lead us to the desired invariance of cohomology.

Define a new state 
\beq
\psi^\beta(\CA ):=\psi(\beta(\CA )).
\eeq
Repeating the argument above for state $\psi^\beta$, we get \begin{equation}
    \Pi(\beta(\rho_>^g(\CA )))=W_>^g\Pi(\beta(\CA ))(W_>^g)^{-1}.
\end{equation}
However,
\begin{align}
    &\Pi(\beta(\rho_>^g(\CA )))\\\notag
    =&\Pi(\beta\circ \rho_>^g\circ\beta^{-1}\circ\rho_>^{g^{-1}}(\rho_>^g(\beta(\CA ))))\\\notag
    =&\Pi(\CB _g)U_>^g\Pi(\beta(\CA ))(U_>^g)^{-1}\Pi(\CB _g)^{-1}.
\end{align}
Therefore, \begin{equation}
    W_>^g=\mu(g)\Pi(\CB _g)U_>^g,
\end{equation}
where $\mu(g)$ is a complex number with norm $1$. And finally,
\begin{align}
    W_>^gW_>^h=&\mu(g)\mu(h)\Pi(\CB _g)U_>^g\Pi(\CB _h)U_>^h\\\notag
    =&\mu(g)\mu(h)\Pi(\CB _g\rho_>^g(\CB _h))U_>^gU_>^h\\\notag
    =&\mu(g)\mu(h)\mu(gh)^{-1}\nu(g,h)W^{gh}.
\end{align}

To prove that such $\CB _g$ exists, notice that the automorphism $\beta(t)\circ \rho_>^g\circ\beta^{-1}(t)\circ\rho_>^{g^{-1}}$ is generated by the $0$-chain $F_g(t):=\rho_>^g\circ \beta(t)(\rho_>^{g^{-1}}(F(t))-F(t))$, which is almost local.
For any almost local self-adjoint $\CA (t)$ depending continuously on $t$, we can define an almost local unitary  observable $E_\CA (t)$  satisfying the equation
\beq
-i\frac{d}{dt}E_\CA (t)=E_\CA (t)\CA (t). \label{ordExp}
\eeq
When viewed as a $0$-chain, the automorphism generated by $\CA (t)$ is simply $\Ad_{E_\CA (t)}$. We let $E_g(t):=E_{F_g}(t)$, then it follows that $\CB _g=E_{g}(1).$ 
To establish identity (\ref{identity}), we need two properties of the observables $E_\CA (t)$:
\begin{equation}
    \alpha(E_\CA (t))=E_{\alpha(\CA )}(t),
\end{equation}
and
\begin{equation}
    E_\CA (t)E_\CB(t)=E_{\mathcal C}(t),
\end{equation}
where $\mathcal C(t)=\alpha_\CB^{-1}(t)(\CA (t))+\CB(t).$ Both are easily checked from equation (\ref{ordExp}). 

Then
\begin{align}
    E_{g}(t)\cdot\rho_>^g(E_h(t))
    =\rho_>^g(E_G(t)\cdot E_h(t))
    =\rho_>^g(E_X(t)).
\end{align}
By the first property, $G(t)=\beta(t)(\rho_>^{g^{-1}}(F(t))-F(t)).$ By the second property,
\begin{align}
    X(t)=&\alpha_{F_h}^{-1}(t)(G(t))+F_h(t)\\
    =&\rho_>^h\circ \beta(t)\circ \rho_>^{h^{-1}} \circ \beta^{-1}(t) \circ \beta(t)\big(\rho_>^{g^{-1}}(F(t))-F(t)\big)\\
    &+\rho_>^h\circ \beta(t)(\rho_>^{h^{-1}}(F(t))-F(t))\\
    =&\rho_>^h\circ \beta(t)\big(\rho_>^{(gh)^{-1}}(F(t))-F(t)\big).
\end{align}
Therefore, $\rho_>^g(X(t))=F_{gh}(t)$, which is what we need.

At last, we verify that the index of the trivial phase is trivial. By the previous result, it suffices to compute the index of a factorized state. Since $\psi(\rho_>^g(\CA ))=\psi(\CA )$ for any factorized $G$-invariant pure state $\psi$, the GNS representation $\Pi=\Pi^g$. Equations (\ref{CatAction}) and (\ref{Unitary}) imply
\beq
\iota^g_{>}\Pi(\CA )(\iota^g_{>})^{-1}=\iota^g_{>}\Pi^g_{>}(\CA )(\iota^g_{>})^{-1}=\Pi(\rho_{>}^g(\CA ))=U_>^g\Pi(\CA )(U_>^g)^{-1}.
\eeq
As $\iota^g_{>} \iota^h_{>}=\iota^{gh}_{>}$, the index is trivial.

Taken together, these results imply that the index of an invertible $G$-invariant pure state on $\SA$ depends only on its $G$-invariant  phase.

\begin{remark}\label{rem:Ug}
Let $\SA_{>j}=\otimes_{k>j}\SA_k$ and $\SA_{\leq j}=\otimes_{k\leq j}\SA_k$. We can define commuting von Neumann algebras acting in $\CH_\psi$ by letting $\CM_{>j}=\Pi_\psi(\SA_{>j})''$ and $\CM_{\leq j}=\Pi_\psi(\SA_{\leq j})''.$  These two algebras are each other commutants and generate the whole $B(\CH_\psi)$. Then the definition of $U^g_{>j}$ implies that $U^g_{>j}\in \CM_{>j}$. Similarly, replacing in (\ref{Unitary}) the automorphism  $\rho^g_{>j}$ with $\rho^g_{\leq j}=\rho^g \left(\rho^g_{>j}\right)^{-1}$, we can define a unitary  $U^g_{\leq j}\in\CM_{\leq j}.$ Then it is easy to see that the index we defined above is the same as the index defined in \cite{ogata2019classification}.\end{remark}
 
The stacking law agrees with the group structure of $H^2(G, U(1))$. This is seen by taking the domain wall unitary of the stacked system to be the tensor product of the domain wall unitary operators of the respective subsystems. According to (\ref{2cocycle}), the stacked 2-cocycle is the product of the 2-cocycles of the subsystems. 

An important property of the index is that it is locally computable: it can be evaluated approximately given the restriction of the state to any sufficiently large segment of the lattice. This is shown in Appendix B. Local computability of the index implies there can be no $G$-invariant invertible interpolation between a state with a non-trivial index and a $G$-invariant factorized pure state. Put more concisely, an invertible $G$-invariant system with a non-trivial index cannot have a $G$-invariant non-degenerate edge.

\subsection{Examples}

In this section we compute the index of some standard examples of non-trivial invertible states. 

In the bosonic case, Matrix Product States furnish examples of invertible states invariant under a (finite) symmetry group $G$. To construct such an example, we pick a projective unitary representation $Q$ of $G$ on a finite-dimensional Hilbert space $\SW$. Thus we are given unitary operators $Q(g)\in U(\SW)$, $g\in G$, satisfying
\beq
Q(g) Q(h)=\nu(g,h) Q(gh),
\eeq
where $\nu(g,h)$ is a 2-cocycle with values in $U(1)$. We take the local Hilbert space $\SV_j$ to be $\SW_j\otimes\SW_j^*$, where $\SW_j$ is isomorphic to $\SW$ for all $j\in\ZZ$. Then $G$ acts on $\SV_j$ via $R(g)=Q(g)\otimes Q(g)^*$. This is an ordinary (non-projective) action. 

To define a $G$-invariant state on $\SA$, we first specify it on $\SA_l$ and then extend by continuity. We note first that
\beq
\SAl=\otimes_{k\in\ZZ} \End(\SW_k\otimes\SW_k^*)=\otimes_{k\in\ZZ} \End(\SW^*_k\otimes\SW_{k+1}).
\eeq
It is understood here that in both infinite tensor product all but a finite number of elements are identity elements.
Thus we can get a $G$-invariant pure state on $\SAl$ by picking a $G$-invariant vector state on $\End(\SW_k^*\otimes\SW_{k+1})$ for all $k\in\ZZ$. If the representation $Q$ is irreducible, there is a unique choice of such a state: the one corresponding to the vector $\frac{1}{\sqrt d}\, 1_\SW\in \SW^*\otimes\SW$, where $d$ is the dimension of $\SW$. In the physics literature such a state on $\SAl$ is known as an entangled-pair state. Let us denote this state $\psi$. 

Note that $\psi$ is in an invertible phase. Indeed, any vector state on $\End(\SW_k^*\otimes\SW_k)$ can be mapped to a factorized vector state by a unitary transformation. The product of these unitary transformations for all $k$ gives us a locally-generated automorphism connecting $\psi$ with a factorized pure state on $\SA$. In fact, it is easy to see that the $G$-invariant phase of $\psi$ is $G$-invertible. The inverse system is obtained by replacing $\SW$ with $\SW^*$ and $Q$ with $Q^*$.

To describe the GNS Hilbert space corresponding to $\psi$, let us pick an orthonormal basis $|n\rangle$, $n=1,\ldots,d,$ in $\SW$ and denote by $|j,n\rangle$, $\langle j,n|$ the corresponding orthonormal basis vectors for $\SW_j$ and $\SW^*_j$. Let us also denote by $|1'_j\rangle\in \SW_j^*\otimes \SW_{j+1}$ the vector $\frac{1}{\sqrt d}\sum_{n=1}^d \langle j,n|\otimes |j+1,n\rangle $. Then the GNS Hilbert space is the completion of the span of vectors of the form 
\beq
\left(\bigotimes_{k<J} |1'_k\rangle \right)\otimes \langle J,n_J| \otimes | J+1,m_J\rangle \otimes \ldots \otimes \langle J', n_{J'}|\otimes |J'+1,m_{J'}\rangle \otimes \left(\bigotimes_{l>J'} |1'_l\rangle \right),
\eeq
where $J,J'$ ($J\leq J'$) are integers. 

It is easy to check that the operator $U^g_{>j}$ is given (up to a scalar multiple) by 
\beq
U^g_{>j}=\bigotimes_{k\leq j} 1_k\otimes Q_{j+1}(g)\otimes \bigotimes_{l>j} R'_l(g),
\eeq
where $Q_{j+1}(g)\in \End(\SW_{j+1})$ is given by $a\mapsto Q(g) a$, $R'_l(g)\in End\left(\SW_l^*\otimes\SW_{l+1}\right)$ is given by $a\otimes b\mapsto Q(g)^*a\otimes Q(g)b$, and $1_k\in \End(\SW_k\otimes\SW^*_{k+1})$ is the identity operator. This is a well-defined operator because $|1'_l\rangle$ is invariant under $R'_l(g)$. Now, since the operators $R'_l(g)$ define an ordinary (non-projective) representation of $G$, we see that the index of the state $\psi$ is given precisely by the 2-cocycle $\nu(g,h)$.

\section{A classification of invertible phases of bosonic 1d systems}

\subsection{Preliminaries}

In the following, for a state $\psi$ we denote by $\CH_{\psi}$ and $\Pi_{\psi}$ the corresponding GNS Hilbert space and GNS representation. For a region $A$ of the lattice, we denote by $\psi|_A$ the restriction of a state $\psi$ to $\SA_{A}$.

An LGA $\alpha$ generated by $F$ which is $f$-local can be represented by an ordered conjugation with $\overrightarrow{\prod}_{j \in \Lambda} e^{i G_j}$ for almost local observables $G_j$ which are $g$-localized for some $g(r)$ that depends on $f(r)$ only. Indeed, suppose we have an $f$-localized 0-chain. The automorphism
\beq
\alpha_{F_{(-\infty, k+1]}} \circ \l \alpha_{F_{(-\infty, k]}} \r^{-1}
\eeq
is a conjugation by an almost local unitary $e^{i G_j}$ which is $g$-localized for some $g(r)$ that depends on $f(r)$ only. 



We define a restriction $G=F|_A$ of a 0-chain $F$ to a region $A$ by 
\beq
G_j = \int \prod_{k\in \bar{A}} d V_k \Ad_{\prod_{k \in \bar{A}} V_k} (F_j),
\eeq
where the integration is over all on-site unitaries $V_k \in \SA_k$ with Haar measure. Note that $[G,\CA]=0$ for any $\CA \in \SA_{\bar{A}}$, i.e. $G=F|_A$ is localized on $A$. It is also easy to see that $F-F|_A$ is almost localized on $\bar A$.


These properties of restriction have several immediate  consequences. Let $A$ and $B$ be a left and a right half-chain, correspondingly. We can represent $F$ as 
\beq
F = F|_{A} + \CF_0 + F|_B 
\eeq
for some almost local observable $\CF_0$. By Lemma A.4 from \cite{ThoulessHall} we can represent $\alpha_F$ as a product 
\beq \label{eq:autdecomposition}
\alpha_{F} = \alpha_{F|_A} \circ \alpha_{F|_B} \circ \Ad_{\CU_0}
\eeq
for some almost local unitary observable $\CU_0$. Similarly, let $A$, $B$ and $C$ be three regions $(-\infty,j)$, $[j,k]$ and $(k,\infty)$, correspondingly, for some sites $j$ and $k$. Then Lemma A.4 from \cite{ThoulessHall} implies 
\beq \label{eq:autdecomposition2}
\alpha_F = \alpha_{F|_A} \circ \alpha_{F|_C} \circ \alpha_{F|_B} \circ \Ad_{\CU_j} \circ \Ad_{\CU_k}
\eeq
for some almost local unitaries $\CU_j$ and $\CU_k$ $g$-localized at sites $j$ and $k$, correspondingly. Here $g(r)$ does not depend on $j$ and $k$.

Here and below we will denote by $\Gamma_r(j)$ the interval $[j-r,j+r]$. If $j=0$, we use a shorthand $\Gamma_r$.
Let $f:\ZZ_+\ra\RR_+$ be an MDP function such that  $f(r)=\Or$. Let $A$ be a subset of $\ZZ$ and $j\in \ZZ$. We will say that two states $\psi_1$ and $\psi_2$ on $\SA$ are $f$-close on $A$ far from $j$ if for any observable $\CA$ localized on $A\cap \bar\Gamma_r(j)$ we have $|\psi_1(\CA) - \psi_2(\CA) | \leq \|\CA\| f(r)$. Note that since all our algebras of observables are simple, all representations of $\SA$ are faithful, and thus $\|\CA\|$ can be replaced with $\|\Pi(\CA)\|$ in any representation $\Pi$ of $\SA$.

\begin{lemma}\label{lma:almcstates}
Let $\psi$ be a pure state on $\SA$ which is $f$-close on $\ZZ$ far from $j=0$ to a pure factorized state $\psi_0$ for some MDP function $f(r)=\Or$. Then $\psi$ and $\psi_0$ are unitarily equivalent and one can be produced from the other by a conjugation with $e^{i \CG}$, where $\CG$ is an almost local self-adjoint observable $g$-localized at $j=0$ and bounded $\|\CG\|\leq C$  for some $g(r)$ and $C$ which only depend on $f(r)$.
\end{lemma}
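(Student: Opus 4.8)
The plan is to first convert the hypothesis into quantitative statements about marginals, deduce unitary equivalence by the method of Proposition~\ref{prop:split}, and then manufacture a disentangling unitary supported near $j=0$ by telescoping over length scales.

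First I would read off density-matrix estimates. For a single site $j\neq 0$ any $\CA\in\SA_j$ is localized on $\{j\}\subset\bar\Gamma_r$ as soon as $r<|j|$; choosing $r=|j|-1$ gives $\|\rho_j-\rho_j^0\|_1\le f(|j|-1)$, where $\rho_j,\rho_j^0$ are the restrictions of $\psi,\psi_0$ to $\SA_j$ and $\rho_j^0$ is pure. Since $\log d_j$ grows at most polynomially and $f(r)=\Or$, Fannes' inequality gives $S(\rho_j)\le f(|j|-1)\log\bigl(d_j/f(|j|-1)\bigr)$, so $\sum_{j}S(\rho_j)<\infty$ and the entropy of $\psi$ restricted to any finite region is uniformly bounded. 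Exactly as in Proposition~\ref{prop:split}, Proposition~2.2 of \cite{matsui2001split} and Theorem~1.5 of \cite{Matsui} then give the split property for $\psi$ across any pair of cuts, so $\psi$ is quasi-equivalent to $\psi|_{\Gamma_r}\otimes\psi|_{\bar\Gamma_r}$. On the other hand $\psi|_{\bar\Gamma_r}$ and the pure factorized state $\psi_0|_{\bar\Gamma_r}$ are $f(r)$-close in norm, hence quasi-equivalent for $r$ large by Cor.~2.6.11 of \cite{bratteli2012operator}, while $\psi|_{\Gamma_r}$ and $\psi_0|_{\Gamma_r}$ are automatically quasi-equivalent as states on a finite matrix algebra. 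Tensoring and using $\psi_0=\psi_0|_{\Gamma_r}\otimes\psi_0|_{\bar\Gamma_r}$ shows that $\psi$ is quasi-equivalent to $\psi_0$; since both states are pure their GNS representations are irreducible, so they are unitarily equivalent, which is the first assertion.

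For the localized generator I would use the split structure of the previous step to write, at each scale $r$, the state $\psi$ as a product across $\Gamma_r\mid\bar\Gamma_r$ up to an $\Or$ error, with both factors pure, and approximate the inside factor by a pure vector state on the finite algebra $\SA_{\Gamma_r}$. Because any two pure states of a finite matrix algebra are intertwined by a unitary, I can choose $V_r\in\SA_{\Gamma_r}$ to be the \emph{minimal} rotation carrying the inside vector of $\psi$ to the product vector $\otimes_{|k|\le r} v_k$ of $\psi_0$; minimality means $V_r$ is a rotation by an angle at most $\pi/2$, so its spectrum stays in the closed right half-plane, bounded away from $-1$. The key quantitative point is that the increments $W_r:=V_{r+1}V_r^{-1}$ are close to the identity, $\|W_r-1\|=\Or$, since passing from scale $r$ to $r+1$ only re-adjusts the two newly absorbed sites $\pm(r+1)$, on which $\psi$ already agrees with $\psi_0$ up to $f(r)$. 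The telescoped product $\CU=\bigl(\overrightarrow{\prod}_{r\ge R_0}W_r\bigr)V_{R_0}=:\CU_{\mathrm{out}}\,V_{R_0}$ then satisfies $\psi_0=\psi\circ\Ad_{\CU}$, and commutator estimates (only the factors with $r\ge|k|-1$ fail to commute with $\SA_k$) show that $\CU$ is almost local and localized at $j=0$ with a decay function depending only on $f$.

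The remaining and hardest step is to present $\CU$ as a single exponential $e^{i\CG}$ with the stated properties, rather than as an ordered product. Here I would choose the threshold $R_0=R_0(f)$ large enough that the tail $\sum_{r\ge R_0}\|W_r-1\|<1$, so that $\|\CU_{\mathrm{out}}-1\|<1$; combined with the fact that the core $V_{R_0}$ is a minimal rotation whose spectrum is bounded away from $-1$, this forces the spectrum of $\CU=\CU_{\mathrm{out}}V_{R_0}$ to avoid a fixed neighborhood of $-1$ whose size depends only on $f$. Consequently $\CG:=-i\log\CU$, defined by functional calculus with the branch cut at $-1$, is a well-defined self-adjoint observable with $\|\CG\|\le\pi$, and an operator-Lipschitz estimate for $\log$ on the arc containing the spectrum of $\CU$ transfers the localization of $\CU$ to $\CG$, yielding an almost local $\CG$ that is $g$-localized at $j=0$ with $g$ and $C$ depending only on $f$. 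I expect the main difficulty to lie in making this construction rigorous despite the fact that the finite-region marginals of the infinite pure state $\psi$ are mixed: defining the inside vectors and the minimal rotations $V_r$ cleanly requires working through the split tensor factorization $\CH_\psi\cong\CH_{\mathrm{in}}\otimes\CH_{\mathrm{out}}$ and controlling the near-product structure of the GNS vector. It is this bookkeeping, together with the spectral/branch-cut management just described, that carries the real weight of the proof; the unitary-equivalence statement itself is a direct transcription of Proposition~\ref{prop:split}.
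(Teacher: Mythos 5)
Your proposal follows the same overall strategy as the paper's proof: telescope over length scales, approximating $|\psi\ral$ at scale $r$ by (inside vector) $\otimes$ (factorized vector on $\bar\Gamma_r$), and assemble the disentangling unitary as a convergent ordered product of local rotations. In fact your ``inside vector'' is exactly the paper's $|\chi_n\ral$ (the normalized partial contraction $\lal 0_{\bar\Gamma_n}|\psi\ral$ tensored with the outside vacuum). Two of your deviations are benign. First, your route to unitary equivalence (single-site Fannes bounds, split property, then a Proposition~\ref{prop:split}-style tensor argument) is correct but roundabout: the hypothesis of the lemma is verbatim the hypothesis of Cor.~2.6.11 of \cite{bratteli2012operator}, which the paper applies in one line. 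Second, your functional-calculus ending ($\CG=-i\log\CU$) is a legitimate alternative to the paper's summation of generators, except that your justification that $\sigma(\CU)$ avoids $-1$ is not valid as stated (a triangle-inequality bound gives $\|\CU-1\|<1+\sqrt2>2$, which proves nothing); the correct argument is spectral stability of normal operators: $\|\CU-V_{R_0}\|=\|\CU_{\mathrm{out}}-1\|<1<\sqrt2=\dist\bigl(-1,\sigma(V_{R_0})\bigr)$, so $\sigma(\CU)$ stays at distance at least $\sqrt2-1$ from $-1$.

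The genuine gap sits at the quantitative heart of the argument: the bound $\|W_r-1\|=\Or$. You justify it by saying that passing from scale $r$ to $r+1$ ``only re-adjusts the two newly absorbed sites, on which $\psi$ already agrees with $\psi_0$ up to $f(r)$.'' Closeness of marginals (on the new sites, or even on all of $\bar\Gamma_r$) is a statement about density matrices and does not by itself control any vectors; what is needed is that the GNS vector nearly factorizes, $|\psi\ral\approx|\mathrm{in}_r\ral\otimes|0_{\bar\Gamma_r}\ral$, with error controlled by $f(r)$. Converting trace-norm closeness of the exterior density matrices into this vector statement is precisely the step you defer to the end as ``bookkeeping,'' and it is not bookkeeping --- it is the main analytic step of the lemma. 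The paper does it via the Schmidt decomposition of $|\psi\ral$ across the cut together with the Fuchs--van de Graaf inequality: $\|\rho-\rho_0\|_1\le f(n)$ gives fidelity $F(\rho,\rho_0)\ge 1-f(n)/2$, and since $\rho_0$ is pure this fidelity equals the overlap $|\lal\psi|\chi_n\ral|$. Once that estimate is in hand, the paper's construction also sidesteps your second unproven claim: instead of comparing full rotations $V_r$ and $V_{r+1}$ built at different scales --- which requires a uniform Lipschitz-stability estimate for the map (pair of vectors) $\mapsto$ (minimal rotation), together with consistent phase conventions --- it rotates $|\chi_n\ral$ directly into $|\chi_{n+1}\ral$: both vectors equal the vacuum outside $\Gamma_{n+1}$ and have overlap $\ge 1-f(n)$, so a unitary localized on $\Gamma_{n+1}$ with $\|1-\CU_n\|\le(2f(n))^{1/2}$ exists automatically. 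Your scheme can likely be completed (the canonical choice $|\mathrm{in}_r\ral\propto\lal 0_{\bar\Gamma_r}|\psi\ral$ keeps all phases aligned, and the needed stability of the minimal rotation is true), but as written the proposal asserts, rather than proves, the two estimates that constitute the actual content of the proof.
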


\begin{proof}

Unitary equivalence of states $\psi$ and $\psi_0$ follows from Corollary 2.6.11 of \cite{bratteli2012operator}. 
Let $(\Pi_0,\CH_0,|0\ral)$ be the GNS data for $\psi_0$. The state $\psi$ is a vector state corresponding to $|\psi \ral \in \CH_0$. Let $\CV_{n}$ be a subspace of $\CH_0$ spanned by vectors which can be produced from $|0\ral$ by an observable localized on $\Gamma_n$. Note that $\CV_1 \subset \CV_2 \subset \CV_3 \subset ...$.

Let $n_0$ be such that $f(n_0)<1/2$. Let us temporarily fix $n\geq n_0$ and not indicate it explicitly. Let us estimate the angle between the vector $|\psi\ral$ and the subspace $\CV=\CV_n$. The Hilbert space $\CH_0$ is isomorphic to $\CH_{\Gamma}\otimes\CH_{\bar\Gamma}$, where the Hilbert spaces $\CH_{\Gamma}$ and $\CH_{\bar{\Gamma}}$ carry representations of $\SA_{\Gamma}$ and $\SA_{\bar\Gamma}$, respectively. The restrictions of vector states $\psi$ and $\psi_0$ to $\SA_{\bar{\Gamma}}$ can be described by density matrices $\rho$ and $\rho_0$ on $\CH_{\bar\Gamma}$. The density matrix $\rho_0$ is pure, but $\rho$ is mixed, in general. We have
\beq
\|\rho-\rho_0\|_1 \leq \eps
\eeq
where $\eps=f(n)\in (0,1)$. Fuchs–van de Graaf inequality implies that for fidelity we have
\beq
F(\rho,\rho_0):= \| (\rho)^{1/2} (\rho_0)^{1/2} \|_{1} \geq 1-\frac{\eps}{2}.
\eeq

Let
\beq
|\psi \ral = \sum_{i=1}^N \sqrt{\lambda_i}\, |\eta_i\ral \otimes \, |\xi_i\ral
\eeq
be the Schmidt decomposition of $|\psi \ral$. Here $N\leq {\rm dim}\CH_\Gamma$,  $|\eta_i\ral$, $i=1,\ldots,N,$ are orthonormal vectors in $\CH_\Gamma$,  $|\xi_i\ral$, $i=1,\ldots,N,$ are orthonormal vectors in  $\CH_{\bar\Gamma}$, and $\lambda_i$, $i=1,\ldots,N,$ are positive numbers satisfying $\sum_i\lambda_i=1$. Since $|0\ral$ is factorized, its Schmidt decomposition contains only a single term:
\begin{equation}
|0\ral=|0_\Gamma\ral\otimes |0_{\bar\Gamma}\ral
\end{equation}

Let $a_i=\lal \xi_i |0_{\bar\Gamma}\ral$.
The fidelity of $\rho$ and $\rho_0$ can be expressed in terms of $\lambda_i$ and $a_i$:
\beq
F(\rho,\rho_0)=\left(\sum_i \lambda_i |a_i|^2\right)^{1/2}.
\eeq
 We define $|v\ral=\sum_i 
 \sqrt\lambda_i a^*_i |\eta_i\ral$ and let 
\begin{equation}
 |\chi\ral=\left(\sum_k\lambda_k  |a_k|^2\right)^{-1/2}|v\ral\otimes |0_{\bar\Gamma}\ral
\end{equation}
Then it is easy to see that
\begin{equation}\label{angleestimate}
|\lal\psi|\chi\ral|=\left(\sum_i \lambda_i |a_i|^2\right)^{1/2}=F(\rho,\rho_0)\geq 1-\frac{\eps}{2}.
\end{equation}
Since $\eps<1/2$, $|\psi\ral$ is not orthogonal to the subspace $\CV$.

For any $n\geq n_0$ let $|\chi_n\ral\in\CV_n$ be as above (geometrically, it is the normalized projection of $|\psi\ral$ to $\CV_n$). The estimate (\ref{angleestimate}) implies
\beq
|\lal \chi_{n} | \chi_{n+1} \ral| \geq 1-2 \eps_n,
\eeq
where $\eps_n=f(n)$. Let $\CU_{n_0}=e^{i \CG_{n_0}}$ be a unitary localized on $\Gamma_{n_0}$ that implements a rotation of $| 0\ral$ to $| \chi_{n_0} \ral$ with $\| \CG_{n_0} \| \leq \pi$. We can also choose unitary observables $\CU_n$ for $n\geq n_0$ localized on $\Gamma_{n+1}$ and satisfying $\|1 - \CU_n\| \leq (4 \eps_n)^{1/2}$ which implement rotations of $| \chi_n \ral$ to $| \chi_{n+1} \ral$, and which therefore can be written as $\CU_n = e^{i \CG_n}$ for an observable $\CG_n$ local on $\Gamma_{n+1}$ with $\| \CG_n \| \leq 2 (2 \eps_n)^{1/2}$. The ordered product of all such unitaries over $n \geq n_0$ can be written as $\CU=e^{i \CG}$. By construction, this unitary  maps  $|0 \ral$ to $| \psi \ral$. Moreover, since $\| \CG_n \| \leq 2 (2 f(n))^{1/2}$ for $n\geq n_0$, $\CG$ is $g$-localized for some MDP function $g(r)=\Or$ that only depends on $f(r)$, and $\|\CG\| \leq \sum_{n=n_0}^{\infty} 2 (2 f(n))^{1/2} + \pi$, a quantity that also depends only on $f(r)$.

\end{proof}

\begin{corollary} \label{cor:almfact}

Let $\psi_0$ and $\psi$ be distinct vector states on $\SA$ which satisfy the conditions of the above lemma. Let $\psi_s$, $s\in[0,1],$ be a path of vector states corresponding to a normalization of the path of vectors $s |0\ral + (1-s)|\psi\ral$. Then there exists a continuous path of self-adjoint almost local observables $\CG(s)$ $h$-localized at $j=0$ such that $\psi_{s} = \alpha_{\CG}(s)(\psi_0)$ and $\| \CG(s) \| \leq C$, for some MDP function $h(r)=\Or$ and $C>0$ which only depend on $f$.
\end{corollary}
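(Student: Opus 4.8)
The plan is to run the construction from the proof of Lemma~\ref{lma:almcstates} on the entire family $\psi_s$ simultaneously, and to verify that every ingredient can be chosen continuously in $s$ with $s$-independent bounds. First I would fix the phase of $|\psi\rangle$ so that $\langle 0|\psi\rangle\geq 0$, and write $|\psi\rangle=e^{i\CG}|0\rangle$ with the almost local, $g$-localized generator $\CG$ supplied by the Lemma. Then the path of vectors is $|v_s\rangle=A_s|0\rangle$ with $A_s=s+(1-s)e^{i\CG}$, and $\|v_s\|^2=N_s:=s^2+(1-s)^2+2s(1-s)\langle 0|\psi\rangle\geq \tfrac12$, so $\psi_s$ is a genuine continuous path of vector states, equal to $\psi$ at $s=0$ and to $\psi_0$ at $s=1$.

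The first substantive step is to show that $\psi_s$ is $f'$-close to $\psi_0$ on $\ZZ$ far from $0$, uniformly in $s$, for some MDP function $f'(r)=\Or$ depending only on $f$. This is where the almost-locality of the single operator $A_s$ does the work: for $\CA$ localized on $\bar\Gamma_r$ the commutator $[\Pi_0(\CA),A_s]$ is bounded by a tail of $g$, so $\langle v_s|\Pi_0(\CA)|v_s\rangle=\langle 0|A_s^\dagger A_s\,\Pi_0(\CA)|0\rangle+O(\|\CA\|\,\tilde g(r))$; since $|0\rangle$ is a product state, $A_s^\dagger A_s$ is $g$-localized at $0$ with $\langle 0|A_s^\dagger A_s|0\rangle=N_s$, and $\CA$ sits on $\bar\Gamma_r$, factorization gives $\langle v_s|\Pi_0(\CA)|v_s\rangle=N_s\,\psi_0(\CA)+O(\|\CA\|\,f'(r))$. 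Dividing by $N_s\geq\tfrac12$ yields $|\psi_s(\CA)-\psi_0(\CA)|\leq f'(r)\|\CA\|$, uniformly in $s$.

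With this uniform closeness in hand I would rerun the proof of the Lemma with $\psi$ replaced by $\psi_s$ and $f$ by $f'$. The normalized projections $|\chi_n(s)\rangle$ of $|v_s\rangle$ onto $\CV_n$ are continuous in $s$ and never vanish, and the consecutive overlaps obey $|\langle\chi_n(s)|\chi_{n+1}(s)\rangle|\geq 1-f'(n)$ uniformly in $s$. Taking at each step the rotation closest to the identity that sends $|\chi_n(s)\rangle$ to $|\chi_{n+1}(s)\rangle$ inside $\CV_{n+1}$ --- a canonical and continuous choice because the overlap stays close to $1$ --- produces local unitaries $\CU_n(s)=e^{i\CG_n(s)}$ on $\Gamma_{n+1}$ with $\|\CG_n(s)\|\leq 2\sqrt{f'(n)}$, all continuous in $s$. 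Their ordered product maps $|0\rangle$ to $|v_s\rangle/\|v_s\|$, is $h$-localized at $0$ for some $h(r)=\Or$ depending only on $f$, and, exactly as in the Lemma, is realized by a single self-adjoint almost local generator $\CG(s)$ with $\|\CG(s)\|\leq C$. Because $\CU_n(1)=\Id$ for all $n$ we get $\CG(1)=0$, and by construction $\psi_s=\alpha_\CG(s)(\psi_0)$, with the generator at $s=0$ reproducing the one from the Lemma.

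The only point that goes beyond the Lemma's proof is continuity in $s$, and this is where I expect the main work. Each finite ingredient is manifestly continuous, since $|v_s\rangle$ is real-analytic in $s$ and the overlaps used to define the projections and rotations stay close to $1$, away from the degenerate values at which a continuous canonical choice would break down; summability of $\sqrt{f'(n)}$ (automatic since $f'$ decays faster than any power) makes the infinite ordered product converge uniformly in $s$, so the limiting unitary and its generator $\CG(s)$ are continuous. The genuinely delicate step is packaging the ordered product of the $\CU_n(s)$ as a single almost local generator continuously and with the stated $s$-independent bound; this is done pathwise, just as ordered conjugations are identified with LGAs in the Preliminaries, using the uniform localization and the norm control $\|\CG_n(s)\|\leq 2\sqrt{f'(n)}$.
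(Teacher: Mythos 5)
Your proof is correct and takes essentially the same route as the paper's: show the whole path $\psi_s$ is uniformly $f'$-close to $\psi_0$ far from the origin, rerun the construction of Lemma~\ref{lma:almcstates} with all choices ($|\chi_{n}(s)\ral$, $\CU_n(s)$, $\CG_n(s)$) made continuously in $s$, and obtain $\CG(s)$ as the uniform limit of the generators of the partial ordered products, with convergence controlled by the summable bound $\|\CG_n(s)\|\leq 2\sqrt{f'(n)}$. Your explicit phase convention $\lal 0|\psi\ral\geq 0$ (which guarantees $N_s\geq \tfrac12$) and the detailed commutator/factorization argument for the uniform closeness are points the paper's proof merely asserts, and the phase fixing is in fact needed for the $s$-independent bounds to hold, so these additions strengthen rather than alter the argument.
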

\begin{proof}

By the above lemma  $|\psi\ral = \Pi_{0}(\CU)|0\ral$ for some $g$-localized $\CU$. Therefore the states $\psi_{s}$ are all $g$-close to the state $\psi_0$. The vectors $|\chi_{n,s}\ral$ depend continuously on $s$, therefore we can choose the unitaries $\CU_{n,s}$ and the observables $\CG_{n,s}$ so that they are continuous functions of $s$. Let $\tilde{\CU}_{n,s}$ be a product $\CU_{1,s} \cdots \CU_{n,s}$ generated by an almost local observable $\tilde{\CG}_{n,s}$. It follows from eq. (\ref{eq:LGAcomposition}) that $\|\tilde{\CG}_{n+1,s} - \tilde{\CG}_{n,s} \| \leq 2 (2 \eps_n)^{1/2}$. Therefore, the limit $\CG(s) = \lim_{n \to \infty} \tilde{\CG}_{n,s}$ is a continuous function of $s$.
\end{proof}

\begin{corollary} \label{cor:almcstatesSRE}
Lemma \ref{lma:almcstates} and Corollary \ref{cor:almfact} hold if we replace $\psi_0$ by an SRE state $\phi$ and $\psi$ by a state $\tilde{\phi}$ which is $f$-close to $\phi$. 
\end{corollary}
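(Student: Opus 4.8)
The plan is to reduce everything to the already-established Lemma~\ref{lma:almcstates} and Corollary~\ref{cor:almfact} by undoing the locally-generated automorphism that makes $\phi$ short-range entangled. Since $\phi$ is SRE, there is a $0$-chain $K$ and a pure factorized state $\psi_0$ with $\phi = \alpha_K(\psi_0)$, i.e. $\phi = \psi_0\circ\alpha_K$. I would then pull $\tilde\phi$ back by the same automorphism, setting $\psi := \alpha_K^{-1}(\tilde\phi) = \tilde\phi\circ\alpha_K^{-1}$, and note $\psi_0 = \phi\circ\alpha_K^{-1}$. The state $\psi$ is pure because $\alpha_K^{-1}$ is a $*$-automorphism, so it remains only to show that $\psi$ is $f'$-close on $\ZZ$ far from $0$ to the pure factorized state $\psi_0$, for some MDP function $f'(r)=\Or$ depending only on $f$ and the locality of $K$. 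Once this is done, Lemma~\ref{lma:almcstates} and Corollary~\ref{cor:almfact} apply verbatim to the pair $(\psi_0,\psi)$, and I transport their conclusions back to $(\phi,\tilde\phi)$ through $\alpha_K$.

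The core technical step is the $f'$-closeness. For $\CA$ localized on $\bar\Gamma_r$ I would approximate $\alpha_K^{-1}(\CA)$ in norm by an observable $\CA'$ strictly localized on $\bar\Gamma_{r/2}$, with $\|\alpha_K^{-1}(\CA)-\CA'\|\leq\|\CA\|\,\delta(r/2)$ for some $\delta=\Or$ coming from the quasi-locality of the LGA $\alpha_K^{-1}$ (Lieb--Robinson, Lemma A.2 of \cite{ThoulessHall}). A three-term triangle inequality then gives $|\psi(\CA)-\psi_0(\CA)| = |\tilde\phi(\alpha_K^{-1}(\CA))-\phi(\alpha_K^{-1}(\CA))| \leq 2\|\CA\|\,\delta(r/2) + |\tilde\phi(\CA')-\phi(\CA')|$, and since $\CA'$ lives on $\bar\Gamma_{r/2}$ the hypothesis that $\tilde\phi$ is $f$-close to $\phi$ far from $0$ bounds the last term by $\|\CA'\|\,f(r/2)\leq 2\|\CA\|\,f(r/2)$. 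Setting $f'(r)=2\delta(r/2)+2f(r/2)=\Or$ yields the claim, with $f'$ depending only on $f$ and the fixed $K$.

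With $\psi$ and $\psi_0$ now satisfying the hypotheses of Lemma~\ref{lma:almcstates} and Corollary~\ref{cor:almfact}, I obtain an almost local self-adjoint $\CG'$ that is $g'$-localized at $0$ with $\|\CG'\|\leq C'$ and $\psi=\Ad_{e^{i\CG'}}(\psi_0)$, together with a continuous path $\CG'(s)$ satisfying $\psi_s=\alpha_{\CG'}(s)(\psi_0)$. To transport these to $\phi$, I use that $*$-automorphisms commute with functional calculus, $\alpha_K^{-1}(e^{i\CG'})=e^{i\alpha_K^{-1}(\CG')}$, and that $\alpha_K^{-1}\circ\Ad_{e^{i\CG'}}\circ\alpha_K=\Ad_{e^{i\alpha_K^{-1}(\CG')}}$. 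Combined with $\psi_0=\phi\circ\alpha_K^{-1}$ this gives $\tilde\phi=\alpha_K(\psi)=\Ad_{e^{i\CG}}(\phi)$ with $\CG:=\alpha_K^{-1}(\CG')$, which is self-adjoint, satisfies $\|\CG\|=\|\CG'\|\leq C'$, and is $h$-localized at $0$ for some $h=\Or$ depending only on $f$ and $K$, because $\alpha_K^{-1}$ spreads the localization function by a fixed $K$-dependent amount. For the path statement, conjugating the flow gives $\alpha_K^{-1}\circ\alpha_{\CG'}(s)\circ\alpha_K=\alpha_{\tilde\CG}(s)$ with generator $\tilde\CG(s)=\alpha_K^{-1}(\CG'(s))$, so the states $\tilde\phi_s:=\alpha_K(\psi_s)$ satisfy $\tilde\phi_s=\alpha_{\tilde\CG}(s)(\phi)$ with $\tilde\CG(s)$ continuous, self-adjoint, uniformly bounded, and $h$-localized at $0$, as required.

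I expect the $f'$-closeness estimate of the second paragraph to be the main obstacle: the delicate part is commuting the pullback $\alpha_K^{-1}$ past the notion of being ``localized on $\bar\Gamma_r$'' and bookkeeping the quasi-locality tails so that the resulting $f'$ still decays faster than any power and depends only on $f$ and the fixed $K$. The transport-back step is essentially formal, relying only on the fact that conjugating an LGA (or an almost local unitary) by a fixed LGA again produces an LGA whose generator is the pullback of the original generator.
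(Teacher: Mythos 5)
Your proposal is correct and follows essentially the same route as the paper: the paper's proof also pulls both states back by the LGA that disentangles $\phi$ (writing $\phi\circ\alpha_F=\psi_0$, noting $\tilde\phi\circ\alpha_F$ is then close to $\psi_0$ in the required sense), applies Lemma \ref{lma:almcstates} and Corollary \ref{cor:almfact} to the pulled-back pair, and transports the conclusions back. The only difference is one of detail: the paper asserts the closeness of $\tilde\phi\circ\alpha_F$ to $\psi_0$ and the transport step in a single sentence, whereas you spell out the quasi-locality estimate and the conjugation formulas $\alpha_K^{-1}\circ\Ad_{e^{i\CG'}}\circ\alpha_K=\Ad_{e^{i\alpha_K^{-1}(\CG')}}$ explicitly, which is consistent with (and a legitimate filling-in of) the paper's argument.
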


\begin{proof}
Let $\alpha_F$ be an LGA, such that $\phi \circ \alpha_F = \psi_0$. The state $\tilde{\phi} \circ \alpha_F$ is $g$-close to $\psi_0$ for some $g(r) = \Or$ that depends on $f(r)$ only. Both Lemma \ref{lma:almcstates} and Corollary \ref{cor:almfact} hold for $\phi \circ \alpha_F$ and  $\tilde{\phi} \circ \alpha_F$, and therefore they both hold for $\phi$ and $\tilde{\phi}$.
\end{proof}




\subsection{A classification of invertible phases without symmetries}

Short-Range Entangled states are invertible by definition. In this section we show that the converse is also true,\footnote{This is not true in the case of fermionic systems: Kitaev chain provides a counter-example.} and thus all invertible phases of bosonic 1d systems without symmetries are trivial.

We will say that a pure 1d state has bounded entanglement entropy if the entanglement entropies of all intervals $[j,k]\subset\ZZ$ are uniformly bounded.
\begin{remark}
It was shown by Matsui \cite{Matsui} that if $\psi$ has bounded entanglement entropy then it has the split property: the von Neumann algebras $\CM_{A}=\Pi_\psi(\SA_A)''$ and $\CM_{\bar A}=\Pi_\psi(\SA_{\bar A})''$ for a half-line $A$ are Type I von Neumann algebras. Since they are each other's commutants and generate $B(H_\psi)$, they must be Type I factors. Thus $\CM_A\simeq B(\CH_A)$ for some Hilbert space $\CH_A$, and the restriction $\psi|_A$ to a half-line $A$ can be described by a density matrix $\rho_A$ on $\CH_A$. In fact, \cite{Matsui} shows that $\CH_A$ can be identified with the GNS Hilbert space of one of the Schmidt vector states of $\psi$, which are all unitarily equivalent.
\end{remark}
\begin{lemma}\label{lma:bounded}
Both SRE 1d states and invertible 1d states have bounded entanglement entropy. 
\end{lemma}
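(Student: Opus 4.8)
The plan is to reduce the invertible case to the SRE case, and to settle the SRE case by a quasilocal ``area law'' estimate in which the entanglement of an interval is localized at its two endpoints, controlled by the fast decay of the generator of the LGA.

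I would first reduce the invertible case. If $\psi$ is invertible, then by definition there is a pure state $\psi'$ on some $\SA'$ with $\psi\otimes\psi'$ in the trivial phase; stacking with a factorized pure ancilla $\chi_0$, the state $\Psi:=\psi\otimes\psi'\otimes\chi_0$ is LGA-equivalent to a factorized pure state, i.e.\ SRE. Since the three tensor factors live on independent systems, the interval density matrix of $\Psi$ factorizes and the von Neumann entropy is additive, so $S_{[j,k]}(\Psi)=S_{[j,k]}(\psi)+S_{[j,k]}(\psi')+S_{[j,k]}(\chi_0)\geq S_{[j,k]}(\psi)$, where I used $S_{[j,k]}(\chi_0)=0$ and $S_{[j,k]}(\psi')\geq 0$. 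Note that $\psi'$ need \emph{not} be factorized for this: positivity of $S_{[j,k]}(\psi')$ is all that is needed. Hence $S_{[j,k]}(\psi)\leq S_{[j,k]}(\Psi)$, and once the SRE case produces a bound on the right-hand side that is uniform in $j,k$, the same bound holds for $\psi$. So everything comes down to the SRE case.

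For the SRE case write $\psi=\psi_0\circ\alpha_F$ with $\psi_0$ pure factorized and $\alpha_F$ an LGA whose generator is $f$-localized. Fixing an interval $[j,k]$, I apply the three-region decomposition (\ref{eq:autdecomposition2}) with $A=(-\infty,j)$, $B=[j,k]$, $C=(k,\infty)$, writing $\alpha_F=\alpha_{F|_A}\circ\alpha_{F|_C}\circ\alpha_{F|_B}\circ\Ad_{\CU_j}\circ\Ad_{\CU_k}$, where $\CU_j,\CU_k$ are almost local unitaries $g$-localized at the endpoints with $g$ depending only on $f$ (in particular not on $j,k$). Because the three region automorphisms act inside $A$, $B$, $C$ separately, the state $\phi:=\psi_0\circ\alpha_{F|_A}\circ\alpha_{F|_C}\circ\alpha_{F|_B}$ is a product across both cuts; its reduced density matrix on $B$ is then a pure state (a unitary supported in $B$ applied to the pure factorized restriction of $\psi_0$), so $S_{[j,k]}(\phi)=0$. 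Thus $\psi=\phi\circ\Ad_{\CU_j}\circ\Ad_{\CU_k}$, and all the entanglement of the interval is produced by conjugating a cut-factorized state with the two boundary unitaries.

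It therefore suffices to bound, independently of the interval length, the entropy generated across a single cut by conjugation with one almost local unitary $\CU$ that is $g$-localized at an endpoint $p$; the two endpoints then contribute additively. I would truncate $\CU$ to a unitary $V_r$ strictly supported on $\Gamma_r(p)$ by averaging over on-site unitaries outside $\Gamma_r(p)$, exactly as in the construction of the restriction $F|_A$, so that $\|\CU-V_r\|\leq c\,g(r)$ with $g(r)=\Or$. Conjugation by $V_r$ raises the Schmidt rank of $\phi$ across the cut at $p$ from $1$ to at most $D_r$, the dimension of the Hilbert space of the $\leq r$ sites of $\Gamma_r(p)$ lying on the far side of the cut, so the interval density matrix of $\phi\circ\Ad_{V_r}$ has rank $\leq D_r$ and lies within trace distance $O(g(r))$ of that of $\psi$. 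Grouping the Schmidt spectrum of $\psi$ into bands indexed by the scale $r$, the mass in band $r$ is controlled by the tail $g(r)$ while its log-dimension is $\log D_r=\sum_i\log d_i$ over $\leq r$ sites near $p$, so that $S_{[j,k]}(\psi)\leq \sum_r(\text{band mass})\,\log D_r+H(\{\text{band masses}\})$ converges to a finite value depending only on $g$ and on the local dimensions near $j$ and $k$, but not on $k-j$. This last step is the main obstacle: a quasilocal boundary unitary has unbounded support, so no single truncation gives the bound, and one must balance the super-polynomially small truncation error against the log-dimension it activates at every scale at once. It is precisely the interplay of the two standing assumptions, fast (faster than any power) localization of $F$ and at most polynomial growth of $\log d_j$, that forces the scale series to converge and yields the area-law bound.
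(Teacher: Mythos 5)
Your proposal is correct, and its skeleton matches the paper's: the invertible case is reduced to the SRE case by additivity and positivity of entropy under tensor products (the paper leaves this reduction implicit in the phrase ``SRE or invertible''), and the SRE case is attacked through the three-region decomposition (\ref{eq:autdecomposition2}), writing $\psi=\phi\circ\Ad_{\CU_j}\circ\Ad_{\CU_k}$ with $S_{[j,k]}(\phi)=0$. Where you genuinely diverge is the key estimate, the entropy created by the two boundary unitaries. The paper handles it with per-site Fannes: comparing the perturbed state with the factorized reference site by site, the entropy of each site $l$ is bounded by $h(|j-l|)+h(|k-l|)$ for an MDP function $h(r)=\Or$, and subadditivity of von Neumann entropy bounds the interval entropy by the convergent sum $\sum_l (h(|j-l|)+h(|k-l|))$. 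You instead truncate the boundary unitary at every scale $r$, trade the superpolynomially small truncation error against a rank bound, and run a banding argument on the Schmidt spectrum. This is heavier, but it is essentially the multi-scale technique the paper itself deploys later (Lemma \ref{lma:schmidt}) to establish $g(r)$-decay of Schmidt coefficients, so your route proves strictly more than finiteness of entropy; it also has the advantage that it only needs the reference $\phi$ to be a product across the two cuts, whereas the paper's per-site Fannes argument literally requires a site-factorized reference, so transporting it through the decomposition needs an extra (glossed) step. Three patchable points: (i) averaging a unitary over on-site unitaries outside $\Gamma_r(p)$ does not yield a unitary; follow the average by its polar part (or exponentiate a truncated generator) to get $V_r$, at the cost of a constant in the error; (ii) the Schmidt rank created by $V_r$ is bounded by the operator Schmidt rank of $V_r$ across the cut, i.e.\ by $D_r^2$ rather than $D_r$ — harmless since only $\log D_r$ enters; (iii) the claim that the band mass is controlled by the tail should be justified by the eigenvalue inequality $\|\rho-\rho'\|_{1}\geq \sum_n|\lambda_n(\rho)-\lambda_n(\rho')|$ applied to the low-rank approximant, exactly as the paper does in Lemma \ref{lma:schmidt}. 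Finally, your bound depends on the local dimensions near $j$ and $k$, hence is uniform in the length $k-j$ but not in the position of the interval when $\log d_j$ grows; the paper's Fannes bound has the same feature (the factor $\log d_l$ is hidden in $h$), and position-dependent bounds suffice for the intended application to the split property.
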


\begin{proof}
Suppose we have a state $\psi$ obtained from a factorized pure state $\psi_0$ by conjugation with almost local unitaries $\CU_j$ and $\CU_k$ which are  $g$-localized at sites $j$ and $k$, correspondingly. Since conjugation by $\CU_{j,k}$ is an automorphism of $\SA$ which is almost localized on $j,k$, for any $\CA_l \in \SA_l$ we have 
\beq
|\psi_0(\CU_j \CU_k \CA_l \CU_k^* \CU_j^*) - \psi_0(\CA_l)| \leq \left(g(|j-l|) + g(|k-l|)\right)\|\CA_l\|.
\eeq
By Fannes' inequality \cite{Fannes}, the entropy of the site $l$ in the state $\psi$ is bounded by $h(|j-l|)+h(|k-l|)$ for some MDP function $h(r)=\Or$ that depends only on $g(r)$ and the asymptotics of $d_j$ for $j\ra\pm\infty$. Therefore such a state has a uniform bound on the  entanglement entropy of any interval $[j,k]$. The decomposition eq. (\ref{eq:autdecomposition2}) then implies that the same is true for any SRE or invertible state.
\end{proof}

Let $\psi$ be a possibly mixed state on a half-line $A$ which is $f$-close to a pure factorized state $\psi_0$ on $A$. By Corollary 2.6.11 of  \cite{bratteli2012operator}, $\psi$ is normal in the GNS representation of $\psi_0$ and can be described by a density matrix.

\begin{lemma}\label{lma:aux2}
Let $\psi$ be a pure 1d state on $\SA$. Suppose there is an $R>0$ such that $\psi$ is $f$-close far from $j=0$ to a pure factorized state $\omega^+$ on $(R,+\infty)$ and is $f$-close far from $j=0$ to a pure factorized state $\omega^-$ on $(-\infty,-R)$. Then it is $g$-close on $\ZZ$ far from $j=0$ to $\omega^+\otimes\omega^-$ for some MDP function $g(r)=\Or$ which depends only on $f$ and the asymptotics of $d_j$ for $j\ra \pm\infty$.
\end{lemma}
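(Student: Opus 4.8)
The plan is to reduce the statement to a finite-dimensional, quantitative version of the elementary fact that a tripartite pure vector whose two \emph{outer} marginals are close to pure must itself be close to a product across those two factors; the factorized states $\omega^\pm$ will play the role of the pure marginals. Since local observables are dense, it suffices to bound $|\psi(\CA)-(\omega^+\otimes\omega^-)(\CA)|$ for $\CA$ localized on a finite set $S=S_L\cup S_R$ with $S_L=[-M,-r-1]$ and $S_R=[r+1,M]$, \emph{with a constant depending on $r$ but not on $M$}; the general quasi-local case then follows by norm-continuity of the states. Take $r\ge R$, so that $S\subset(-\infty,-R)\cup(R,\infty)$ and $\omega^+\otimes\omega^-$ is defined on $\SA_S$ (for the finitely many $r<R$ one takes the trivial bound $g(r)=2$). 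Because $\SA_S=\End(\CH_S)$ is a finite-dimensional matrix algebra, its GNS representation splits as $\CH_\psi\cong\CH_{S_L}\otimes\CH_{S_R}\otimes\CH_E$, with $\SA_S$ acting on the first two tensor factors, and $\psi$ is the vector state of a unit vector $|\Psi\rangle$. Writing $\rho_S=\Tr_E|\Psi\rangle\langle\Psi|$ and $\rho_{S_L},\rho_{S_R}$ for its marginals, the hypotheses translate into trace-norm bounds: for $\CA\in\SA_{S_L}\subset\SA_{\bar\Gamma_r}$ we have $|\psi(\CA)-\omega^-(\CA)|\le\|\CA\|f(r)$, so by $\|X\|_1=\sup_{\|\CA\|\le1}|\Tr(X\CA)|$ we get $\|\rho_{S_L}-\sigma^-\|_1\le f(r)$ with $\sigma^-=\omega^-|_{S_L}$ a rank-one projector, and likewise $\|\rho_{S_R}-\sigma^+\|_1\le f(r)$ with $\sigma^+=\omega^+|_{S_R}$ pure.

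The key step is the quantitative purity transfer. Write $\sigma^\pm=|a^\pm\rangle\langle a^\pm|$. By the Fuchs--van de Graaf inequality, $\langle a^-|\rho_{S_L}|a^-\rangle=F(\rho_{S_L},\sigma^-)^2\ge(1-\tfrac12 f(r))^2\ge 1-f(r)$, and similarly for $a^+$. First I would project $|\Psi\rangle$ onto $|a^-\rangle$ in the $S_L$ factor, writing $|\Psi\rangle=|a^-\rangle\otimes|\beta\rangle+|R_1\rangle$ with $|\beta\rangle=\langle a^-|\Psi\rangle$ and $\|R_1\|^2=1-\langle a^-|\rho_{S_L}|a^-\rangle\le f(r)$. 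Tracing the rank-one estimate $\||\Psi\rangle\langle\Psi|-|a^-\rangle\langle a^-|\otimes|\beta\rangle\langle\beta|\|_1\le 2\sqrt{f(r)}$ down to $S_R$ (partial trace is trace-norm contractive) shows $\Tr_E|\beta\rangle\langle\beta|$ is within $2\sqrt{f(r)}$ of $\rho_{S_R}$, hence $\langle a^+|\Tr_E|\beta\rangle\langle\beta||a^+\rangle\ge 1-f(r)-2\sqrt{f(r)}$. A second projection, now onto $|a^+\rangle$ in the $S_R$ factor, gives $|\beta\rangle=|a^+\rangle\otimes|\gamma\rangle+|R_2\rangle$ with $\|R_2\|^2\le f(r)+2\sqrt{f(r)}$ (using $\|\beta\|^2\le1$). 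Combining, $\||\Psi\rangle-|a^-\rangle\otimes|a^+\rangle\otimes|\gamma\rangle\|\le\|R_1\|+\|R_2\|=:\delta(r)=O(f(r)^{1/4})$.

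Finally I would take the partial trace over $E$. With $|\Xi\rangle=|a^-\rangle\otimes|a^+\rangle\otimes|\gamma\rangle$ we have $\||\Psi\rangle\langle\Psi|-|\Xi\rangle\langle\Xi|\|_1\le 3\delta(r)$, and since $\|\gamma\|^2=\||\Xi\rangle\|^2\in[(1-\delta)^2,(1+\delta)^2]$ we get $|\,\|\gamma\|^2-1\,|\le 3\delta(r)$; as $\Tr_E|\Xi\rangle\langle\Xi|=\|\gamma\|^2\,\sigma^-\otimes\sigma^+$, this yields $\|\rho_S-\sigma^-\otimes\sigma^+\|_1\le 6\,\delta(r)$, a bound independent of $M$. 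Then for $\CA$ localized on $S$, and using $\sigma^-\otimes\sigma^+=(\omega^+\otimes\omega^-)|_S$, we obtain $|\psi(\CA)-(\omega^+\otimes\omega^-)(\CA)|\le\|\CA\|\,6\delta(r)$, which passes to all of $\SA_{\bar\Gamma_r}$ by density. Setting $g(r)=6\delta(r)$ for $r\ge R$ and $g(r)=2$ for $r<R$, then replacing $g$ by a monotone majorant, gives an MDP function with $g(r)=\Or$, as required. I expect the main obstacle to be precisely this quantitative purity-transfer estimate: the exact version (two pure outer marginals force a global product, whence a product joint marginal) is immediate, but extracting a clean, $M$-independent constant demands the two-step projection above with careful bookkeeping of the $O(f^{1/4})$ losses. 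I note that the bound so obtained depends only on $f$; the dependence on the asymptotics of $d_j$ quoted in the statement enters only if one measures the deviation from purity through entropy (Fannes' inequality, as elsewhere in the paper) rather than through fidelity.
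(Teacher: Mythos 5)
Your proof is correct, and it reaches the conclusion by a genuinely different route than the paper. The paper works with the restrictions of $\psi$ to the two infinite tails $(-\infty,-n)$ and $(n,+\infty)$, described by density matrices $\rho_n^{\pm}$ (normality follows from closeness to a pure factorized state), bounds the entropy of each tail by an MDP function $h(n)=\On$ via Fannes' inequality applied site by site --- this is precisely where the asymptotics of $d_j$ enter --- then bounds the mutual information between the two tails by $h(n)$ and converts it into the trace-norm bound $\|\rho_n-\rho_n^-\otimes\rho_n^+\|_1\leq 2\sqrt{h(n)}$ by the quantum Pinsker inequality, finishing with the triangle inequality against $\sigma_n^-\otimes\sigma_n^+$. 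You instead reduce to finite windows $S=[-M,-r-1]\cup[r+1,M]$, with all estimates uniform in $M$, and then give a purely finite-dimensional, dimension-free argument: the GNS vector of $\psi$ has nearly pure outer marginals, and your two-step projection (via Fuchs--van de Graaf, or even just $\Tr(\rho\sigma)\geq 1-\|\rho-\sigma\|_1$ for pure $\sigma$) forces the vector to nearly factorize, giving $\|\rho_S-\sigma^-\otimes\sigma^+\|_1=O(f(r)^{1/4})$ independently of $M$; the density and monotone-majorant bookkeeping at the end is routine and correct. What your route buys: it avoids entropies, mutual information and Pinsker altogether, needs no appeal to normality or split-property arguments (finite dimensionality provides density matrices for free), and --- as you correctly observe --- it shows that the dependence of $g$ on the asymptotics of $d_j$ in the statement is an artifact of the entropy-based method: your $g$ depends only on $f$ (and trivially on $R$), which slightly strengthens the lemma. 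What the paper's route buys: it stays within the toolkit (Fannes, site-by-site entropy sums) already used in Proposition \ref{prop:split} and Lemma \ref{lma:bounded}, and its intermediate step --- the decoupling estimate $\|\rho_n-\rho_n^-\otimes\rho_n^+\|_1\leq 2\sqrt{h(n)}$ --- requires only smallness of the tail entropies rather than proximity of the marginals to specific pure vectors, so that part of the argument is robust under weaker, purely entropic hypotheses. Quantitatively the two final bounds, $2f(n)+2\sqrt{h(n)}$ versus $O(f(r)^{1/4})$, are both $\Or$, so neither dominates in the regime that matters.
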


\begin{proof}
Since the states are split, we can describe them using density matrices on appropriate Hilbert spaces. The decomposition of $\ZZ$ into the union $(-\infty,-n)\sqcup \Gamma_n\sqcup (n,+\infty)$ gives rise to a tensor product decomposition $\CH=\CH^-_n\otimes\CH_{\Gamma_n}\otimes\CH^+_n$. Let $\psi_n^+$ and $\omega_n^{+}$ be restrictions of $\psi$ and $\omega^{+}$ to $(n,+\infty)$, and $\psi^-_n$ and $\omega^-_n$ be restrictions of $\psi$ and $\omega^-$ to $(-\infty,-n)$. Let $\rho^\pm_n$ and $\sigma^\pm_n$ be the corresponding density matrices. Let $\psi_n$ be a restriction of $\psi$ to $(-\infty,-n) \cup (n, + \infty)$ with the corresponding density matrix $\rho_n$. For $n>R$ we have
\beq
\| \rho_n^{\pm} - \sigma_n^{\pm} \|_{1} \leq f(n).
\eeq
Since trace norm is multiplicative under tensor product, we have
\beq \label{ineq:rhosigma}
\|(\rho_n^- \otimes \rho_n^+)-(\sigma_n^- \otimes \sigma_n^+)\|_{1} \leq \| \rho_n^{-} - \sigma_n^{-} \|_{1} + \| \rho_n^{+} - \sigma_n^{+} \|_{1} \leq 2 f(n)
\eeq
On the other hand, Fannes' inequality implies that for sufficiently large $n$ the entropy of $\rho^{\pm}_n$ is upper-bounded by MDP function $h(n)=\On$, where $h(n)$ depends only on $f(n)$ and the asymptotics of $d_j$ for $j\ra +\infty$. Therefore mutual informations $I(\rho^-_n:\rho^+_n)$ are also upper-bounded by $h(n)$, and the quantum Pinsker inequality implies
\beq
\|\rho_n-(\rho_n^- \otimes \rho_n^+)\|_{1}\leq 2 \sqrt {h(n)}.
\eeq
Combining this with eq. (\ref{ineq:rhosigma}), we get
\beq
\|\rho_n-(\sigma_n^- \otimes \sigma_n^+)\|_{1}\leq  2 f(n) + 2 \sqrt{h(n)}.
\eeq
\end{proof}

We say that a set of (ordered) eigenvalues $\{\lambda_j\}$ has {\it $g(r)$-decay} if $\eps(k) \leq g(\log(k))$ for some MDP function $g(r)= \Or$, where $\eps(k) = \sum_{j=k+1}^{\infty} \lambda_j$.

\begin{lemma} \label{lma:schmidt}
Let $\psi$ be a state on a half-line $A$ which is $f$-close far from the origin of $A$ to a pure factorized state $\psi_0$. Then its density matrix (in the GNS Hilbert space of this factorized state) has eigenvalues with $g(r)$-decay for some $g(r)=\Or$ that depends only on $f(r)$ and the  asymptotic behavior of $d_j={\rm dim}\,\CV_j$ for $j\ra\infty$. Conversely, for any density matrix on a half-line $A$ (in the GNS Hilbert space of a pure factorized state) whose eigenvalues have $g(r)$-decay there is a state on that half-line which has the same eigenvalues and is $f$-close far from the origin of $A$ to this pure factorized state. Furthermore, one can choose $f(r)$ so that it depends only on $g(r)$ and the  asymptotic behavior of $d_j={\rm dim}\,\CV_j$ for $j\ra \infty$. 
\end{lemma}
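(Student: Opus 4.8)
Throughout I realize $\psi$ (and, in the converse, the prescribed eigenvalue data) as a density operator $\rho$ on the GNS space $\CH_A=\bigotimes_j \CH_j$ of the factorized state $\psi_0=\bigotimes_j|0_j\ral$, with eigenvalues ordered $\lambda_1\geq\lambda_2\geq\cdots$; this is legitimate by the normality statement recorded just before Lemma~\ref{lma:aux2}. Write $D_r=\prod_j d_j$ for the product of on-site dimensions over the $r$ sites of $A$ nearest the origin, and recall $\eps(k)=\sum_{j>k}\lambda_j$. The only lattice input I need is that $\log d_j$ grows at most polynomially, so $\log D_r\leq C r^{p}$ for some $C,p$; for the converse I will also use the matching lower bound $\log D_r\geq c\,r$, which holds once the $d_j$ are eventually $\geq 2$. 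This is precisely the point at which the asymptotics of the $d_j$ enter.

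\textbf{Forward direction.} The plan is to prove the single clean estimate $\eps(D_r)\leq f(r)$ and then reparametrize. Since $\psi$ is $f$-close to $\psi_0$ far from the origin, any observable supported on the sites at distance $>r$ obeys $|\psi(\CA)-\psi_0(\CA)|\leq\|\CA\|f(r)$, so the reduced density matrix $\rho_{>r}$ of $\rho$ on those sites satisfies $\|\rho_{>r}-\sigma_{>r}\|_1\leq f(r)$, where $\sigma_{>r}=|\phi\ral\lal\phi|$ is the pure restriction of $\psi_0$. The Fuchs--van de Graaf inequality gives $F(\rho_{>r},\sigma_{>r})\geq 1-\oh f(r)$, and purity of $\sigma_{>r}$ turns this into $\lal\phi|\rho_{>r}|\phi\ral\geq 1-f(r)$, i.e. $\Tr\big(\rho\,(\Id_{\leq r}\otimes|\phi\ral\lal\phi|)\big)\geq 1-f(r)$. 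The projector $\Id_{\leq r}\otimes|\phi\ral\lal\phi|$ has rank $D_r$, and the sum of the top $m$ eigenvalues of $\rho$ equals the maximum of $\Tr(\rho P)$ over rank-$m$ projectors $P$; hence $\sum_{k\leq D_r}\lambda_k\geq 1-f(r)$, i.e. $\eps(D_r)\leq f(r)$. For arbitrary $k$, take the largest $r$ with $D_r\leq k$; monotonicity gives $\eps(k)\leq\eps(D_r)\leq f(r)$, and inverting $\log k<\log D_{r+1}\leq C(r+1)^{p}$ bounds this by $g(\log k)$ with $g(s)=f\big((s/C)^{1/p}-1\big)$. As $f$ decays faster than any power and its argument grows like a power of $s$, $g$ is again $\Or$, which is the asserted $g(r)$-decay.

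\textbf{Converse direction.} Here the plan is an explicit layer-by-layer packing. Given eigenvalues $\{\lambda_k\}$ with $g(r)$-decay, I build orthonormal $|v_k\ral\in\CH_A$ as follows: for the indices $D_{r-1}<k\leq D_r$ I take $|v_k\ral=|w_k\ral\otimes|0\ral_{>r}$, where the $|w_k\ral$ complete the previously chosen vectors to an orthonormal basis of $\CH_{\leq r}\otimes|0\ral_{>r}$. Then $\tilde\rho=\sum_k\lambda_k|v_k\ral\lal v_k|$ has spectrum exactly $\{\lambda_k\}$. Tracing out the sites at distance $\leq r$, every $|v_k\ral$ with $k\leq D_r$ contributes exactly $\lambda_k|\phi\ral\lal\phi|$, so $\tilde\rho_{>r}=(1-\eps(D_r))|\phi\ral\lal\phi|+\sum_{k>D_r}\lambda_k\tau_k$ with each $\tau_k$ a density matrix, whence $\|\tilde\rho_{>r}-|\phi\ral\lal\phi|\|_1\leq 2\eps(D_r)$. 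Thus for any $\CA$ supported at distance $>r$ one has $|\tilde\psi(\CA)-\psi_0(\CA)|\leq 2\eps(D_r)\|\CA\|\leq 2\,g(\log D_r)\|\CA\|$, and using $\log D_r\geq c\,r$ this is $\leq\|\CA\|f(r)$ with $f(r)=2\,g(cr)=\Or$. This produces the required state with the prescribed spectrum.

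\textbf{Main obstacle.} The conceptual core---trading eigenvalue tails of $\rho$ for trace-norm closeness of the complementary reduced state to a pure state---is clean in both directions (fidelity one way, the layered packing the other). The genuinely delicate point is the reparametrization between the two notions of decay: $\eps$ is only controlled at the special cuts $k=D_r$, and converting ``$\eps(D_r)\lesssim f(r)$'' into ``$\eps(k)\lesssim g(\log k)$'' and back rests on two-sided polynomial control of $\log D_r=\sum_{j\leq r}\log d_j$. The upper bound suffices for the forward direction, whereas the converse genuinely needs the lower bound $\log D_r\geq c\,r$; this is exactly where one must invoke that (all but finitely many) sites are nontrivial, and it explains why $f$ and $g$ are each allowed to depend on the asymptotics of the $d_j$.
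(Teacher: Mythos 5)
Your proof is correct; the converse half coincides with the paper's argument, while your forward direction takes a genuinely different and more elementary route. The paper proves the forward direction by purifying $\psi$ to a pure state on the whole line (adjoining a reflected copy of the half-chain), invoking Lemma \ref{lma:aux2} and Lemma \ref{lma:almcstates} to realize that purification as $\Pi_{\psi_0}(\CU)|\psi_0\ral$ for an almost local unitary $\CU$, approximating $\CU$ by a strictly local unitary $\CU^{(r)}$ supported on $\Gamma_r$ (Lemma A.1 of \cite{ThoulessHall}), and then comparing spectra via the inequality $\|\rho-\rho'\|_1\geq\sum_j|\lambda_j(\rho)-\lambda_j(\rho')|$ together with the fact that the comparison density matrix has rank at most $\dim\CH_{A\cap\Gamma_r}$. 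You never leave the half-line and never construct a unitary: trace-norm closeness of the far reduced state to a pure state gives $\lal\phi|\rho_{>r}|\phi\ral\geq 1-f(r)$ (your detour through Fuchs--van de Graaf is even unnecessary here, since $|\Tr((\rho_{>r}-\sigma_{>r})\,|\phi\ral\lal\phi|)|\leq\|\rho_{>r}-\sigma_{>r}\|_1$ already suffices), and Ky Fan's maximum principle applied to the rank-$D_r$ projector $\Id_{\leq r}\otimes|\phi\ral\lal\phi|$ yields $\eps(D_r)\leq f(r)$; the reparametrization from cuts $k=D_r$ to general $k$ is the same step in both treatments (the paper leaves it implicit, you spell it out). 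What your route buys is self-containedness and economy --- no purification, no unitary-implementation machinery, no external approximation lemma --- at no loss of quantitative control; what the paper's route buys is reuse of lemmas it needs anyway elsewhere in Section 4. Your converse (layer-by-layer packing of product basis vectors, diagonal density matrix, explicit $2\eps(D_r)$ trace-norm estimate) is exactly the paper's ``lexicographic basis'' construction made explicit. Finally, you are right, and more candid than the paper, that the converse genuinely needs a lower bound $\log D_r\gtrsim r$: the paper asserts the corresponding bound $\dim\CH_{A\cap\Gamma_r}\geq e^{cr^\alpha}$ ``by our assumption on the growth of dimensions'' even though the standing assumption only bounds $\log d_j$ from above; in both treatments this is legitimately absorbed into the lemma's allowed dependence on the asymptotics of $d_j$.
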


\begin{proof}
Suppose $\psi$ is $f$-close to a pure factorized state far from the origin. It can be purified on the whole line (e.g. in a system consisting of the given system on a half-line and its reflected copy on the other half-line). Moreover, by Lemma \ref{lma:aux2} we can choose this pure state to be $f'$-close far from the origin to a pure factorized state on the whole line for some MDP function $f'(r)=\Or$ that depends only on $f$. By Lemma \ref{lma:almcstates}, it can be produced from a pure factorized state on the whole line by a unitary observable $\CU$ which is $h$-localized for some $h$ which depends only on $f'$. Let $|\psi_0\ral$ be a GNS vector for the corresponding factorized state $\psi_0$. By Lemma A.1 of \cite{ThoulessHall},  there is an MDP function $h'(r)=\Or$ such that for any $r>0$ there is a unitary observable $\CU^{(r)}$ localized on a disk $\Gamma_r$ of radius $r$ such that
\beq\label{ineq1}
\| \Pi_{\psi_0}(\CU)|\psi_0\ral - \Pi_{\psi_0}(\CU^{(r)})|\psi_0\ral  \| \leq h'(r).
\eeq
On the other hand we have
\beq\label{ineq2}
\| \Pi_{\psi_0}(\CU)|\psi_0\ral - \Pi_{\psi_0}(\CU^{(r)})|\psi_0\ral  \| \geq \| \rho - \rho^{(r)} \|_1
\eeq
where $\rho$ is the density matrix for $\psi$ and $\rho^{(r)}$ is the density matrix for $\Pi_{\psi_0}(\CU^{(r)})|\psi_0\ral$ on $A$. The tracial distance between any two density matrices $\rho$ and $\rho'$ can be bounded from below in terms of their eigenvalues \cite{BSimon}:
\beq
\|\rho-\rho'\|_{1}\geq \sum_{j=1}^\infty |\lambda_j(\rho)-\lambda_j(\rho')|,
\eeq
where the eigenvalues $\lambda_i$ are ordered in decreasing order. Applying this to $\rho$ and $\rho^{(r)}$ and noting that $\rho^{(r)}$ has rank at most $\dim \CH_{A \cap \Gamma_r}$, we get
\beq\label{ineq3}
\| \rho - \rho^{(r)} \|_{1}\geq \eps(\dim \CH_{A \cap \Gamma_r}).
\eeq
Combining (\ref{ineq1}), (\ref{ineq2}) and (\ref{ineq3}) we get
\beq
\eps(\dim \CH_{A \cap \Gamma_r})\leq h'(r).
\eeq
Since $\dim \CH_{A \cap \Gamma_r}$ is upper-bounded by $\exp(c r^\alpha)$ for some positive constants $c$ and $\alpha$, we have $\eps(k) \leq g(\log(k))$ for some $g(r) = \Or$ which depends only on $f(r)$, $c$ and $\alpha$.

Conversely, suppose we are given a density matrix on a half-line $A$ with eigenvalues $\lambda_1\geq\lambda_2\geq...$. We may assume that the dimension of $\SA_A$ is infinite, since otherwise the statement is obviously true. Pick any pure factorized state $\psi_0$ on $A$ and choose a basis in each on-site Hilbert space $\CV_j$, $j\in A,$ such that for all $j$ the first basis vector  gives the state $\psi_0\vert_{\SA_j}$. This gives a lexicographic basis $|n\ral$, $n\in\mathbb N$, in the GNS Hilbert space of $\psi_0$. By our assumption on the growth of dimensions of $d_j={\rm dim}\  \CV_j$, there are positive constants $c$ and $\alpha$ such that for any $r$ and any $n<e^{c r^\alpha}$ the vector state $|n\ral \lal n|$ coincides with $\psi_0$ outside of  $\Gamma_r$. Therefore the state $\sum_{n=1}^\infty \lambda_n |n \ral \lal n|$ is $f$-close to $\psi_0$, where  $f(r)=g(c r^\alpha)=\Or$.
\end{proof}


By Lemma \ref{lma:schmidt} any restriction $\psi|_A$ of a state $\psi$ with $g(r)$-decay of Schmidt coefficients to a half-line $A$ can be purified by a state on $\bar{A}$, which is $f$-close far from the origin of $A$ to a pure factorized state for some $f(r)$ that depends on $g(r)$ only. We call such state a {\it truncation} of $\psi$ to $A$. Clearly, if $\omega$ is a truncation of $\psi$ to $A$, then $\omega|_A = \psi|_A$. The following lemma shows that truncations exist for all  invertible states.

\begin{lemma}  \label{lma:invSchmidt}
Let $\psi$ be an invertible state with an inverse $\psi'$, such that that $\Psi = \psi \otimes \psi'$ can be produced by an $f$-local LGA $(\alpha_F)^{-1}$ from a pure factorized state $\Psi_0 = \psi_0 \otimes \psi_0'$. Then $\psi$ has $g(r)$-decay of Schmidt coefficients for some $g(r)=\Or$ that depends only on $f(r)$. 
\end{lemma}

\begin{proof}
By Lemma  \ref{lma:bounded} and the results of \cite{Matsui}, $\Psi$, $\psi$, and $\psi'$ have the split property. Therefore for any half-line $A\subset\ZZ$ the corresponding GNS Hilbert spaces factorize into Hilbert spaces for $A$ and Hilbert spaces for $\bar A$, and the restrictions $\psi|_A$, $\psi'|_A$, $\Psi|_{A}$ can be described by density matrices $\rho_A$, $\rho'_A$ and $P_A$ in the Hilbert spaces for $A$ \cite{Matsui}. The restriction of $\Psi \circ \alpha_{F|_{\bar{A}}}$ on $\bar{A}$ is $f$-close far from the origin of $A$ to a pure factorized state, and therefore by Lemma \ref{lma:schmidt} the density matrix $P_A$ has $g(r)$-decay of Schmidt coefficients for some MDP function $g(r)=\Or$ that depends on $f$. Since $P_A = \rho_A \otimes \rho'_A$, the same is true for $\rho_A$ and $\rho'_A$. 

\end{proof}

\begin{lemma} \label{lma:12invertibleSRE}
Any truncation of an invertible state $\psi$ to any half-line is in a trivial phase.
\end{lemma}

\begin{proof}
Let $\psi$ be an invertible state on $\SA$ with an inverse $\psi'$ such that that $\Psi = \psi \otimes \psi'$ can be produced by an $f$-local LGA $(\alpha_F)^{-1}$ from a pure factorized state $\Psi_0 = \psi_0 \otimes \psi_0'$. Here $f(r)=\Or$ is an MDP function.

Let $\phi_k$ be a truncation of $\psi$ to $[k,\infty)$, and let $\phi'_k$ be a truncation of $\psi'$ to $[k,\infty)$. Since $(\phi_k \otimes \phi'_k) \circ \alpha_{F|_{[k,\infty)}}$ is $g$-close to $\psi_0 \otimes \psi'_0$ for some $g(r)=\Or$ that depends on $f(r)$ only, Lemma \ref{lma:almcstates} implies that $\phi_k$ is invertible with the inverse $\phi'_k$. Let $\tilde{\phi}_k$ be a pure state on $\SA^{(1)} \otimes \SA^{(2)}$, where $\SA^{(1)}$ and $\SA^{(2)}$ are two copies of $\SA$, with the following two properties: (1) its restriction to  $(-\infty,k)$ coincides with the factorized pure state $(\psi_0 \otimes \psi_0)|_{(-\infty,k)}$; (2) its restriction to $A=[k,+\infty)$ is a purification of $\psi|_{A}$ on $\SA^{(1)}_A$ by some state on $\SA^{(2)}_A$ which is $f'$-close to a factorized state for some $f'(r)=\Or$ that depends on $f$ only. The existence of such a state follows from Lemma \ref{lma:invSchmidt}. Similarly, we can define a state $\tilde{\phi}'_k$ on $\SA'^{(1)} \otimes \SA'^{(2)}$ which is an inverse of $\tilde{\phi}_k$. We let $\alpha_{\tilde{G}}$ be an LGA that maps  $\tilde{\phi}_k \otimes \tilde{\phi}'_k$ to   $\psi_0 \otimes \psi_0 \otimes \psi'_0 \otimes \psi'_0$. This LGA can be chosen to be the identity on $(-\infty,k)$ and $g'$-local with $g'(r)=\Or$ depending on $g(r)$ only.

Let us first show that the state $\psi_0 \otimes \phi_k \otimes \psi_0' \otimes \psi_0$ can be transformed into $\tilde{\phi}_k \otimes \psi_0' \otimes \psi_0$ by applying a certain LGA $\beta$, then an LGA $h_2$-localized at $k$ (for some MDP function $h_2(r)=\Or$), and finally the inverse of $\beta$. The sequence of steps is shown schematically in Fig. 1, where it is also indicated that $\beta$ is a composition of two LGAs described in more detail below. 

Equivalently, we can apply $\beta$ to both states and then show that the resulting states are related by an LGA $h_2$-localized at $k$. $\beta$ is a composition of two LGAs. The first one has the form $\Id \otimes \Id\otimes \alpha_G$, where $\alpha_G$ maps $\psi'_0\otimes\psi_0$ to $\phi'_k\otimes\phi_k$, see Fig. 1. The second one has the form $\Id \otimes \alpha_{F|_{[k+1,\infty)}} \otimes \Id$. The product of these two LGAs maps the two states of interest to the states $\Xi_k\otimes\phi_k$ and $\tilde{\Xi}_k \otimes\phi_k$, where both $\Xi_k$ and $\tilde{\Xi}$ are $h_1$-close on $\ZZ$ far from $k$ to the same pure factorized state on $\SA \otimes \SA\otimes\SA'$. Here $h_1(r)=\Or$ is an MDP function which depends only on $f(r)$. By Lemma \ref{lma:almcstates} $\Xi_k$ and $\tilde{\Xi}_{k}$ are related by a conjugation with a  unitary observable which is $h_2$-localized at $k$ for some $h_2(r)=\Or$ depending only on $f(r)$. 

Let us fix $L\in \NN$. Since the state $\tilde{\phi}'_k$ on $\SA^{(1)}\otimes\SA^{(2)}$ is invertible, by Lemma \ref{lma:invSchmidt} its restriction to $(-\infty,k+L)$ has $\tilde{g}(r)$-decay of Schmidt coefficients with $\tilde{g}(r)$ depending only on $f(r)$. Also, since the restriction of the state $\tilde{\phi}'_k$ to $(-\infty,k)$ is factorized, the nonzero Schmidt coeffients are the same as for the state $\tilde{\phi}_k|_{[k,k+L)}$. In particular, the number of nonzero Schmidt coefficients does not exceed the dimension of $\SA_{[k,k+L)}\otimes\SA_{[k,k+L)}$.  

Let us tensor  $\SA^{(1)}\otimes\SA^{(2)}$ with $\SA^{(3)} \otimes \SA^{(4)}$. By the proof of Lemma \ref{lma:schmidt} one can find a state $\chi_k$ on $\SA^{(1)}_{[k,k+L)}\otimes\SA^{(2)}_{[k,k+L)}\otimes\SA^{(3)}_{[k,k+L)}\otimes\SA^{(4)}_{[k,k+L)}$ which is a purification of $\tilde{\phi}_k|_{[k,k+L)}$ on $\SA^{(1)}_{[k,k+L)}\otimes\SA^{(2)}_{[k,k+L)}$ 
and such that its restriction to $\SA^{(3)}_{[k,k+L)} \otimes\SA^{(4)}_{[k,k+L)}$ is $h$-close to a factorized state far from $k+L$ for some $h(r)=\Or$ that depends on $f(r)$ only. The state $\chi_k$ can be produced from $(\psi_0 \otimes \psi_0 \otimes \psi_0 \otimes \psi_0)|_{[k,k+L)}$ by a conjugation with a unitary $\CV^{(0)}\in\SA^{(1)}_{[k,k+L)}\otimes\SA^{(2)}_{[k,k+L)}\otimes\SA^{(3)}_{[k,k+L)}\otimes\SA^{(4)}_{[k,k+L)}$.

Consider the states $\psi_0|_{[k,\infty)} \otimes \psi_0|_{[k,\infty)} \otimes \tilde{\phi_k}|_{[k,\infty)}$ and $\chi_k \otimes (\psi_0|_{[k+L,\infty)} \otimes \psi_0|_{[k+L,\infty)} \otimes \tilde{\phi}_{k+L})$ on the algebra $\SA^{(3)}_{[k,\infty)}\otimes \SA^{(4)}_{[k,\infty)}\otimes \SA^{(1)}_{[k,\infty)}\otimes \SA^{(2)}_{[k,\infty)}$ (see Fig. \ref{fig:Lemma46}). They are stably related by an almost local unitary which is $h'$-localized at $k+L$, with $h'(r)=\Or$ which depends only on $f(r)$. Indeed, we can first tensor both states with $\psi'_0 \otimes \psi'_0 \otimes \psi_0 \otimes \psi_0$ restricted to $[k,+\infty)$, then produce on these ancillas the state $\tilde{\phi}'_k \otimes \tilde{\phi}_k$ with a $g'$-local LGA, and apply $\alpha_{\tilde{G}|_{(-\infty,k+L)}}\circ \alpha_{\tilde{G}|_{[k+L,\infty)}}$ acting on the tensor product of the original states and $\tilde{\phi}'_k$ in an obvious way. In the same way as in the previous paragraph one can argue that these states are related by an almost local at $k+L$ unitary. This implies that the original states are also related by almost local at $k+L$ unitary $\CU^{(0)}$. 

We have shown that the states $\tilde{\phi}_k$ and $\tilde{\phi}_{k+L}$ are related (after tensoring with a total of six copies of factorized states $\psi_0$ and $\psi'_0$) by a conjugation with an almost local unitary $\CU^{(0)}$ followed by a conjugation with a strictly local unitary $\CV^{(0)}$. Similarly, we can construct such unitaries $\CU^{(n)}$, $\CV^{(n)}$ relating stabilizations of $\tilde{\phi}_{k+nL}$ and $\tilde{\phi}_{k+(n+1)L}$. By Lemma \ref{lma:1dLGA} we can choose $L$ such that an ordered product of conjugations with $\prod_{n=0}^{\infty} \CU^{(n)}$ is an LGA. Since $\CV^{(n)}$ commute with $\CU^{(n')}$ for $n<n'$, an ordered product $\prod_{n=0}^{\infty} \CV^{(n)} \CU^{(n)}$ is equal to $\prod_{m=0}^\infty \CV^{(m)}\prod_{n=0}^\infty\CU^{(n)}$ and therefore is also an LGA. By construction it relates $\tilde{\phi}_k$ to a factorized state, and therefore $\phi_k$ is in the trivial phase.

\end{proof}

\begin{theorem} \label{thm:invertibleSRE}
Any invertible bosonic 1d state $\psi$ is in a trivial phase.
\end{theorem}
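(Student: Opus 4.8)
The plan is to bootstrap from Lemma~\ref{lma:12invertibleSRE}, which already supplies the crucial local ingredient: for each $k$, the truncation $\phi_k$ of $\psi$ to $[k,\infty)$ is invertible, and $\phi_k\otimes\psi_0'\otimes\psi_0$ can be transformed into $\phi_{k+1}\otimes\psi_0'\otimes\psi_0$ by an LGA that is $h$-localized at the single site $k$, with a decay function $h(r)=\Or$ that is uniform in $k$. The theorem asserts that $\psi$ itself is in the trivial phase, so I must assemble these local moves into a single LGA (after stacking with a fixed unentangled ancilla system) that disentangles $\psi$ entirely.

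\textbf{Step 1 (telescoping the truncations).} First I would observe that the truncation $\phi_k$ agrees with $\psi$ on the half-line $[k,\infty)$ and looks like the factorized state $\psi_0$ to the left of $k$. Thus as $k\to-\infty$ the truncations $\phi_k$ converge to $\psi$ (on any fixed local algebra the restriction stabilizes once $k$ is small enough), while as $k\to+\infty$ they converge to the factorized state $\psi_0$. The family $\{\phi_k\}$ therefore interpolates between $\psi$ and a pure factorized state through a sequence of states each differing from its neighbor by a local move at site $k$.

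\textbf{Step 2 (composing the local LGAs).} Working in the stacked system $\SA\otimes\SA'\otimes\SA$ with the fixed ancilla state $\psi_0'\otimes\psi_0$, each step of Step~1 is implemented by an LGA $\beta_k$ that is $h$-localized at $k$. I would then form the ordered infinite composition $\cdots\circ\beta_{k-1}\circ\beta_k\circ\beta_{k+1}\circ\cdots$, sweeping $k$ from $+\infty$ down to $-\infty$. The key point is that the generators are almost-local observables localized at distinct sites $k$ with a common decay profile $h$, so by the discussion following eq.~(\ref{eq:autdecomposition}) such an ordered product of exponentials of $h$-localized generators is itself a representative of a single LGA $\alpha_F$ for some 0-chain $F$ whose locality function depends only on $h$. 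This composite LGA carries $\psi\otimes\psi_0'\otimes\psi_0$ to $\psi_0\otimes\psi_0'\otimes\psi_0$, a pure factorized state, exhibiting $\psi$ as SRE after stacking; hence $\psi$ is in the trivial phase.

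\textbf{The main obstacle} is the convergence and locality of the infinite composition in Step~2. Unlike a finite product, an ordered product indexed by all of $\ZZ$ must be shown to define a genuine LGA: the partial generators must assemble into a convergent 0-chain, and one must verify that the limit actually sends $\psi$ to the factorized state rather than merely approximating it on every finite region. I expect the uniform $\Or$ decay of $h$ (independent of $k$) to be exactly what makes the ordered exponentials compose into an almost-local 0-chain via the composition law (\ref{eq:LGAcomposition}), so that the Lieb--Robinson bounds underlying Lemma~A.2 of \cite{ThoulessHall} apply. Care is needed because the intermediate states $\phi_k$ are not factorized on the right half-line; however, since each $\beta_k$ only disentangles the bond across site $k$ and the truncation structure guarantees $\phi_k$ is already factorized to the left of $k$, the generators at different sites have genuinely local action and the telescoping is well-posed. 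I would close by noting that because the ancilla state $\psi_0'\otimes\psi_0$ is factorized and fixed throughout, this is precisely a demonstration of stable LGA-equivalence of $\psi$ to a factorized state, which is the definition of the trivial phase $\tr\in\Phi$.
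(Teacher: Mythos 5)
Your proposal is correct in substance, but it takes a genuinely different route from the paper's proof of Theorem \ref{thm:invertibleSRE}. The paper never performs the two-sided sweep you describe. Instead it stacks four systems, $\psi_0\otimes\psi\otimes\psi_0'\otimes\psi_0$ on $\SA^{(1)}\otimes\SA^{(2)}\otimes\SA'^{(3)}\otimes\SA^{(4)}$, and compares this state with $(\omega_-\otimes\omega_+)\otimes\psi_0'\otimes\psi_0$, where $\omega_\pm$ are truncations of two copies of $\psi$ arranged diagonally across the cut at $0$ (a swap trick). The state $(\omega_-\otimes\omega_+)\otimes\psi_0'\otimes\psi_0$ is SRE by the proof of Lemma \ref{lma:12invertibleSRE}, whose implicit infinite composition always starts at an honest truncation, so identifying its limit is immediate: $\phi_k$ composed with finitely many moves equals $\phi_{N+1}$ exactly, and one only needs norm convergence of the transformed observables. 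The remaining discrepancy between $\psi_0\otimes\psi$ and $\omega_-\otimes\omega_+$ is then eliminated by a finite amount of surgery: tensoring in $\psi'\otimes\psi$, applying the restricted LGAs $\alpha_{F|_{\bar A}}$ and $\alpha_{F|_A}$, and invoking Lemmas \ref{lma:aux2} and \ref{lma:almcstates} to produce a single conjugation by an almost local unitary near the cut. Your route dispenses with the four-fold stacking and the swap construction entirely, at the price of having to control a doubly infinite composition; the paper's route pays with extra copies and bookkeeping but keeps every limit trivial to identify.

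The crux of your argument --- which you correctly flag as the main obstacle but only sketch --- is the identification of the image of $\psi$ under the doubly infinite composition. Since $\psi$ equals no $\phi_k$ but is only their weak-$*$ limit as $k\to-\infty$, exact telescoping is unavailable; you must prove $\psi\circ\gamma=\psi_0$, where $\gamma$ is the limit of the compositions $\gamma_{m,N}$ of the moves at sites $m,\dots,N$. The estimate that does this has three inputs: (i) for $\CA$ localized at site $p$, the observable $\gamma_{m,N}(\CA)$ is almost localized at $p$ with a decay function $h'$ uniform in $m,N$ (uniformity of $h$ in Lemma \ref{lma:12invertibleSRE} is exactly what is needed); (ii) $\psi$ and $\phi_m$ agree \emph{exactly} on $[m,\infty)$, so that $|\psi(\gamma_{m,N}(\CA))-\phi_m(\gamma_{m,N}(\CA))|\leq 2\|\CA\|\, h'(p-m)$; (iii) $\phi_m\circ\gamma_{m,N}=\phi_{N+1}$, whose value on $\CA$ is within $\|\CA\| f(N-p)$ of $\psi_0(\CA)$, provided all truncations are built over the same background factorized state (a consistency of choices you should state). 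Letting $m\to-\infty$ and $N\to+\infty$ then gives $\psi\circ\gamma=\psi_0$ exactly. Your stated justification (that $\phi_k$ is already factorized to the left of $k$, so the generators act locally) points at (i) but misses that the essential input is (ii), the exact agreement of $\psi$ with $\phi_m$ to the right of $m$. With that estimate supplied, your proof closes and is, if anything, more economical than the paper's.
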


\begin{proof}
Let $\psi'$ be an inverse state for $\psi$, and let $(\alpha_F)^{-1}$ be an $f$-local LGA that produces $(\psi \otimes \psi')$ from a pure factorized state $(\psi_0 \otimes \psi_0')$. It enough to show that the state $\psi_0 \otimes \psi \otimes \psi_0' \otimes \psi_0$ on $\SA^{(1)} \otimes \SA^{(2)} \otimes \SA'^{(3)} \otimes \SA^{(4)}$ is stably SRE.

Let $A$ be a half-line $[0,\infty)$. Since $\SA^{(1)}_A$ is identical to $\SA^{(2)}_A$ there is a pure state on $\SA^{(1)}_{\bar{A}} \otimes \SA^{(2)}_{A}$ identical to $\psi$. Let $\omega_-$ its truncation to $\bar{A}$. Similarly, let $\omega_+$ be a truncation to $A$ of a pure state on $\SA^{(2)}_{\bar{A}} \otimes \SA^{(1)}_A$ identical to $\psi$. By Lemma \ref{lma:12invertibleSRE}, the state $(\omega_- \otimes \omega_+) \otimes \psi_0' \otimes \psi_0$ is stably SRE. Therefore it is enough to show that $(\omega_- \otimes \omega_+) \otimes \psi_0' \otimes \psi_0$ and $\psi_0 \otimes \psi \otimes \psi_0' \otimes \psi_0$ are related by an LGA. In fact,  as we show below, such an LGA can be generated by an almost local observable. 

First we apply to both states an LGA which acts only on the last two factors and produces $\psi' \otimes \psi$ out of $\psi'_0\otimes\psi_0$. This gives us $\psi_0 \otimes \psi \otimes \psi' \otimes \psi$ and $(\omega_- \otimes \omega_+) \otimes \psi' \otimes \psi$.


Second, we apply a composition of $\alpha_{F|_{\bar{A}}}$ and $\alpha_{F|_{A}}$ on $\SA^{(2)} \otimes \SA'^{(3)}$ to states $(\omega_- \otimes \omega_+) \otimes \psi'$ and $\psi_0\otimes\psi\otimes\psi'$ on $\SA^{(1)}\otimes\SA^{(2)}\otimes\SA'^{(3)}$.  This transformation is shown schematically in Fig. \ref{fig:TNstate}. By Lemma \ref{lma:aux2} this gives two states which are both  $g$-close far from $0$ to a pure factorized state $\psi_0\otimes\psi_0\otimes\psi'_0$ on $\SA^{(1)}\otimes\SA^{(2)}\otimes\SA'^{(3)}$, for some MDP function $g(r)=\Or$. By Lemma \ref{lma:almcstates} these states are related by a conjugation with an almost local unitary. Applying the first two steps backwards we conclude that the two  original states are related by an LGA generated by an almost local observable.


\end{proof}

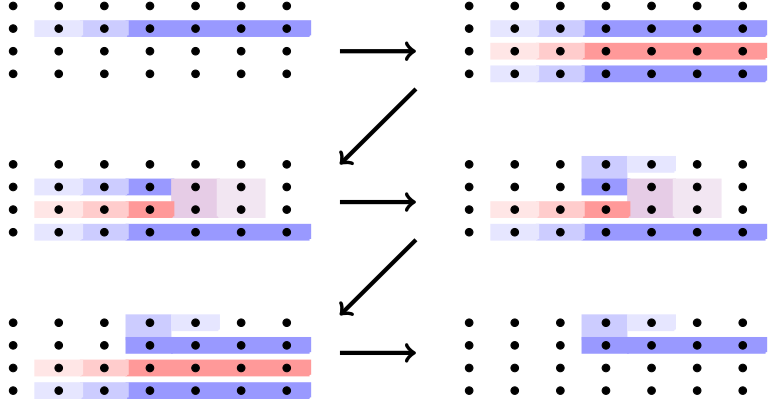
\begin{figure}
\centering

\begin{tikzpicture}

\begin{scope}[scale=.3,yshift=0cm,xshift=-20cm]

\draw [color=blue!40, fill=blue!40, very thick] (7,0.7) to (-1,0.7) -- (-1,1.3) -- (7,1.3) -- (7,0.7);

\draw [color=blue!20, fill=blue!20, very thick] (-1,0.7) to (-3,0.7) -- (-3,1.3) -- (-1,1.3) -- (-1,0.7);

\draw [color=blue!10, fill=blue!10, very thick] (-3,0.7) to (-5,0.7) -- (-5,1.3) -- (-3,1.3) -- (-3,0.7);

\foreach \x in {-6,-4,...,6}{                           
    \foreach \y in {-1,...,2}{                       
    \node[draw,black,circle,inner sep=1pt,fill] at (\x,\y) {}; 
    }
}
\end{scope}

\begin{scope}[scale=.3,yshift=0cm]

\draw [color=blue!40, fill=blue!40, very thick] (7,0.7) to (-1,0.7) -- (-1,1.3) -- (7,1.3) -- (7,0.7);

\draw [color=blue!20, fill=blue!20, very thick] (-1,0.7) to (-3,0.7) -- (-3,1.3) -- (-1,1.3) -- (-1,0.7);

\draw [color=blue!10, fill=blue!10, very thick] (-3,0.7) to (-5,0.7) -- (-5,1.3) -- (-3,1.3) -- (-3,0.7);

\draw [color=red!40, fill=red!40, very thick] (7,0.7-1) to (-1,0.7-1) -- (-1,1.3-1) -- (7,1.3-1) -- (7,0.7-1);

\draw [color=red!20, fill=red!20, very thick] (-1,0.7-1) to (-3,0.7-1) -- (-3,1.3-1) -- (-1,1.3-1) -- (-1,0.7-1);

\draw [color=red!10, fill=red!10, very thick] (-3,0.7-1) to (-5,0.7-1) -- (-5,1.3-1) -- (-3,1.3-1) -- (-3,0.7-1);

\draw [color=blue!40, fill=blue!40, very thick] (7,0.7-2) to (-1,0.7-2) -- (-1,1.3-2) -- (7,1.3-2) -- (7,0.7-2);

\draw [color=blue!20, fill=blue!20, very thick] (-1,0.7-2) to (-3,0.7-2) -- (-3,1.3-2) -- (-1,1.3-2) -- (-1,0.7-2);

\draw [color=blue!10, fill=blue!10, very thick] (-3,0.7-2) to (-5,0.7-2) -- (-5,1.3-2) -- (-3,1.3-2) -- (-3,0.7-2);

\foreach \x in {-6,-4,...,6}{                           
    \foreach \y in {-1,...,2}{                       
    \node[draw,black,circle,inner sep=1pt,fill] at (\x,\y) {}; 
    }
}
\end{scope}

\begin{scope}[scale=.3,yshift=-7cm,xshift=-20cm]

\draw [color=blue!40, fill=blue!40, very thick] (1,0.7) to (-1,0.7) -- (-1,1.3) -- (1,1.3) -- (1,0.7);

\draw [color=blue!20, fill=blue!20, very thick] (-1,0.7) to (-3,0.7) -- (-3,1.3) -- (-1,1.3) -- (-1,0.7);

\draw [color=blue!10, fill=blue!10, very thick] (-3,0.7) to (-5,0.7) -- (-5,1.3) -- (-3,1.3) -- (-3,0.7);

\draw [color=violet!20, fill=violet!20, very thick] (1,0.7-1) to (3,0.7-1) -- (3,1.3) -- (1,1.3) -- (1,0.7-1);

\draw [color=violet!10, fill=violet!10, very thick] (3,0.7-1) to (5,0.7-1) -- (5,1.3) -- (3,1.3) -- (3,0.7-1);

\draw [color=red!40, fill=red!40, very thick] (1,0.7-1) to (-1,0.7-1) -- (-1,1.3-1) -- (1,1.3-1) -- (1,0.7-1);

\draw [color=red!20, fill=red!20, very thick] (-1,0.7-1) to (-3,0.7-1) -- (-3,1.3-1) -- (-1,1.3-1) -- (-1,0.7-1);

\draw [color=red!10, fill=red!10, very thick] (-3,0.7-1) to (-5,0.7-1) -- (-5,1.3-1) -- (-3,1.3-1) -- (-3,0.7-1);

\draw [color=blue!40, fill=blue!40, very thick] (7,0.7-2) to (-1,0.7-2) -- (-1,1.3-2) -- (7,1.3-2) -- (7,0.7-2);

\draw [color=blue!20, fill=blue!20, very thick] (-1,0.7-2) to (-3,0.7-2) -- (-3,1.3-2) -- (-1,1.3-2) -- (-1,0.7-2);

\draw [color=blue!10, fill=blue!10, very thick] (-3,0.7-2) to (-5,0.7-2) -- (-5,1.3-2) -- (-3,1.3-2) -- (-3,0.7-2);

\foreach \x in {-6,-4,...,6}{                           
    \foreach \y in {-1,...,2}{                       
    \node[draw,black,circle,inner sep=1pt,fill] at (\x,\y) {}; 
    }
}
\end{scope}

\begin{scope}[scale=.3,yshift=-7cm,xshift=0cm]


\draw [color=blue!20, fill=blue!20, very thick] (1,1.3) to (-1,1.3) -- (-1,2.3) -- (0.87,2.3) -- (0.87,1.3);

\draw [color=blue!10, fill=blue!10, very thick] (3,1.7) to (1,1.7) -- (1,2.3) -- (3,2.3) -- (3,1.7);

\draw [color=blue!40, fill=blue!40, very thick] (1,0.7) to (-1,0.7) -- (-1,1.3) -- (1,1.3) -- (1,0.7);


\draw [color=violet!20, fill=violet!20, very thick] (1,0.7-1) to (3,0.7-1) -- (3,1.3) -- (1,1.3) -- (1,0.7-1);

\draw [color=violet!10, fill=violet!10, very thick] (3,0.7-1) to (5,0.7-1) -- (5,1.3) -- (3,1.3) -- (3,0.7-1);

\draw [color=red!40, fill=red!40, very thick] (1,0.7-1) to (-1,0.7-1) -- (-1,1.3-1) -- (1,1.3-1) -- (1,0.7-1);

\draw [color=red!20, fill=red!20, very thick] (-1,0.7-1) to (-3,0.7-1) -- (-3,1.3-1) -- (-1,1.3-1) -- (-1,0.7-1);

\draw [color=red!10, fill=red!10, very thick] (-3,0.7-1) to (-5,0.7-1) -- (-5,1.3-1) -- (-3,1.3-1) -- (-3,0.7-1);

\draw [color=blue!40, fill=blue!40, very thick] (7,0.7-2) to (-1,0.7-2) -- (-1,1.3-2) -- (7,1.3-2) -- (7,0.7-2);

\draw [color=blue!20, fill=blue!20, very thick] (-1,0.7-2) to (-3,0.7-2) -- (-3,1.3-2) -- (-1,1.3-2) -- (-1,0.7-2);

\draw [color=blue!10, fill=blue!10, very thick] (-3,0.7-2) to (-5,0.7-2) -- (-5,1.3-2) -- (-3,1.3-2) -- (-3,0.7-2);

\foreach \x in {-6,-4,...,6}{                           
    \foreach \y in {-1,...,2}{                       
    \node[draw,black,circle,inner sep=1pt,fill] at (\x,\y) {}; 
    }
}
\end{scope}

\begin{scope}[scale=.3,yshift=-14cm,xshift=-20cm]

\draw [color=blue!20, fill=blue!20, very thick] (1,1.3) to (-1,1.3) -- (-1,2.3) -- (0.87,2.3) -- (0.87,1.3);

\draw [color=blue!10, fill=blue!10, very thick] (3,1.7) to (1,1.7) -- (1,2.3) -- (3,2.3) -- (3,1.7);

\draw [color=blue!40, fill=blue!40, very thick] (1,0.7) to (-1,0.7) -- (-1,1.3) -- (1,1.3) -- (1,0.7);

\draw [color=blue!40, fill=blue!40, very thick] (7,0.7) to (1,0.7) -- (1,1.3) -- (7,1.3) -- (7,0.7);



\draw [color=red!40, fill=red!40, very thick] (7,0.7-1) to (-1,0.7-1) -- (-1,1.3-1) -- (7,1.3-1) -- (7,0.7-1);

\draw [color=red!20, fill=red!20, very thick] (-1,0.7-1) to (-3,0.7-1) -- (-3,1.3-1) -- (-1,1.3-1) -- (-1,0.7-1);

\draw [color=red!10, fill=red!10, very thick] (-3,0.7-1) to (-5,0.7-1) -- (-5,1.3-1) -- (-3,1.3-1) -- (-3,0.7-1);

\draw [color=blue!40, fill=blue!40, very thick] (7,0.7-2) to (-1,0.7-2) -- (-1,1.3-2) -- (7,1.3-2) -- (7,0.7-2);

\draw [color=blue!20, fill=blue!20, very thick] (-1,0.7-2) to (-3,0.7-2) -- (-3,1.3-2) -- (-1,1.3-2) -- (-1,0.7-2);

\draw [color=blue!10, fill=blue!10, very thick] (-3,0.7-2) to (-5,0.7-2) -- (-5,1.3-2) -- (-3,1.3-2) -- (-3,0.7-2);

\foreach \x in {-6,-4,...,6}{                           
    \foreach \y in {-1,...,2}{                       
    \node[draw,black,circle,inner sep=1pt,fill] at (\x,\y) {}; 
    }
}
\end{scope}

\begin{scope}[scale=.3,yshift=-14cm,xshift=0cm]

\draw [color=blue!20, fill=blue!20, very thick] (1,1.3) to (-1,1.3) -- (-1,2.3) -- (0.87,2.3) -- (0.87,1.3);

\draw [color=blue!10, fill=blue!10, very thick] (3,1.7) to (1,1.7) -- (1,2.3) -- (3,2.3) -- (3,1.7);

\draw [color=blue!40, fill=blue!40, very thick] (1,0.7) to (-1,0.7) -- (-1,1.3) -- (1,1.3) -- (1,0.7);

\draw [color=blue!40, fill=blue!40, very thick] (7,0.7) to (1,0.7) -- (1,1.3) -- (7,1.3) -- (7,0.7);









\foreach \x in {-6,-4,...,6}{                           
    \foreach \y in {-1,...,2}{                       
    \node[draw,black,circle,inner sep=1pt,fill] at (\x,\y) {}; 
    }
}
\end{scope}

\draw[ultra thick, ->] (-3.5,0) -- (-3.5+1,0);

\draw[ultra thick, ->] (-3.5+1,-0.5) -- (-3.5,-1.5);

\draw[ultra thick, ->] (-3.5,-2) -- (-3.5+1,-2);

\draw[ultra thick, ->] (-3.5+1,-0.5-2) -- (-3.5,-1.5-2);

\draw[ultra thick, ->] (-3.5,-2-2) -- (-3.5+1,-2-2);

\end{tikzpicture}

\caption{The sequence of steps that transforms $\psi_0 \otimes \phi_k \otimes \psi'_0 \otimes \psi_0$ into $\tilde{\phi}_{k} \otimes \psi'_0 \otimes \psi_0$. The regions shaded in blue denote the entangled parts of $\psi$, while the regions shaded in red denote the entangled parts of $\psi'$. The regions where the state is close to a factorized state are schematically indicated by a faded shading.
}
\label{fig:Lemma44}
\end{figure}

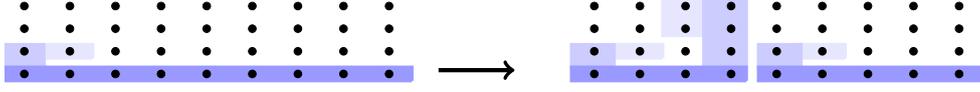
\begin{figure}
\centering

\begin{tikzpicture}

\begin{scope}[scale=.3,yshift=-7cm,xshift=-25cm]


\draw [color=blue!20, fill=blue!20, very thick] (0.87-4,1.3) to (-5+0.2,1.3) -- (-5+0.2,2.3) -- (0.87-4,2.3) -- (0.87-4,1.3);

\draw [color=blue!10, fill=blue!10, very thick] (-1,1.7) to (-3,1.7) -- (-3,2.3) -- (-1,2.3) -- (-1,1.7);

\draw [color=blue!40, fill=blue!40, very thick] (13,0.7) to (-5+0.2,0.7) -- (-5+0.2,1.3) -- (13,1.3) -- (13,0.7);










\foreach \x in {-4,-2,...,12}{                           
    \foreach \y in {1,...,4}{                       
    \node[draw,black,circle,inner sep=1pt,fill] at (\x,\y) {}; 
    }
}
\end{scope}

\begin{scope}[scale=.3,yshift=-7cm,xshift=0cm]


\draw [color=blue!10, fill=blue!10, very thick] (-1+10-2,1.7) to (-3+10-2,1.7) -- (-3+10-2,2.3) -- (-1+10-2,2.3) -- (-1+10-2,1.7);

\draw [color=blue!10, fill=blue!10, very thick] (-1,1.7) to (-3,1.7) -- (-3,2.3) -- (-1,2.3) -- (-1,1.7);

\draw [color=blue!10, fill=blue!10, very thick] (-1+2,2.7) to (-3+2,2.7) -- (-3+2,4.3) -- (-1+2,4.3) -- (-1+2,2.7);

\draw [color=blue!20, fill=blue!20, very thick] (0.87-4,1.3) to (-5,1.3) -- (-5,2.3) -- (0.87-4,2.3) -- (0.87-4,1.3);

\draw [color=blue!20, fill=blue!20, very thick] (0.87-4+6-0.2,1.3) to (-5+6-0.2,1.3) -- (-5+6-0.2,4.3) -- (0.87-4+6-0.2,4.3) -- (0.87-4+6-0.2,1.3);

\draw [color=blue!20, fill=blue!20, very thick] (0.87-4+10-2+0.2,1.3) to (-5+10-2+0.2,1.3) -- (-5+10-2+0.2,2.3) -- (0.87-4+10-2+0.2,2.3) -- (0.87-4+10-2+0.2,1.3);

\draw [color=blue!40, fill=blue!40, very thick] (3-0.13-0.2,0.7) to (-5,0.7) -- (-5,1.3) -- (3-0.13-0.2,1.3) -- (3-0.13-0.2,0.7);

\draw [color=blue!40, fill=blue!40, very thick] (13,0.7) to (5-2+0.2,0.7) -- (5-2+0.2,1.3) -- (13,1.3) -- (13,0.7);










\foreach \x in {-4,-2,...,12}{                           
    \foreach \y in {1,...,4}{                       
    \node[draw,black,circle,inner sep=1pt,fill] at (\x,\y) {}; 
    }
}
\end{scope}



\draw[ultra thick, ->] (-3.25,-1.75) -- (-3.25+1,-1.75);



\end{tikzpicture}

\caption{States $\tilde{\phi}_k$ and $\chi_{k} \otimes (\psi_0|_{k+L,\infty} \otimes \psi_0|_{k+L,\infty} \otimes \tilde{\phi}_{k+L})$
}
\label{fig:Lemma46}
\end{figure}

\begin{figure}  \label{fig:Theorem1}
\centering

\begin{tikzpicture}

\begin{scope}[scale=.3,xshift=-20cm]
\draw [color=blue!40, fill=blue!40, very thick] (-7,0.7) to (7,0.7) -- (7,1.3) -- (-7,1.3) -- (-7,0.7);

\foreach \x in {-6,-4,...,6}{                           
    \foreach \y in {-1,0,...,2}{                       
    \node[draw,black,circle,inner sep=1pt,fill] at (\x,\y) {}; 
    }
}
\end{scope}

\begin{scope}[scale=.3]
\draw [color=blue!40, fill=blue!40, very thick] (-7,0.7) to (7,0.7) -- (7,1.3) -- (-7,1.3) -- (-7,0.7);

\draw [color=red!40, fill=red!40, very thick] (-7,0.7-1) to (7,0.7-1) -- (7,1.3-1) -- (-7,1.3-1) -- (-7,0.7-1);

\draw [color=blue!40, fill=blue!40, very thick] (-7,0.7-2) to (7,0.7-2) -- (7,1.3-2) -- (-7,1.3-2) -- (-7,0.7-2);

\foreach \x in {-6,-4,...,6}{                           
    \foreach \y in {-1,0,...,2}{                       
    \node[draw,black,circle,inner sep=1pt,fill] at (\x,\y) {}; 
    }
}
\end{scope}

\begin{scope}[scale=.3,yshift=-7cm,xshift=-20cm]
\draw [color=blue!40, fill=blue!40, very thick] (-1,0.7) to (1,0.7) -- (1,1.3) -- (-1,1.3) -- (-1,0.7);

\draw [color=red!40, fill=red!40, very thick] (-1,0.7-1) to (1,0.7-1) -- (1,1.3-1) -- (-1,1.3-1) -- (-1,0.7-1);

\draw [color=violet!20, fill=violet!20, very thick] (-3,0.7-1) to (-1,0.7-1) -- (-1,1.3) -- (-3,1.3) -- (-3,0.7-1);

\draw [color=violet!10, fill=violet!10, very thick] (-5,0.7-1) to (-3,0.7-1) -- (-3,1.3) -- (-5,1.3) -- (-5,0.7-1);

\draw [color=violet!20, fill=violet!20, very thick] (1,0.7-1) to (3,0.7-1) -- (3,1.3) -- (1,1.3) -- (1,0.7-1);

\draw [color=violet!10, fill=violet!10, very thick] (3,0.7-1) to (5,0.7-1) -- (5,1.3) -- (3,1.3) -- (3,0.7-1);

\draw [color=blue!40, fill=blue!40, very thick] (-7,0.7-2) to (7,0.7-2) -- (7,1.3-2) -- (-7,1.3-2) -- (-7,0.7-2);

\foreach \x in {-6,-4,...,6}{                           
    \foreach \y in {-1,0,...,2}{                       
    \node[draw,black,circle,inner sep=1pt,fill] at (\x,\y) {}; 
    }
}
\end{scope}

\begin{scope}[scale=.3,yshift=-7cm]
\draw [color=blue!40, fill=blue!40, very thick] (-1,0.7) to (1,0.7+1) -- (1,1.3+1) -- (-1,1.3) -- (-1,0.7);

\draw [color=blue!40, fill=blue!40, very thick] (-1,0.7+1) to (1,0.7) -- (1,1.3) -- (-1,1.3+1) -- (-1,0.7+1);

\draw [color=red!40, fill=red!40, very thick] (-1,0.7-1) to (1,0.7-1) -- (1,1.3-1) -- (-1,1.3-1) -- (-1,0.7-1);

\draw [color=blue!20, fill=blue!20, very thick] (-3,0.7+1) to (-1,0.7+1) -- (-1,1.3+1) -- (-3,1.3+1) -- (-3,0.7+1);

\draw [color=blue!10, fill=blue!10, very thick] (-5,0.7+1) to (-3,0.7+1) -- (-3,1.3+1) -- (-5,1.3+1) -- (-5,0.7+1);

\draw [color=blue!20, fill=blue!20, very thick] (3,0.7+1) to (1,0.7+1) -- (1,1.3+1) -- (3,1.3+1) -- (3,0.7+1);

\draw [color=blue!10, fill=blue!10, very thick] (5,0.7+1) to (3,0.7+1) -- (3,1.3+1) -- (5,1.3+1) -- (5,0.7+1);

\draw [color=violet!20, fill=violet!20, very thick] (-3,0.7-1) to (-1,0.7-1) -- (-1,1.3) -- (-3,1.3) -- (-3,0.7-1);

\draw [color=violet!10, fill=violet!10, very thick] (-5,0.7-1) to (-3,0.7-1) -- (-3,1.3) -- (-5,1.3) -- (-5,0.7-1);

\draw [color=violet!20, fill=violet!20, very thick] (1,0.7-1) to (3,0.7-1) -- (3,1.3) -- (1,1.3) -- (1,0.7-1);

\draw [color=violet!10, fill=violet!10, very thick] (3,0.7-1) to (5,0.7-1) -- (5,1.3) -- (3,1.3) -- (3,0.7-1);

\draw [color=blue!40, fill=blue!40, very thick] (-7,0.7-2) to (7,0.7-2) -- (7,1.3-2) -- (-7,1.3-2) -- (-7,0.7-2);

\foreach \x in {-6,-4,...,6}{                           
    \foreach \y in {-1,0,...,2}{                       
    \node[draw,black,circle,inner sep=1pt,fill] at (\x,\y) {}; 
    }
}
\end{scope}

\begin{scope}[scale=.3,yshift=-14cm,xshift=-20cm]
\draw [color=blue!40, fill=blue!40, very thick] (-1,0.7) to (1,0.7+1) -- (1,1.3+1) -- (-1,1.3) -- (-1,0.7);

\draw [color=blue!40, fill=blue!40, very thick] (-1,0.7+1) to (1,0.7) -- (1,1.3) -- (-1,1.3+1) -- (-1,0.7+1);

\draw [color=blue!20, fill=blue!20, very thick] (-3,0.7+1) to (-1,0.7+1) -- (-1,1.3+1) -- (-3,1.3+1) -- (-3,0.7+1);

\draw [color=blue!10, fill=blue!10, very thick] (-5,0.7+1) to (-3,0.7+1) -- (-3,1.3+1) -- (-5,1.3+1) -- (-5,0.7+1);

\draw [color=blue!20, fill=blue!20, very thick] (3,0.7+1) to (1,0.7+1) -- (1,1.3+1) -- (3,1.3+1) -- (3,0.7+1);

\draw [color=blue!10, fill=blue!10, very thick] (5,0.7+1) to (3,0.7+1) -- (3,1.3+1) -- (5,1.3+1) -- (5,0.7+1);

\draw [color=blue!40, fill=blue!40, very thick] (-7,0.7) to (-1,0.7) -- (-1,1.3) -- (-7,1.3) -- (-7,0.7);

\draw [color=blue!40, fill=blue!40, very thick] (7,0.7) to (1,0.7) -- (1,1.3) -- (7,1.3) -- (7,0.7);

\draw [color=red!40, fill=red!40, very thick] (-7,0.7-1) to (7,0.7-1) -- (7,1.3-1) -- (-7,1.3-1) -- (-7,0.7-1);

\draw [color=blue!40, fill=blue!40, very thick] (-7,0.7-2) to (7,0.7-2) -- (7,1.3-2) -- (-7,1.3-2) -- (-7,0.7-2);

\foreach \x in {-6,-4,...,6}{                           
    \foreach \y in {-1,0,...,2}{                       
    \node[draw,black,circle,inner sep=1pt,fill] at (\x,\y) {}; 
    }
}
\end{scope}

\begin{scope}[scale=.3,yshift=-14cm]
\draw [color=blue!40, fill=blue!40, very thick] (-1,0.7) to (1,0.7+1) -- (1,1.3+1) -- (-1,1.3) -- (-1,0.7);

\draw [color=blue!40, fill=blue!40, very thick] (-1,0.7+1) to (1,0.7) -- (1,1.3) -- (-1,1.3+1) -- (-1,0.7+1);

\draw [color=blue!20, fill=blue!20, very thick] (-3,0.7+1) to (-1,0.7+1) -- (-1,1.3+1) -- (-3,1.3+1) -- (-3,0.7+1);

\draw [color=blue!10, fill=blue!10, very thick] (-5,0.7+1) to (-3,0.7+1) -- (-3,1.3+1) -- (-5,1.3+1) -- (-5,0.7+1);

\draw [color=blue!20, fill=blue!20, very thick] (3,0.7+1) to (1,0.7+1) -- (1,1.3+1) -- (3,1.3+1) -- (3,0.7+1);

\draw [color=blue!10, fill=blue!10, very thick] (5,0.7+1) to (3,0.7+1) -- (3,1.3+1) -- (5,1.3+1) -- (5,0.7+1);

\draw [color=blue!40, fill=blue!40, very thick] (-7,0.7) to (-1,0.7) -- (-1,1.3) -- (-7,1.3) -- (-7,0.7);

\draw [color=blue!40, fill=blue!40, very thick] (7,0.7) to (1,0.7) -- (1,1.3) -- (7,1.3) -- (7,0.7);



\foreach \x in {-6,-4,...,6}{                           
    \foreach \y in {-1,0,...,2}{                       
    \node[draw,black,circle,inner sep=1pt,fill] at (\x,\y) {}; 
    }
}
\end{scope}

\draw[ultra thick, ->] (-3.5,0) -- (-3.5+1,0);

\draw[ultra thick, ->] (-3.5+1,-0.5) -- (-3.5,-1.5);

\draw[ultra thick, ->] (-3.5,-2) -- (-3.5+1,-2);

\draw[ultra thick, ->] (-3.5+1,-0.5-2) -- (-3.5,-1.5-2);

\draw[ultra thick, ->] (-3.5,-2-2) -- (-3.5+1,-2-2);

\end{tikzpicture}

\caption{The sequence of steps that transforms $\psi_0 \otimes \psi \otimes \psi'_0 \otimes \psi_0$ into $(\omega_- \otimes \omega_+) \otimes \psi_0' \otimes \psi_0$. The coloring in the same as in Fig. \ref{fig:Lemma44}.
}
\label{fig:TNstate}
\end{figure}
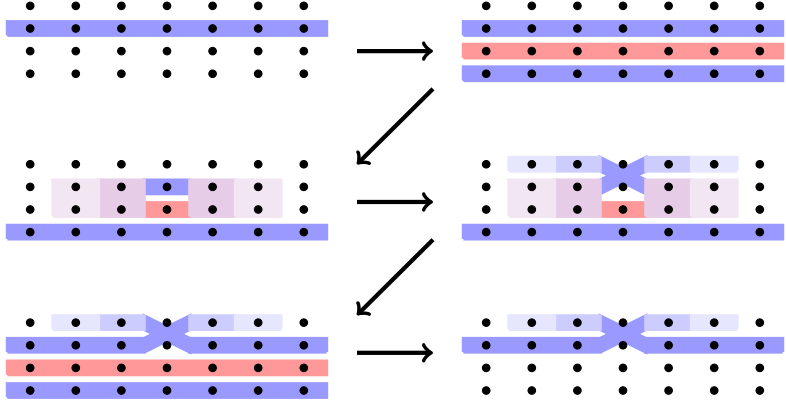

\subsection{A classification of invertible  phases with symmetries}

In this section we show that the index defined in Section 3 completely classifies invertible phases of 1d bosonic lattice systems with unitary symmetries.
\begin{theorem} \label{completeInvariance}
    A $G$-invariant invertible 1d state $\psi$ is in the trivial  $G$-invariant phase if and only if it has a trivial index.  
\end{theorem}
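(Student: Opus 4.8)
The plan is to prove the two implications separately. The forward direction (trivial phase $\Rightarrow$ trivial index) is essentially immediate from the structural properties of the index proven in Section~3, while the substance of the theorem is the converse, which I would obtain as a $G$-equivariant refinement of the disentangling argument of Theorem~\ref{thm:invertibleSRE}.

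For the forward direction, suppose $\psi$ represents $\tr_G$. Unwinding the definition of $G$-stable equivalence, after stacking with special $G$-invariant factorized states $\psi$ becomes $G$-equivariantly LGA-equivalent to a special factorized state. In Section~3 the index was shown to be additive under stacking, invariant under $G$-equivariant LGAs, and trivial on special factorized states; comparing indices along this equivalence forces the index of $\psi$ to be trivial.

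For the converse the hypothesis is that the cocycle $\nu$ of eq.~(\ref{2cocycle}) is a coboundary, $\nu(g,h)=\mu_{>}(g)\mu_{>}(h)/\mu_{>}(gh)$. Since $\psi$ is $G$-invariant its GNS representation carries an honest unitary $\hat R(g)$ implementing the global symmetry $\rho^g$. I would first absorb $\mu_{>}$ into the domain wall unitaries, setting $V^g_{>}:=\mu_{>}(g)^{-1}U^g_{>}$, so that $V^g_{>}V^h_{>}=V^{gh}_{>}$; by Remark~\ref{rem:Ug} these lie in $\CM_{>}$. Running the same argument with $\rho^g_{\leq}$ in place of $\rho^g_{>}$ produces a unitary $U^g_{\leq}\in\CM_{\leq}$ implementing $\rho^g_{\leq}$, with a cocycle cohomologous to $\nu^{-1}$ and hence also a coboundary, giving an honest representation $V^g_{\leq}\in\CM_{\leq}$. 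As $\CM_{>}$ and $\CM_{\leq}$ are mutual commutants and both $\hat R(g)$ and $U^g_{>}U^g_{\leq}$ implement $\rho^g$, they differ by a scalar; adjusting the residual character freedom one obtains a factorization $\hat R(g)=V^g_{\leq}V^g_{>}$ of the global-symmetry unitary into commuting \emph{honest} half-line representations. This honest splitting of the symmetry across the cut at the origin is the geometric content of trivial index, and is what will make the construction $G$-equivariant. The strategy is then to rerun the proof of Theorem~\ref{thm:invertibleSRE} --- forming the truncations $\omega_{\pm}$, gluing them as in Lemma~\ref{lma:12invertibleSRE}, and connecting $\psi_0\otimes\psi\otimes\psi_0'\otimes\psi_0$ to $(\omega_-\otimes\omega_+)\otimes\psi_0'\otimes\psi_0$ through the steps of Figs.~1 and~2 --- while tracking the symmetry throughout. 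Because $\psi$ is $G$-invariant the disentangling $0$-chain $F$ supplied by invertibility may be chosen $G$-invariant, so every restriction $\alpha_{F|_A}$ in eqs.~(\ref{eq:autdecomposition})--(\ref{eq:autdecomposition2}) is $G$-equivariant; the only ingredients that a priori break equivariance are the localized unitaries produced by Lemma~\ref{lma:almcstates}.

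The main obstacle is therefore to upgrade Lemma~\ref{lma:almcstates} to a $G$-equivariant statement: two $G$-invariant pure states that are $f$-close far from the origin to one and the same $G$-invariant factorized reference should be related by a \emph{$G$-invariant} localized generator $\CG$. The only obstruction to symmetrizing the connecting unitary is a relative projective phase of the half-line $G$-action at the cut, and the honest factorization $\hat R(g)=V^g_{\leq}V^g_{>}$ obtained above trivializes it; for a nonzero index no such honest splitting exists and the connecting unitary is necessarily non-$G$-invariant, which is exactly where the obstruction lives. Granting this equivariant version of Lemma~\ref{lma:almcstates}, together with the $G$-equivariant analogues of Lemmas~\ref{lma:schmidt} and~\ref{lma:12invertibleSRE}, every arrow in Figs.~1 and~2 becomes a $G$-equivariant LGA, so $\psi$ is $G$-stably equivalent to a special factorized state and lies in $\tr_G$. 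I expect essentially all of the work to sit in this equivariant Lemma~\ref{lma:almcstates}, and in particular in verifying that its obstruction is exactly the class $[\nu]\in H^2(G,U(1))$ rather than some a priori finer invariant of the two states.
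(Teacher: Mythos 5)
Your forward direction is fine and matches the paper (Section 3 already gives invariance of the index under $G$-equivariant LGAs, the stacking law, and triviality on factorized states). The converse, however, has two genuine gaps.

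First, you assert that ``because $\psi$ is $G$-invariant the disentangling $0$-chain $F$ supplied by invertibility may be chosen $G$-invariant.'' This does not follow from $G$-invariance of $\psi$, and in fact it cannot be assumed: if $F$ is $G$-invariant (with the trivial action on the inverse system), then the disentangled state $\Omega=\Psi\circ\alpha_F^{-1}$ is a $G$-invariant factorized state, and the Section 3 results immediately force the index of $\psi$ to be trivial. So the existence of a $G$-invariant $F$ is \emph{equivalent in strength} to the hypothesis you are trying to exploit, and producing such an $F$ from the trivial-index hypothesis is essentially the content of the theorem itself; it cannot be taken as a starting point. Relatedly, you never explain how to build $G$-\emph{invariant} truncations $\omega_\pm$: the eigenspaces of the half-line density matrix $\rho_k$ carry arbitrary irreducible representations of $G$ (honest ones precisely because the index is trivial --- this is where your observation about $V^g_{>}$ correctly enters), and to realize a state with the same spectrum \emph{and the same representation content} that is close to a factorized state, the paper must first stack with an ancilla $\SA'$ whose sites carry $V=\CC[G]$ and invoke Lemma \ref{lma:rep} on exponential growth of multiplicities in $V^{\otimes L}$. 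The non-equivariant Lemma \ref{lma:schmidt} does not do this.

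Second, and more fundamentally, your proposed $G$-equivariant upgrade of Lemma \ref{lma:almcstates} is false as stated, and the obstruction is not where you place it. Two $G$-invariant pure states, each $f$-close far from the origin to the same $G$-invariant factorized reference, have GNS vectors transforming in one-dimensional representations $\chi_1,\chi_2$ of $G$; if $\CG$ is $G$-invariant then conjugation by $e^{i\CG}$ preserves the character, so no $G$-invariant generator can connect them when $\chi_1\neq\chi_2$. This is an $H^1(G,U(1))$ obstruction, not a projective phase, and the honest factorization $\hat R(g)=V^g_{\leq}V^g_{>}$ does nothing to remove it. A concrete counterexample to your lemma (with completely trivial index) is a factorized state whose site vectors all carry a fixed nontrivial character: consecutive truncations differ by that character, and no $G$-equivariant LGA disentangles it to a \emph{special} factorized state --- this is exactly why the paper needs Appendix A. The paper handles this obstruction by stacking with a second, adaptively chosen ancilla: a qubit chain $\SA''$ whose site $k$ carries $W_k\oplus\CC$, where $W_k$ is the \emph{dual} of the character by which $|\Xi^{(k+1)}\rangle$ transforms in $\CH_{\Xi^{(k)}}$; only after tensoring with $\Upsilon^{(k)}$ do the vectors $|\Omega^{(k+1)}\rangle\in\CH_{\Omega^{(k)}}$ become genuinely $G$-invariant, so that the linear interpolation stays $G$-invariant and the generator can be averaged over $G$. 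Without this ingredient (and Appendix A to dispose of the resulting non-special $G$-invariant factorized state), the symmetrization step in your plan fails, so the argument as proposed does not go through.
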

In order to prove the theorem, we need the following result from representation theory. Its  proof can be found in various sources such as theorem 4.4 in \cite{brocker2013representations}.
\begin{lemma} \label{lma:rep}
Let $V$ be a finite dimensional faithful representation of a finite group $G$. For any irreducible representation $W$ of $G$, $d(L)=\dim \Hom_G(W, V^{\otimes L})> 0$ for large enough $L$. Moreover, $d(L)$ grows exponentially with $L$.
\end{lemma}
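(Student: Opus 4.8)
The plan is to reduce everything to the asymptotics of a single character sum. Writing $\chi=\chi_V$ and $n=\dim V$, the multiplicity of $W$ in $V^{\otimes L}$ is the inner product of characters
$$d(L)=\langle\chi^L,\chi_W\rangle=\frac{1}{|G|}\sum_{g\in G}\chi(g)^L\,\overline{\chi_W(g)},$$
so I want to show this finite exponential sum is positive and of order $n^L$ for large $L$. The term $g=e$ contributes $\frac{\dim W}{|G|}\,n^L$, and the whole argument hinges on comparing every other term against it, using that $|\chi(g)|\le n$ for all $g$.

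For positivity I would argue in two stages. First, since $V$ is faithful, the classical theorem of Burnside guarantees that $W$ occurs in \emph{some} tensor power, i.e. $d(L_0)>0$ for some $L_0$. To promote this to all large $L$ I would use that in the cases of interest $V$ contains the trivial representation (for the entangled-pair examples $V=\SW\otimes\SW^*$ has the invariant vector $1_{\SW}$). Fixing a nonzero invariant vector $v_0\in V$, the $G$-equivariant injection $x\mapsto x\otimes v_0$ embeds $V^{\otimes(L-1)}$ into $V^{\otimes L}$, whence $d(L-1)\le d(L)$; combined with Burnside this yields $d(L)>0$ for all $L\ge L_0$.

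For exponential growth I would isolate the $g=e$ term and bound the remainder. The key fact is that $|\chi(g)|=n$ forces $\rho_V(g)$ to be scalar, and since $V$ is faithful the elements with this property form a cyclic subgroup $Z\subseteq Z(G)$ on which $g\mapsto\rho_V(g)$ is a faithful character $\lambda$. When $V$ contains a trivial summand one checks $Z=\{e\}$, so $\theta:=\max_{g\ne e}|\chi(g)|<n$ and
$$d(L)=\frac{\dim W}{|G|}\,n^L+O(\theta^L),$$
which grows like $n^L$ as soon as $\dim V\ge 2$.

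The main obstacle is precisely the possibility that $Z\ne\{e\}$, i.e. that some $g\ne e$ acts on $V$ by a scalar; this makes the leading behaviour of $d(L)$ periodic in $L$ and can force $d(L)=0$ along certain residues (for instance the sign representation of $\ZZ_2$). I would resolve it with Schur's lemma: central elements act on $W$ by a character $\mu_W$ of $Z$, so the contribution of $Z$ is $|Z|\dim W\,[\lambda^L=\mu_W]$ by orthogonality of characters on $Z$. As $\lambda$ generates $\widehat Z$, there is a unique residue $L_0\bmod|Z|$ with $\lambda^{L_0}=\mu_W$, and $d(L)$ grows like $n^L$ exactly along $L\equiv L_0$. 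In the applications relevant here $V$ contains a trivial summand, so $Z$ is trivial and both assertions hold for all sufficiently large $L$.
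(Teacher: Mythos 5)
Your proposal is correct, and it takes a genuinely different route from the paper: the paper gives no argument for this lemma at all, deferring to Theorem 4.4 of Br\"ocker--tom Dieck, a Burnside-type theorem guaranteeing that every irreducible representation occurs in \emph{some} tensor power of a faithful one. Your character-sum argument --- writing $d(L)=\frac{1}{|G|}\sum_{g\in G}\chi_V(g)^L\,\overline{\chi_W(g)}$, isolating the $g=e$ term $\frac{\dim W}{|G|}(\dim V)^L$, and bounding the remainder by a constant times $\theta^L$ with $\theta=\max_{g\neq e}|\chi_V(g)|$ --- is self-contained, elementary, and quantitative: it delivers positivity for all large $L$ and the exponential rate $(\dim V)^L$ in one stroke, with the invariant-vector monotonicity $d(L-1)\leq d(L)$ as an independent route to positivity. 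What your analysis buys beyond the citation is important: it shows that the lemma \emph{as stated} is false under faithfulness alone, since for $G=\ZZ_2$ and $V$ the sign representation, $d(L)$ vanishes for all $L$ of one parity and never grows; more generally, whenever the scalar-acting subgroup $Z$ is nontrivial, your Schur-lemma computation shows $d(L)=0$ along all but one residue mod $|Z|$. So a proof-by-citation of the Burnside theorem cannot establish the statement as written; the lemma needs the additional hypothesis that $V$ contains a trivial summand (forcing $Z=\{e\}$, hence $\theta<\dim V$), which your argument isolates and which the paper's application does satisfy, since there $V$ is chosen to contain every irreducible representation of $G$, e.g.\ $V=\CC[G]$. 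The one degenerate case, $\dim V=1$ (forcing $G$ trivial), you flag yourself, and it is irrelevant to the application.
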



\begin{proof} [Proof of theorem]

Let $V$ be a finite dimensional representation of $G$ whose subrepresentations contain every irreducible representation of $G$ (including the trivial one). Such a $V$ always exists, with $\mathbb C[G]$ being one example. Clearly $V$ is faithful. For all $j\in\ZZ$, let $\CV'_j=V$, $\SA'_j={\rm End}(\CV'_j)$, and let $\SA'$ be the norm-completion of $\otimes_j \SA'_j$.
Let $\psi'$ be a special $G$-invariant factorized pure state on $\SA'$ (it exists because $V$ contains the trivial representation).

Further, for any $j\in\Lambda$ we pick a one-dimensional representation $W_j$ of $G$ (to be fixed later) and let $\SA''_j$ be the norm-completion of $\otimes_j {\rm End}(W_j\oplus T_j)$, where $T_j\simeq\CC$ is the trivial representation of $G$. We also choose normalized basis vectors $w_j\in W_j$ and $t_j\in T_j$. If we ignore the $G$-action, then $\SA''$ corresponds to an infinite chain of qubits. We denote by $\psi''$ a $G$-invariant factorized pure state $\psi''$ on $\SA''$ whose restriction to $\SA''_j$ is given by $\langle w_j|\cdot | w_j\rangle $. Note that $\psi''$ is a special $G$-invariant factorized pure state if and only if all $W_j$ are trivial representations.

We will show that with appropriate choice of representations $W_j$ the state $\Psi = \psi \otimes \psi'\otimes\psi''$ on $\SA \otimes\SA'\otimes \SA''$ tensored with a finite number of copies of factorized pure states can be disentangled by a $G$-equivariant LGA (that is, can be mapped by a $G$-equivariant LGA to a $G$-invariant factorized pure state). Since, as shown in Appendix A, any $G$-invariant factorized pure state is $G$-stably equivalent to a special $G$-invariant factorized pure state, this implies that $\psi$ is in a trivial $G$-invariant phase.

By Lemma \ref{lma:bounded} and \cite{Matsui}, the state $\psi$ has the split property. Thus the von Neumann algebra 
$\CM_{> k}=\Pi_\psi(\SA_{> k})''$ on a half-line $(k,\infty)$ is a Type I factor. In other words, $\CM_{>k}$ is isomorphic to the algebra of bounded operators on a Hilbert space $\CW_{>k}$.
By Remark \ref{rem:Ug} and triviality of the index, the operators $U^g_{>k}$ define a unitary representation of $G$ on $\CW_{>k}$. The split property also implies that $\psi$ restricted to $(k, \infty)$ is a normal state with a density matrix $\rho_k$ (a positive operator on $\CW_{>k}$ with unit trace). 
As the restriction of $\psi$ to $\SA_{>k}$ is $G$-invariant, each eigenspace of $\rho_k$ is a representation of $G$. Therefore, each summand in the direct sum decomposition of $\CW_{>k}$ into irreducible (necessarily finite dimensional) representations of $G$ is spanned by eigenvectors of $\rho_k$ with equal eigenvalues.

By Lemma $\ref{lma:rep}$ each of these irreducible summands is contained in $V^{\otimes L}$ for large enough $L$ with exponentially growing multiplicity (with respect to $L$). Therefore in the same way as in Lemma \ref{lma:schmidt} we can construct a $G$-invariant state on $\SA_{> k}\otimes\SA'_{>k}$ which is $f$-close to a $G$-invariant factorized pure state, has the same eigenvalues as $\rho_k$, and the eigenspace for each eigenvalue transforms in the same representation of $G$ as the corresponding eigenspace of $\rho_k$. Here $f(r)=\Or$ is an MDP function which depends only on the localization of the LGA which produces $\psi\otimes\psi'$ from a pure factorized state and in particular is independent of $k$. Therefore there is a $G$-invariant pure state $\Xi^{(k)}$ on $\SA \otimes \SA'$ that coincides with $\psi\otimes\psi'$ on $(-\infty,k]$, while on $(k,+\infty)$ it is $f$-close far from $k$ to a $G$-invariant factorized pure state. In other words, $\Xi^{(k)}$ is a truncation of the  invertible state $\psi \otimes \psi'$ to $(-\infty,k]$. 

Similarly to the proof of Lemma \ref{lma:12invertibleSRE} we can also define $G$-invariant states $\tilde{\Xi}^{(k)}$ on $\SA^{(1)}\otimes \SA'^{(1)}\otimes\SA^{(2)}\otimes\SA'^{(2)}$ which have the following properties: (1) $\tilde{\Xi}^{(k)}$ is pure factorized when restricted to $(k,\infty)$, and (2) its restriction to $(-\infty,k]$ is a purification of $(\psi\otimes\psi')\vert_{(-\infty,k]}$ by a state on $\SA^{(2)}_{(-\infty,k]}\otimes\SA'^{(2)}_{(-\infty,k]}$ which is $g$-close to a pure factorized state for some $g(r)=\Or$ which depends only on $f$. In the following we assume that all states are tensored with a finite number of copies of factorized states, so that all stable equivalences correspond to equivalences without stabilization.

Let us fix $L\in\NN$. By the proof of Lemma \ref{lma:12invertibleSRE} the state $\tilde{\Xi}^{(k-L)}$ can be stably produced from $\tilde{\Xi}^{(k)}$ by first conjugation with an almost local at $k+L$ unitary, producing a $G$-invariant state $\tilde{\Theta}^{(k-L)}$, followed by a conjugation with a unitary strictly local on $[k,k+L)$. Let $|\tilde{\Xi}^{(k)}\ral$ and $|\tilde{\Theta}^{(k-L)}\ral$ be vectors representing the states $\tilde{\Xi}^{(k)}$ and $\tilde{\Theta}^{(k-L)}$ in the GNS Hilbert space $\CH_{\tilde{\Xi}^{(k)}}$. The vector $\tilde{\Xi}^{(k)}$ is the vacuum vector and thus is $G$-invariant. The vector $|\tilde{\Theta}^{(k-L)}\ral$, in general, transforms in a one-dimensional representation of $G$. We choose $W_{k}$ to be the dual of this representation, while $W_l$ for $k-L < l < k$ to be the trivial representation. 

For each $k\in\ZZ$ let $\Ups^{(k)}$ be a pure  factorized state on $\SA''$ which coincides with $\psi''$ on $(-\infty,k]$ and whose restriction to $\SA''_j,$ $j\in (k,+\infty)$, is given by $\langle t_j|\cdot |t_j\rangle$. This state is $G$-invariant and restricts to  pure factorized $G$-invariant states both on $(-\infty,k]$ and on $(k,+\infty)$. Thus $\Upsilon^{(k-L)}$ can be obtained from $\Upsilon^{(k)}$ by a conjugation with a strictly local at $k$ unitary. Therefore the vector $|\Upsilon^{(k-L)}\rangle$ in the GNS Hilbert space of the state $\Upsilon^{(k)}$ transforms in the representation $W_k$. Consequently, the vector $|\tilde{\Theta}^{(k-L)}\otimes \Upsilon^{(k-L)}\rangle\in\CH_{\tilde{\Xi}^{(k)}\otimes \Upsilon^{(k)}}$ is $G$-invariant, just like the vacuum vector $|\tilde{\Xi}^{(k)}\otimes \Upsilon^{(k)}\rangle$. Therefore their arbitrary linear combinations are also $G$-invariant. 

Let $\Omega_s$, $s \in [0,1]$, be the path of of vector states corresponding to the normalization of the path of vectors  $s|\tilde{\Xi}^{(k)}\otimes \Upsilon^{(k)}\rangle + (1-s)|\tilde{\Theta}^{(k-L)}\otimes \Upsilon^{(k-L)}\rangle$. By Corollary \ref{cor:almcstatesSRE} there is an LGA $\alpha_{P}$ generated by an almost local observable $P(s)$, such that $\Omega_s = \alpha_P(s)(\tilde{\Xi}^{(k)}\otimes \Upsilon^{(k)})$.
Let $P^G(s)$ be the observable obtained from $P(s)$ by averaging  over the group action. Then $P^G(s)$ generates an automorphism $\alpha_{P^G}$ which is $G$-equivariant, and since the state $\Omega_s$ is $G$-invariant for all $s$, we still have
$\Omega_s=\alpha_{P^G}(s)(\tilde{\Xi}^{(k)}\otimes \Upsilon^{(k)})$. The automorphism $\alpha_{P^G}$ is a
conjugation by some $G$-invariant almost local unitary $e^{i \CG_k}$ which is $h'$-localized at $k-L$ for some $h'(r)=\Or$ which depends only on $f(r).$ The state $\tilde{\Theta}^{(k-L)}\otimes \Upsilon^{(k-L)}$ can be obtained from $\tilde{\Xi}^{(k)}\otimes \Upsilon^{(k)}$ by conjugation with this observable.

A similar averaging argument shows that  $\tilde{\Xi}^{(k-L)}\otimes \Upsilon^{(k-L)}$ can be obtained from $\tilde{\Theta}^{(k-L)}\otimes \Upsilon^{(k-L)}$ by a conjugation with a strictly local on $(k-L,k]$ $G$-invariant observable. Therefore, in the same way as in the proof of Lemma \ref{lma:12invertibleSRE} and Theorem \ref{thm:invertibleSRE}, by taking $L$ large enough we can construct a $G$-invariant LGA that disentangles $\Psi$ tensored with several factorized states. 
\end{proof}

\begin{corollarythm}
    The index defines a group isomorphism between $\Phi^*_G$ and $H^2(G, U(1)).$
\end{corollarythm}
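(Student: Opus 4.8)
The plan is to assemble the corollary from three ingredients that are already in place: the index is a well-defined group homomorphism, it is injective, and it is surjective. The substantive analytic work has been done in Theorem \ref{completeInvariance} and in the explicit construction of the entangled-pair states, so what remains is essentially a bookkeeping argument matching the two group structures.

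First I would record that the index descends to a map $\Phi^*_G \to H^2(G,U(1))$. Well-definedness on $G$-invariant phases was established above (the index of an invertible $G$-invariant pure state depends only on its $G$-invariant phase), so the only point is that the natural domain is $\Phi^*_G$. To see that this map is a homomorphism of abelian groups, I would invoke the stacking law discussed after Remark \ref{rem:Ug}: for the tensor product of two systems one may take the domain-wall unitary to be the tensor product of the respective domain-wall unitaries, and then the defining relation (\ref{2cocycle}) shows that the $2$-cocycle of the stack is the pointwise product of the two $2$-cocycles. Passing to cohomology classes, stacking corresponds to multiplication in $H^2(G,U(1))$, so the index is a monoid homomorphism $(\Phi_G,\otimes,\tr_G)\to H^2(G,U(1))$; restricting to invertible elements yields a group homomorphism on $\Phi^*_G$, sending the neutral element $\tr_G$ to the trivial class (as verified for factorized states).

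Next I would deduce injectivity directly from Theorem \ref{completeInvariance}. Since the index is a group homomorphism, it is injective exactly when its kernel is trivial; but the kernel consists of the $G$-invertible phases whose index is trivial, and the theorem identifies these with the single element $\tr_G$. Concretely, given two $G$-invertible states $\psi_1,\psi_2$ with equal index, the stack $\psi_1\otimes\psi_2^{-1}$ of $\psi_1$ with a representative $\psi_2^{-1}$ of the inverse phase has trivial index by the homomorphism property, hence lies in the trivial phase by the theorem, so $\psi_1$ and $\psi_2$ represent the same element of $\Phi^*_G$. Surjectivity is then supplied by the entangled-pair states constructed in the Examples subsection: for an arbitrary $2$-cocycle $\nu$ of $G$ valued in $U(1)$ one picks a projective representation $Q$ with that cocycle and forms the associated state $\psi$, which was shown there to be $G$-invertible (the inverse system replaces $\SW$ by $\SW^*$ and $Q$ by $Q^*$) and to have index exactly $[\nu]$. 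Hence every class in $H^2(G,U(1))$ is realized.

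I do not expect a serious obstacle: the only step demanding genuine care is the first one, namely checking that the monoid operation $\otimes$ on $\Phi_G$ is carried compatibly to the group operation on $H^2(G,U(1))$ and respects inverses, after which injectivity and surjectivity follow formally from Theorem \ref{completeInvariance} and the explicit examples, respectively. In particular, it is worth being explicit that the forgetful identification of $\Phi^*_G$ with the $G$-invariant phases mapping to $\Phi^*$ makes $\Phi^*_G$ a genuine group and hence a legitimate domain for the homomorphism.
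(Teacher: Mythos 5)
Your proposal is correct and follows essentially the same route as the paper's own proof: the homomorphism property via the stacking law from Section 3, surjectivity via the entangled-pair (MPS) construction, and triviality of the kernel via Theorem \ref{completeInvariance}. Your write-up merely makes explicit some bookkeeping (e.g.\ the stack $\psi_1\otimes\psi_2^{-1}$ argument for injectivity) that the paper leaves implicit.
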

\begin{proof}
In section 3 we have shown that the index defines a group homomorphism from $\Phi^*_G$ to $H^2(G, U(1)).$ The MPS construction in section 3 implies that the homomorphism is surjective. Moreover, the theorem shows that this homomorphism has trivial kernel. Therefore, it defines a group isomorphism. 
\end{proof}

\begin{corollarythm}
    Every $G$-invariant invertible system is $G$-invertible.
\end{corollarythm}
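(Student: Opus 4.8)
The plan is to neutralize the index of $\psi$ using an explicitly $G$-invertible state and then invoke Theorem \ref{completeInvariance}. By hypothesis $\psi$ is $G$-invariant and its image under the forgetful homomorphism $\Phi_G\to\Phi$ lies in $\Phi^*$; this is precisely the situation in which the construction of Section 3 applies, so $\psi$ has a well-defined index $[\nu]\in H^2(G,U(1))$.

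First I would produce a $G$-invertible state $\psi'$ with index $[\nu]^{-1}$. The entangled-pair states of Section 3.2 realize every class in $H^2(G,U(1))$: taking an irreducible projective representation $Q$ of $G$ whose cocycle represents $[\nu]^{-1}$ yields such a $\psi'$, and this state is manifestly $G$-invertible, with inverse the entangled-pair state built from $Q^*$. In particular $\psi'$ is invertible as an ordinary (non-symmetric) state. Next I would stack and apply the classification theorem. Since both $\psi$ and $\psi'$ are invertible as non-symmetric states, so is $\psi\otimes\psi'$, which is moreover $G$-invariant, so its index is defined. By the stacking law of Section 3 — the domain-wall unitary of the stacked system being the tensor product of the two domain-wall unitaries, so that the 2-cocycles multiply — the index of $\psi\otimes\psi'$ equals $[\nu]\,[\nu]^{-1}$ and is therefore trivial.

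Theorem \ref{completeInvariance} now applies to $\psi\otimes\psi'$, a $G$-invariant invertible state of trivial index, and places it in the trivial $G$-invariant phase $\tr_G$. This exhibits $[\psi']$ as an inverse of $[\psi]$ in the monoid $\Phi_G$, so $\psi$ is $G$-invertible. There is no deep obstacle here, since the substance is already carried by Theorem \ref{completeInvariance} and by the MPS realization of all index values; the only point requiring explicit verification is that the chosen $\psi'$ is invertible as a plain state, so that $\psi\otimes\psi'$ satisfies the hypothesis of the theorem, and this is immediate from the $G$-invertibility of the entangled-pair states.
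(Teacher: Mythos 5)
Your proposal is correct and follows essentially the same route as the paper: construct an entangled-pair state $\psi'$ realizing the inverse cohomology class, use the stacking law to conclude the stacked state has trivial index, and apply Theorem \ref{completeInvariance} to place $\psi\otimes\psi'$ in the trivial $G$-invariant phase. Your additional check that $\psi'$ (and hence $\psi\otimes\psi'$) is invertible as a plain state — needed for the theorem's hypothesis — is a point the paper leaves implicit, and you resolve it correctly via the $G$-invertibility of the entangled-pair states noted in Section 3.2.
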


\begin{proof}
Let $(\SA,\psi)$ be a $G$-invariant invertible system. Construct another $G$-invariant system $(\SA',\psi')$ whose index is the group inverse to that of $(\SA,\psi)$. This can be done using to the Entangled Pair State construction in section 3. The stacked system $(\SA\otimes \SA',\psi\otimes \psi')$ has trivial index according to the stacking rule shown in section 3. The theorem implies that this stacked system is in the trivial $G$-invariant phase. Hence, $(\SA,\psi)$ is $G$-invertible.
\end{proof}

\noindent
{\bf Acknowledgements:}
We would like to thank P. Etingof and B.  Simon for advice. We are also grateful to Y. Ogata for drawing our attention to an error in Lemma 4.1 in the original version of the paper. This research was supported in part by the U.S.\ Department of Energy, Office of Science, Office of High Energy Physics, under Award Number DE-SC0011632. A.K. was also supported by the Simons Investigator Award.
B.Y. acknowledges the Caltech mathematics department for a graduate fellowship awarded in fall 2020. 
\\

\noindent
{\bf Data availability statement:}
Data sharing is not applicable to this article as no new data were created or analyzed in this study.

\appendix
\numberwithin{equation}{section}

\section{Stable equivalence of $G$-invariant factorized pure states}

In this section we study $G$-invariant factorized pure states defined by (\ref{factorizedpure}) where the vectors $v_j$ may transform in non-trivial one-dimensional representations of $G$. We will show that all such states are in the trivial $G$-invariant phase.

Consider first a 1d system where for $j\neq 0$  $\SV_j=\CC$ is the trivial representation of $G$,  while $\SV_0=\SW$ is a finite-dimensional representation containing a unit vector $w$ transforming in a non-trivial one-dimensional representation of $G$.  Consider a $G$-invariant factorized pure state where $v_0=w$ (all other $v_j$ are unique up to a scalar multiple). Physically, this corresponds to a non-trivial $G$-invariant ground state of a 0d system regarded as a $G$-invariant state of a 1d system. We are going to show that this 1d state is in a trivial $G$-invariant stable phase. 

Without loss of generality we may assume that $\SW$ contains a $G$-invariant vector $w'$. Indeed, if this is not the case, we can tensor the above system with a similar system where $\SW$ is replaced with $\SW^*\oplus \CC\cdot e$, where $G$ acts on the second summand by the trivial representation, and $v_0=e$. The auxiliary system is in the trivial $G$-invariant phase, so this does not affect the $G$-invariant phase of the system we are interested in. Then the composite system has $\SV_0=\SW\otimes\SW^*\oplus \SW\otimes \CC\cdot e$, with a $G$-invariant factorized pure state corresponding to $v_0$ in the second summand. The first summand now contains a $G$-invariant vector $w\otimes w^*$. 

Consider now an auxiliary system $\SA'$ where  $\SV'_j=\CC$ is the trivial representation for $j\leq 0$ and $\SV'_j=\SW^*\otimes\SW$ for $j>0$. The algebra of local observables $\SA\otimes\SA'$ has the form
\beq
\End(\SW)\otimes \End(\SW^*\otimes\SW)\otimes \End(\SW^*\otimes\SW)\otimes\ldots
\eeq
We pick a $G$-invariant factorized pure state $\psi'$ on $\SA'$  defined by the condition that for any $A\in\SA'_j=\End(\SV'_j)$, $j>0$, it is a vector state corresponding to $w^*\otimes w$ which is $G$-invariant. Consider now the state $\psi\otimes\psi'$ on $\SA\otimes\SA'$. As $G$ acts trivially on the vector
state in $\SA_j'$ for all $j$, the composite state belongs to the same $G$-invariant phase as $(\psi,\SA)$. 

After re-writing the algebra of observables as 
\beq
\End(\SW\otimes\SW^*)\otimes \End(\SW\otimes\SW^*)\otimes \ldots
\eeq
it is easy to see that the state $\psi\otimes\psi'$ is related by a $G$-equivariant LGA to a special factorized pure state. Indeed, since $\SW$ contains a $G$-invariant vector $w'$,  $\SW\otimes\SW^*$ contains a $G$-invariant 2-plane spanned by vectors $w\otimes w^*$ and $w'\otimes {w'}^*$. Let $U$ be a unitary operator on $\SW\otimes\SW^*$ which acts by identity on the orthogonal complement of this plane and rotates $w\otimes w^*$ into $w'\otimes {w'}^*$. Consider a local unitary circuit on $\SA\otimes \SA'$ which acts on $\SA\otimes\SA'$ by conjugation with $U\otimes U\otimes U\otimes\ldots$. It maps $\psi\otimes\psi'$ to a factorized pure state on $\SA\otimes\SA'$ with $v_0=w'$ and $v_j={w'}^*\otimes w'$ for $j>0$. All these vectors are $G$-invariant.
 Thus $\psi$ is a special $G$-invariant factorized pure state.\footnote{This argument is a version of the Eilenberg swindle.}

In general, we can write the algebra $\SA$ as a tensor product of sub-algebras $\SA_{\geq 0}$ and $\SA_{< 0}$ corresponding to $j\geq 0$ and $j<0$. Since the state $\psi$ on $\SA$ is assumed factorized, it is sufficient to consider the restriction of $\psi$ to one of these sub-algebras, say $\SA_{\geq 0}$. Then we apply the argument of the above paragraph to each of the factors $\SA_j$ separately. Note that the auxiliary system $\SA'$ in this case has $\log d'_j$ growing with $j$ even if the dimension $d_j$ of $\SA_j$ is bounded. However, it is easy to see that if $\log d_j$ grows at most as a power of $j$, then so does $\log d'_j$. Thus it is still true that $\psi$ is in the trivial $G$-invariant phase.

\section{Local computability of the index}

An important property of the index is its local computability, i.e. that one can compute $\nu(g,h)$ up to $\OL$ accuracy while having access only to a disk of radius $L$. Let us fix a disk $\Gamma_{L}=[-L,L]$, and let $R^g_L$ be a unitary $\prod_{j \in \Gamma_L} R_j(g)$. By Theorem \ref{thm:invertibleSRE} for any invertible state $\psi$ there is a pure factorized state $\psi'_0$ on $\SA'$ (with a trivial $G$ action) such that $\psi\otimes\psi'_0$ can be produced from a pure factorized state $\Omega$ on $\SA\otimes\SA'$  by some LGA $\alpha_F$ for an $f$-local $F$ for some MDP function $f(r)=\Or$. In what follows we redefine $\psi$ to be the SRE state $\psi \otimes \psi'_0$.

First, note that by Lemma \ref{lma:aux2} and Corollary \ref{cor:almcstatesSRE} we have
\beq
U_{>j}^g | \psi \ral = \Pi(\CV_{>j}^g)  | \psi \ral .
\eeq
for some observable $\CV_{>j}^g$ which is $g$-localized at $j$ for some function $g$ that depends on $f$ only. We can find an observable $\CV^g_L$ local on $\Gamma_{L}$ such that $\| \CV^g_{>j} - \CV^g_L \| = \OL$.
The index can be computed as
\begin{multline}
\nu(g,h) = \lal \psi| (U^{gh}_{>j} )^{-1} U^g_{>j} U^h_{>j}| \psi \ral =\lal \psi|\Pi(\CV^{gh}_{>j} )^{-1} U^g_{>j}\Pi(\CV^h_{>j}) | \psi \ral \\
=\lal \psi| \Pi(\CV^{gh}_{>j} )^{-1} \Pi(\rho^g_{>j}(\CV^h_{>j})) \Pi(\CV^g_{>j})| \psi \ral = \psi \l (\CV^{gh}_{>j} )^{-1} \rho^g_{>j}(\CV^h_{>j}) \CV^g_{>j} \r
\end{multline}
Therefore
\beq
\nu(g,h) = \psi \l (\CV^{gh}_{L} )^{-1} R^g_L \CV^h_{L} (R^g_L)^{-1} \CV^g_{L} \r + \OL.
\eeq
Note that the $\OL$ term depends on $f(r)$ only, and by taking $L$ large enough we can compute the index with any given accuracy. Therefore if we have an interpolating $G$-invariant invertible state $\psi$ which is $f$-close to a $G$-invariant invertible state $\psi_1$ on the left half-chain and $f$-close to a $G$-invariant invertible state $\psi_2$ on the right half-chain, then the indices of $\psi_1$ and $\psi_2$ must be the same. In particular, a non-trivial index for $\psi_1$ is an obstruction for the existence of such an interpolation between $\psi_1$ and a pure factorized state $\psi_2$.

\section{Multiplicative Lieb-Robinson bound}

\begin{lemma}
Let $\CA_i, i=0,1,2,\ldots$ be quasi-local observables such that $\sum_i \|\CA_i - 1\|$ converges. Then the sequence of observables $C_n=\prod_{i=1}^n \CA_i$ is norm-convergent. 
\end{lemma}
\begin{proof}
Let $\CalC_n = \prod_{i = 0}^n \CA_i$ and $\CB_i = \CA_{i+1} - 1$. Then \beq \label{difference}
\CalC_n = \sum_{i = 0}^ {n-1} \CalC_i \CB_i + \CalC_0.
\eeq 
This implies 
$\|\CalC_n\|\leq \sum_{i = 0}^{n-1}\|\CB_i\| \|\CalC_i\| + \|\CalC_0\|$, which by the discrete Gronwall inequality \cite{Gronwall} implies
$$\|\CalC_n\| \leq \|\CalC_0\| \exp\left(\sum_{i = 0}^{n-1}\|\CB_i\|\right)\leq \|\CalC_0\| \exp\left(\sum_{i = 0}^{\infty}\|\CB_i\|\right) < \infty.$$
Thus the right hand side of equation (\ref{difference}) converges in norm as $n\ra\infty$.
\end{proof}
\begin{corollary}\label{cor:infiniteproduct}
Let $\CV_k$, $k=0,1,2,\ldots$ be local unitaries localized on $[(-k-1/2)L,(k+1/2)L]$  and  satisfying $\|\CV_k-1\| \leq h((k+1/2)L)$ for some MDP function $h(r)=\Or$. Then product $\CV_0 \CV_1 \CV_2...$ exists. Furthermore, it is $f$-local at $0$ for some MDP function $f(r)= \Or$ determined by $h$.
\end{corollary}
\begin{proof}
As $h(r)=\Or$ and $\|\CV_k-1\| \leq h((k+1/2)L)$, $\sum_k\|\CV_k-1\| \leq \sum_k h((k+1/2)L)$ converges. Therefore, the product $\CalC_\infty=\CV_0 \CV_1 \CV_2\ldots$ exists by the lemma above.
Let $\CalC_n = \CV_0 \CV_1 \CV_2...\CV_n$ and $\CB_i = \CV_{i+1} - I$. Then
\beq
\CalC_n = \sum_{i = 0}^{n-1} \CalC_i \CB_i + \CalC_0.
\eeq
For any $\CA \in \SA_j$, 
\begin{multline}
    \|[\CalC_\infty, \CA]\| = \|[\sum_{i = 0}^{\infty} \CalC_i \CB_i + \CalC_0, \CA]\| \leq  \sum_{(i+\frac12)L > j}\|[ \CalC_i \CB_i, \CA]\|\\ \leq 2\| \CA\|\sum_{(i+\frac12)L > j}\| \CB_i\| \leq 2\| \CA\|\sum_{(i+\frac12)L > j}h((i+1/2)L).
\end{multline}
Thus we may let $f(r)= \sum_{s > r} h(s) = \Or$.
\end{proof}

In what follows for any sequence of automorphisms $\alpha_n$, $n\in\ZZ,$ we let $\overrightarrow{\prod_n}\alpha_n$ be the formal expression $\ldots \circ\alpha_{-1}\circ\alpha_0\circ\alpha_1\circ\ldots$. Similarly, we denote by $\overleftarrow{\prod_n}\alpha_n$ the formal expression $\ldots \circ\alpha_{1}\circ\alpha_0\circ\alpha_{-1}\circ\ldots$. These expressions are well-defined automorphisms if all but a finite number of $\alpha_n$ are identities. The following lemma describes a class of situations when the formal expressions make sense even if an infinite number of $\alpha_n$ are nontrivial.

\begin{lemma} \label{lma:1dLGA}
Let $\Lambda=\ZZ\subset\RR$. For any MDP function $f(r)=\Or$ there is $L\in\NN$ such that any ordered composition $\overrightarrow{\prod_{n=-\infty}^{\infty}} \alpha_{\CB^{(n)}}$ of LGAs generated by  $f$-local at $j=n L$ observables $\CB^{(n)}$ for $n \in \ZZ$ is an LGA.
\end{lemma}

\begin{proof}
First, note that it is enough to show this for $\overrightarrow{\prod_{n=0}^{\infty}} \alpha_{\CB^{(n)}}$. Second, to prove the latter it is enough to show that with an  appropriate choice of $L$ for any $f$-local at 0 observable $\CA$ the observable $(\overleftarrow{\prod_{n=1}^{N}} \alpha_{\CB^{(n)}})(\CA)$ is almost local at 0 with localization depending on $f$ only (in particular, independent of $N$), and that as $N \to \infty$ it converges in norm to some element of $\SA$.

Let $\CU^{(n)} := e^{i \CB^{(n)}}$ be a unitary that corresponds to $\alpha_{\CB^{(n)}}$. It can be represented as a product $(\CV^{(n)}_0 \CV^{(n)}_1 \CV^{(n)}_2...)$ of strictly local unitaries $\CV^{(n)}_k$ on $B_n((k+\frac12)L):=[(n-k-\frac12)L,(n+k+\frac12)L]$, so that $\|\CV^{(n)}_k-1\| \leq h((k+\frac12)L)$ for some MDP function $h(r)=\Or$ that depends on $f(r)$ only. This is achieved by letting $(\CV^{(n)}_0 \CV^{(n)}_1 ... \CV^{(n)}_k) = e^{i\CB|_{B_n((k+\frac12)L)}}$.

Since conjugation of a strictly local observable $\CA$ with a unitary $\CU$ strictly local in the localization set  of $\CA$ does not change the property $\|\CA-1\|<\eps$ and preserves the localization set, we can rearrange unitaries $\CV^{(n)}_k$ in the product 
\beq \label{eq:firstpr}
(\CV^{(1)}_0 \CV^{(1)}_1 \CV^{(1)}_2...) (\CV^{(2)}_0 \CV^{(2)}_1 \CV^{(2)}_2...) ... (\CV^{(N)}_0 \CV^{(N)}_1 \CV^{(N)}_2...)
\eeq
in the following order:
\beq \label{eq:secondpr}
(\tilde{\CV}^{(1)}_0) (\tilde{\CV}^{(2)}_0 \tilde{\CV}^{(1)}_1) (\tilde{\CV}^{(3)}_0 \tilde{\CV}^{(2)}_1 \tilde{\CV}^{(1)}_2) ... (\tilde{\CV}^{(n)}_0 \tilde{\CV}^{(n-1)}_1 ... \tilde{\CV}^{(1)}_{n-1}) ...,
\eeq 
where $\tilde{\CV}^{(n)}_k$ is obtained from   $\CV^{(n)}_k$ by conjugation with $\CV^{(m)}_l$ with $m,l$ satisfying $n+1\leq m \leq n+k$ and $0\leq l\leq n+k-m$. Importantly,  $\tilde{\CV}^{(n)}_k$ is strictly local on the same interval as $\CV^{(n)}_k$ and still satisfies $\|\tilde{\CV}^{(n)}_k - 1\| \leq h((k+\frac12)L)$.  The infinite product  eq. (\ref{eq:secondpr}) is a well-defined almost local observable by Corollary \ref{cor:infiniteproduct}. Indeed, $\|\tilde{\CV}^{(n)}_0 \tilde{\CV}^{(n-1)}_1 ... \tilde{\CV}^{(1)}_{n-1}-1\|\leq \sum_{i = 1}^{n}\|\tilde{\CV}^{(i)}_{n-i}-1\|$ by repeatedly applying the inequality $\|\CA\CB-1-\CA+\CA\|\leq \|\CA\|\|\CB-1\|+\|\CA-1\|.$ Since for any fixed $N$ we have $\CV^{(n)}_k = 1$ for $n>N$,  $\|\tilde{\CV}^{(n)}_0 \tilde{\CV}^{(n-1)}_1 ... \tilde{\CV}^{(1)}_{n-1}-1\|\leq \sum_{i = 1}^{N}\|\tilde{\CV}^{(i)}_{n-i}-1\|\leq \sum_{i = 1}^{N}h((n-i+\frac12)L)$ satisfies the assumption of Corollary \ref{cor:infiniteproduct}. After this rearrangement the infinite product eq. (\ref{eq:secondpr}) still converges to the same unitary observable as eq. (\ref{eq:firstpr}).

Let $\tilde{\CU}^{(n)} = \tilde{\CV}^{(n)}_0...\tilde{\CV}^{(1)}_{n-1}$. We can represent $\CA = \sum_{p=0}^{\infty} \CA_p$ with $\sum_{p=0}^{n} \CA_p = \CA|_{B_{0}(n+1/2)}$. Let $\CA^{(0)}_p:=\CA_p$, $\CA^{(n)}_p:=\sum_{k=0}^{n-1} \tilde{\CU}^{(n)*} [\CA^{(k)}_p, \tilde{\CU}^{(n)}]$. Note that $\CA^{(n)}_p \in \SA|_{B_0(p+\frac12)}$ for $n \leq p$, and $\CA^{(n)}_p \in \SA|_{B_0(n+\frac12)}$ for $n>p$. Therefore we have
\begin{multline} \label{eq:Aestimate}
\|\CA^{(n)}_0\| =  \sum_{k=0}^{n-1} \|[\CA^{(k)}_0,\tilde{\CU}^{(n)}]\| \leq \sum_{k=0}^{n-1} \sum_{l>\frac{n-k-1}{2}}^{n-1} \|[\CA^{(k)}_0,\tilde{\CV}^{(n-l)}_l]\| \leq \\ \leq \sum_{k=0}^{n-1} \sum_{l > \frac{n-k-1}{2}}^{\infty}  2 \|\CA^{(k)}_0\| h((l+\frac12)L) \leq 2 \sum_{k=0}^{n-1} \|\CA^{(k)}_0\| g_{n-k}
\end{multline}
where $g_n := g(n L/2)$ for $g(n) := \sum_{l \geq n}^{\infty} h(l)$.

Any MDP function $g(r)=\Or$ can be upper-bounded by a reproducing MDP function $\tilde{g}=\Or$  for lattice $\Lambda \subset \RR$ \cite{hastings2010quasi}, i.e. an $\Or$ MDP function satisfying
\beq
\sup_{j,k \in \Lambda} \sum_{l \in \Lambda} \frac{\tilde{g}(|j-l|)\tilde{g}(|l-k|)}{\tilde{g}(|j-k|)} < \infty.
\eeq
We can further upper-bound $\tilde{g}(r)$ by a reproducing $\Or$ MDP function $g'(r)=A \tilde{g}(r)^{\alpha}/r^{\nu} = \Or$ for some constants $A$, $0<\alpha<1$ and $\nu > d$. Since $1/r^{\nu}$ is also reproducing, we have

\beq
A^2 \sum_{k=1}^{n-1} \frac{\tilde{g}(k L/2)^{\alpha} \tilde{g}((n-k) L/2)^{\alpha}}{(kL/2)^{\nu} ((n-k)L/2)^{\nu}} < \frac{C}{L^{\nu}} A \frac{\tilde{g}(n L/2)^{\alpha}}{(nL/2)^{\nu}},
\eeq
and therefore for $g'_n := g'(nL/2) \geq g_n$ we have $\sum_{k=1}^{n-1} g'_k g'_{n-k} < (C/L^{\nu}) g'_n$
for some constant $C$. For $L$ sufficiently large, we have $(C/L^{\nu})<1/2$. Therefore, for such $L$ after setting $a_n = 2 n g'_n$ we get
\begin{multline} 
2(g_n \cdot 1 + g_{n-1} a_{1}+ g_{n-2} a_{2} + ... + g_{1} a_{n-1}) \leq \\ \leq 2(g'_n \cdot 1 + 2 g'_{n-1} g'_{1} + 4 g'_{n-2} g'_{2} + ... + 2(n-1) g'_{1} g'_{n-1}) \leq 2 n g'_n = a_n.
\end{multline}
Together with eq. (\ref{eq:Aestimate}) this implies that $\|\CA^{(n)}_0\|/\|\CA_0\|$ can be upper-bounded by $a_n = \On$, and the sequence $\sum_{k=0}^{n}\CA^{(k)}_0$ converges in norm to some almost local observable. By construction, it is $h$-localized at $0$ for some MDP function $h(r)=\Or$ which depends on $f$ only.

In the same way one can estimate the norms of $\CA^{(n+p)}_p$ for $n,p>0$ and bound $\|\CA^{(n+p)}_p\|/\|\CA_p\|$ by a sequence $a_n = \On$. Together with $\|\CA_p\| = \| \sum_{q=0}^{p}\CA^{(q)}_p\|$ that ensures convergence of $\sum_{p=0}^{\infty} \sum_{n=0}^{\infty} \CA^{(n)}_p$ to some almost local at 0 observable whose localization depends on $f$ only.

\end{proof}

\printbibliography

\end{document}